\numberwithin{equation}{section}
\DeclareMathOperator{\Sphere}{\mathbb{S}}
\def\nn{\nonumber}
\newcommand{\C}{\complex}
\newcommand{\Z}{\zed}
\newcommand{\F}{\mathbb{F}}
\newcommand{\mbf}[1]{{\boldsymbol {#1} }}
\def\ii{{{\rm i}}}
\def\dd{{\rm d}}
\def\Id{{\rm id}}
\def\A{{\sf A}}
\def\V{{\sf V}}
\def\T{{\sf T}}
\def\C{{\sf C}}
\newcommand{\unit}{\mathbbm{1}} 			
\def\mcF{{\mathcal F}}
\newcommand{\CCK}{\mathscr{K}}
\newcommand{\CK}{\mathcal{K}}
\newcommand{\CI}{\mathcal{I}}
\newcommand{\Dcal}{\mathcal{D}}
\newcommand{\Ccal}{\mathcal{C}}
\newcommand{\sfG}{\mathsf{G}}
\newcommand{\sfH}{\mathsf{H}}
\newcommand{\sfN}{\mathsf{N}}
\newcommand{\sfM}{\mathsf{M}}
\newcommand{\sfK}{\mathsf{K}}
\newcommand{\sfU}{\mathsf{U}}
\newcommand{\sfGL}{\mathsf{GL}}
\newcommand{\sfSL}{\mathsf{SL}}
\newcommand{\sfSU}{\mathsf{SU}}
\newcommand{\ttM}{\mathtt{M}}
\newcommand{\rmH}{{\rm H}}
\newcommand{\Pim}{{\mit\Pi}}
\newcommand{\End}{{\rm End}}
\newcommand{\Aut}{\operatorname{Aut}}
\newcommand{\Leb}{{\rm L}}
\newcommand{\Heis}{{\sf Heis}}
\newcommand{\ISO}{{\sf ISO}}
\newcommand{\SU}{{\sf SU}}
\newcommand{\rmB}{{\rm B}}
\newcommand{\lt}{{\rm lt}}
\newcommand{\rt}{{\rm rt}}
\newcommand{\mcoprod}{\mbox{$\coprod\limits_{x\in\R/\Z}$}}
\newcommand{\R}{\real}
\newcommand{\complex}{{\mathbb C}} 
\newcommand{\zed}{{\mathbb Z}} 
\newcommand{\real}{{\mathbb R}} 
\newcommand{\torus}{{\mathbb T}} 
\def\alg{{\mathcal A}}
\def\balg{{\mathcal B}}
\def\hil{{\mathcal H}}
\def\Mcal{{\mathcal M}}
\newif\ifold \oldtrue
\def\nn{\nonumber}
\newcommand{\Tr}[1]{{\rm Tr}\,#1}
\def\e{{\,\rm e}\,}
\newcommand{\proj}{{\rm pr}}
\newdimen\normalarrayskip 
\newdimen\minarrayskip 
\newif\ifold \oldtrue 
\def\arraymode{\ifold\relax\else\displaystyle\fi} 
\def\@arrayskip{\ifold\baselineskip\z@\lineskip\z@
 \else
 \baselineskip\minarrayskip\lineskip2\minarrayskip\fi}
\def\@arrayclassz{\ifcase \@lastchclass \@acolampacol \or
\@ampacol \or \or \or \@addamp \or
 \@acolampacol \or \@firstampfalse \@acol \fi
\edef\@preamble{\@preamble
 \ifcase \@chnum
 \hfil$\relax\arraymode\@sharp$\hfil
 \or $\relax\arraymode\@sharp$\hfil
 \or \hfil$\relax\arraymode\@sharp$\fi}}
\def\@array[#1]#2{\setbox\@arstrutbox=\hbox{\vrule
 height\arraystretch \ht\strutbox
 depth\arraystretch \dp\strutbox
 width\z@}\@mkpream{#2}\edef\@preamble{\halign \noexpand\@halignto
\bgroup \tabskip\z@ \@arstrut \@preamble \tabskip\z@ \cr}%
\let\@startpbox\@@startpbox \let\@endpbox\@@endpbox
 \if #1t\vtop \else \if#1b\vbox \else \vcenter \fi\fi
 \bgroup \let\par\relax
 \let\@sharp##\let\protect\relax
 \@arrayskip\@preamble}
\newtheorem{Theorem}{Theorem}[section]
\newtheorem{Lemma}[Theorem]{Lemma}
\newtheorem{Proposition}[Theorem]{Proposition}
 { \theoremstyle{definition}
\newtheorem{Definition}[Theorem]{Definition}

\newtheorem{Example}[Theorem]{Example}
\newtheorem{Remark}[Theorem]{Remark} }
\begin{document}

\newcommand{\arXivNumber}{2006.10048}

\renewcommand{\thefootnote}{}

\renewcommand{\PaperNumber}{012}

\FirstPageHeading

\ShortArticleName{Topological T-Duality for Twisted Tori}

\ArticleName{Topological T-Duality for Twisted Tori\footnote{This paper is a~contribution to the Special Issue on Noncommutative Manifolds and their Symmetries in honour of~Giovanni Landi. The full collection is available at \href{https://www.emis.de/journals/SIGMA/Landi.html}{https://www.emis.de/journals/SIGMA/Landi.html}}}

\Author{Paolo ASCHIERI~$^{\dag^1\dag^2\dag^3}$ and Richard J.~SZABO~$^{\dag^1\dag^2\dag^4\dag^5\dag^6}$}

\AuthorNameForHeading{P.~Aschieri and R.J.~Szabo}

\Address{$^{\dag^1}$~Dipartimento di Scienze e Innovazione Tecnologica, Universit\`a del Piemonte Orientale, \\
\hphantom{$^{\dag^1}$}~Viale T.~Michel 11, 15121 Alessandria, Italy}
\EmailDD{\href{mailto:paolo.aschieri@uniupo.it}{paolo.aschieri@uniupo.it}}
\Address{$^{\dag^2}$~Arnold--Regge Centre, Via P.~Giuria 1, 10125 Torino, Italy}
\Address{$^{\dag^3}$~Istituto Nazionale di Fisica Nucleare, Torino, Via P.~Giuria 1, 10125 Torino, Italy}

\Address{$^{\dag^4}$~Department of Mathematics, Heriot-Watt University,\\
\hphantom{$^{\dag^4}$}~Colin Maclaurin Building, Riccarton, Edinburgh EH14 4AS, UK}
\EmailDD{\href{mailto:R.J.Szabo@hw.ac.uk}{R.J.Szabo@hw.ac.uk}}

\Address{$^{\dag^5}$~Maxwell Institute for Mathematical Sciences, Edinburgh, UK}
\Address{$^{\dag^6}$~Higgs Centre for Theoretical Physics, Edinburgh, UK}

\ArticleDates{Received June 30, 2020, in final form January 22, 2021; Published online February 05, 2021}

\Abstract{We apply the $C^*$-algebraic formalism of topological T-duality due to Mathai and Rosenberg to a broad class of topological spaces that include the torus bundles appearing in string theory compactifications with duality twists, such as nilmanifolds, as well as many other examples. We develop a simple procedure in this setting for constructing the T-duals starting from a commutative $C^*$-algebra with an action of ${\mathbb R}^n$. We treat the general class of almost abelian solvmanifolds in arbitrary dimension in detail, where we provide necessary and sufficient criteria for the existence of classical T-duals in terms of purely group theoretic data, and compute them explicitly as continuous-trace algebras with non-trivial Dixmier--Douady classes. We prove that any such solvmanifold has a topological T-dual given by a $C^*$-algebra bundle of noncommutative tori, which we also compute explicitly. The monodromy of the original torus bundle becomes a Morita equivalence among the fiber algebras, so that these $C^*$-algebras rigorously describe the T-folds from non-geometric string theory.}

\Keywords{noncommutative $C^*$-algebraic T-duality; nongeometric backgrounds; Mostow fibration of almost abelian solvmanifolds; $C^*$-algebra bundles of noncommutative tori}

\Classification{46L55; 81T30; 16D90}

{\rightline{\it Dedicated to Giovanni Landi~~~~~~~~\,}}
{\rightline{\it on the occasion of his 60th birthday}}

{\small
\tableofcontents
}
\renewcommand{\thefootnote}{\arabic{footnote}}
\setcounter{footnote}{0}

\section{Introduction}\label{sec:intro}

\subsection{Background}

T-duality is a symmetry of string theory which relates distinct
spaces that describe the same physics. It has presented a challenge to
mathematics in finding a rigorous framework in which these
`equivalences' of spaces is manifest. It was realized early on that
noncommutative geometry provides such a
framework~\cite{Frohlich1993,Lizzi1997}, at least in the
simplest cases of tori endowed with a trivial
gerbe, where subsequently it was shown that T-duality is realised as
Morita equivalence of noncommutative
tori~\cite{Brace:1998ku,Landi1998,Pioline1999,Schwarz1998,Seiberg1999}.

T-duality of spaces which are compactified on tori, or more generally
torus bundles, can be explained topologically
in terms of correspondence spaces which implement a smooth analog of the
Fourier--Mukai transform~\cite{Hori1999}. In the correspondence space
picture, T-duality transformations are realised as homeomorphisms in the
mapping class group of the fibres of doubled torus bundles. This gives rise to an
isomorphism of K-theory groups, which are the groups of D-brane
charges on the pertinent space; as this only concerns how topological
data of the space change under T-duality, it is commonly refered to as
`topological T-duality', to distinguish it from the more physical
notion of T-duality which also dictates how geometric data on
the space should transform. It was shown by~\cite{Mathai2004} that
this can be reformulated in terms of the $C^*$-algebra of functions on
the space by considering its crossed product by an action of the
abelian Lie group~$\R^n$, leading to a general T-duality formalism
that can be regarded as a noncommutative version of the topological
aspects of the Fourier--Mukai transform; this version of T-duality is
often called the `$C^*$-algebraic formulation' of topological
T-duality.

The story becomes more interesting for spaces that are endowed with a
non-trivial gerbe, which in string theory typically comprise torus
bundles with `$H$-flux'. The gerbe can be encoded in the data of a
continuous-trace $C^*$-algebra with a non-trivial Dixmier--Douady
class, which is a noncommutative algebra to which the formalism of
topological T-duality was applied originally
in~\cite{Bouwknegt2003,Mathai2004,Mathai2005}. In addition to relating
spaces with different topologies, T-duality in string theory for such
instances predicts the existence of `non-geometric' spaces, called T-folds~\cite{Hull2004},
which cannot be viewed as conventional Hausdorff topological
spaces. In these instances the correspondence space picture
`geometrizes' the action of T-duality. It was shown
by~\cite{Bouwknegt2008, Mathai2004,Mathai2005} that the T-folds
of~\cite{Hull2004} have a rigorous incarnation in noncommutative
geometry as $C^*$-algebra bundles of noncommutative tori; necessary
and sufficient conditions for the existence of `classical' T-dual
Hausdorff spaces were developed in terms of topological data, and explicit constructions of `non-classical'
T-duals as noncommutative torus bundles were given. These points of view were harmonised
in~\cite{Brodzki2006}, and in~\cite{Brodzki2007} a $C^*$-algebraic
version of the correspondence space construction was given. The explicit connections of these
noncommutative torus bundles to the T-folds of~\cite{Hull2004} in the
setting of noncommutative gauge theories on D-branes in T-folds
was elucidated
in~\cite{Ellwood2006,Grange2006,Hull2019, Lowe2003}. Topological
T-duality and T-folds have also been studied rigorously from other
approaches based on homotopy
theory~\cite{Bunke2006, Bunke2005} and on higher
geometry~\cite{Nikolaus2018}.

\looseness=-1 In string theory, the simplest examples of torus bundles are sometimes
called `twisted to\-ri'~\cite{Dabholkar2002}; although this name is a
misnomer, we continue to use it as it is convenient for our
purposes. These are fibrations of $n$-dimensional tori $\torus^n$ over
a circle $\torus$ which do not carry the extra data of a gerbe; they have
monodromy in the mapping class group $\sfSL(n,\Z)$ of the torus
fibers. The simplest examples of these, the Heisenberg nilmanifolds,
are T-dual to torus bundles with $H$-flux and also to T-folds, and they
arise in the $C^*$-algebraic constructions
of~\cite{Mathai2004,Mathai2005}. However, there are other examples
which do not have any classical dual with $H$-flux, and these are
missed by the usual $C^*$-algebraic framework which starts from
continuous-trace algebras. The simplest examples of these with $n=2$ were studied
in~\cite{Hull2019} in the language of noncommutative gauge theories,
where it was shown that the monodromy of the original torus bundles
becomes a non-trivial Morita equivalence of the fiber noncommutative
tori of the dual $C^*$-algebra bundle. As far as we are aware, these
are new examples of noncommutative torus bundles which have not been
rigorously studied in the mathematics literature, and the primary
purpose of this paper is to fill this gap: starting from the
$C^*$-algebra of functions on a twisted torus in any dimension, we
give a rigorous construction of the topological T-duals in the
$C^*$-algebraic framework and precisely describe the non-classical
$C^*$-algebra bundles with their Morita equivalence monodromies. This
includes some of the examples from~\cite{Mathai2004,Mathai2005} based
on topological T-duality applied to continuous-trace algebras, and the
examples of~\cite{Hull2019} based on T-duality in noncommutative gauge
theory, while at the same time it produces many new examples. In
particular, we give a unified description of the noncommutative torus bundles
which are T-dual to twisted tori in any dimension.

The noncommutative gauge theory on a D-brane comes with other moduli,
in addition to the noncommutativity parameters, which also generally
transform in a non-trivial way under the monodromies so as to leave the physics
unchanged~\cite{Hull2019}. In the absence of other moduli, as in topological
T-duality, the non-trivial Morita equivalences of the fibres of the
$C^*$-algebra bundles require an interpretation akin to the
topological monodromies, which act as homeomorphisms in the mapping
class group $\sfSL(n,\Z)$ of the fibres of the original twisted
torus. This is naturally achieved by considering our $C^*$-algebras as objects in a
category where both the usual $*$-isomorphisms as well as Morita equivalences are
realised as isomorphisms. This category is well-known to experts and
all of our considerations of topological T-duality in this paper will
take place therein. This perspective will also be advantageous for
eventual rigorous considerations of noncommutative gauge theories on
these $C^*$-algebra bundles in terms of projective modules, as well as
for the treatments of D-branes in terms of their K-theory, though we
do not pursue these further aspects in the present paper.

\subsection{Summary and outline}

In this paper our starting point is a very general definition of a
twisted torus $\torus_{\Lambda_\sfG}$ as the quotient of a locally
compact group $\sfG$ by a lattice $\Lambda_\sfG$ in $\sfG$; this definition
encompasses the $\torus^n$-bundles over $\torus$ discussed above,
along with many other known examples from string theory. We regard~$\torus_{\Lambda_\sfG}$ as a `torus bundle without $H$-flux', which is
captured simply by the $C^*$-algebra of functions~$C(\torus_{\Lambda_\sfG})$. This is ultimately the novelty of our
approach, which leads to a simpler perspective on topological
T-duality as compared to the approach of~\cite{Mathai2004,Mathai2005}
based on the more complicated continuous-trace algebras. Our approach
uses similar techniques as those of~\cite{Mathai2004,Mathai2005} for
evaluating Morita equivalences of cross products by actions of $\R^n$,
though with a much simpler $C^*$-algebraic structure. In particular,
in this paper we do not develop any new $C^*$-algebraic machinery as such, but instead
we gather a fortuitously existing collection of results that enable us
to explicitly identify both classical and non-classical T-duals of
twisted tori with relatively straightforward algebraic techniques. On
the other hand, the tradeoff for the simplicity of our framework is
the absence of some key constructions
from~\cite{Bouwknegt2008, Mathai2004,Mathai2005}.

We have endeavoured throughout to provide a fairly self-contained, and at times
pedagogical, presentation. For this reason we have collected all the
key concepts and tools involving cross products of $C^*$-algebras and
Morita equivalence in Section~\ref{sec:NAdef}. Experts versed in
$C^*$-algebra theory may safely skip this section.

In Section~\ref{sec:TopTtorus} we give our definition of twisted tori
$\torus_{\Lambda_\sfG}$ and discuss the $C^*$-algebraic formulation of topological
T-duality that we employ in this paper. We describe how T-dual
$C^*$-algebras are naturally isomorphic when regarded as objects of
the additive category $\CCK\hspace{-1mm}\CCK$ that underlies
Kasparov's bivariant K-theory, and we adapt the construction of
noncommuative correspondences from~\cite{Brodzki2007} as diagrams in
this category. We spell out some simple techniques that we use to
compute classical T-duals with $H$-flux, i.e., the T-dual algebra is a
certain continuous-trace $C^*$-algebra with non-trivial
Dixmier--Douady class, and more general techniques based on Green's
symmetric imprimitivity
theorem which enable the computation of noncommutative T-duals. We
illustrate our scheme on two well-known examples which have classical
T-duals: we reproduce the standard rules for T-duality of tori as well
as the topology changing mechanism for T-duality of orbifolds of
compact Lie groups $\sfG$.

In Section~\ref{sec:TdualityMostow} we come to the main class of
examples and results of this paper. We review the definition and
topology of the special class of twisted tori provided by almost
abelian solvmanifolds, which are $\torus^n$-bundles over a circle
$\torus$. Accordingly, we regard the algebra of functions
$C(\torus_{\Lambda_\sfG})$ as an object in the category
$\mathcal{R}\CCK\hspace{-1mm}\CCK_\torus$ of $C^*$-algebra bundles
over $\torus$, where in particular fibrewise Morita equivalences are
isomorphisms. T-duality in this category requires fibrewise actions of
$\R^n$, and in particular $\R^n$-actions which act non-trivially on
the base $\torus$ would take the algebra out of the category
$\mathcal{R}\CCK\hspace{-1mm}\CCK_\torus$. This means that the `essentially
doubled spaces' of~\cite{Hull2019}, which arise from T-duality along
the base circle and require a completely doubled formalism, are not
considered in this paper; they require working in a different
category, which we do not discuss here. We give necessary and
sufficient criteria for the existence of classical T-duals with
$H$-flux in this case which are based on simple algebraic data of the
underlying group $\sfG$, and we explicitly compute the corresponding
continuous-trace $C^*$-algebras dual to any almost abelian
solvmanifold $\torus_{\Lambda_\sfG}$ satisfying these conditions. We
 further show that any such solvmanifold has a non-classical T-dual
 that is a $C^*$-algebra bundle of noncommutative $n$-tori over
 $\torus$, which we also compute explicitly; this rigorously
 confirms, in particular, arguments from string theory suggesting that non-geometric
 solutions result from T-duality on some six-dimensional almost
 abelian solvmanifolds~\cite{Andriot:2015sia}.

Finally, Section~\ref{sec:3dexamples} is devoted to explicit examples
of the general formalism of Section~\ref{sec:TdualityMostow}. We apply
our results to all three classes of three-dimensional
solvmanifolds. We recover in this way a new perspective on the
well-known T-duals of the Heisenberg nilmanifolds: the three-torus~$\torus^3$ with $H$-flux, and the basic noncommutative principal
$\torus^2$-bundle over $\torus$ given by the group $C^*$-algebra of the integer
Heisenberg group. Our general formalism also rigorously reproduces
the noncommutative torus bundles from~\cite{Hull2019} which are T-dual
to Euclidean solvmanifolds for the $\Z_4$ elliptic conjugacy class of
$\sfSL(2,\Z)$. We extend these results to give new examples of
noncommutative torus bundles dual to Euclidean solvmanifolds for the
$\Z_2$ and $\Z_6$ elliptic conjugacy classes, as well as to the
Poincar\'e solvmanifolds. In particular, our formalism extends the
$C^*$-algebraic formulation of topological T-duality to the case of
non-principal torus bundles, which have also been previously
considered in~\cite{Hannabuss2011}.

\section{Crossed products and duality\label{sec:NAdef}}

In this section we summarise some of the mathematical tools that we
will use in this paper. A~good reference for the material covered in
the following is the book~\cite{Williams2007}. Throughout this paper,
all topological spaces are assumed to be second countable (hence
separable), locally compact and Hausdorff.

\subsection{Dynamical systems and their crossed products}

Let $\sfG$ be a locally compact group and let $X$ be a $\sfG$-space, i.e., a
topological space which is acted
upon
homeomorphically by $\sfG$; we denote the $\sfG$-action $\sfG\times
X\to X$ by $(\gamma,x)\mapsto\gamma\cdot x$. The pair $(X,\sfG)$ is
called a \emph{transformation group}.
 A related concept is that of a
\emph{dynamical system}, which is a triple $(\alg,\sfG,\alpha)$
consisting of an algebra $\alg$ and a locally compact group $\sfG$ acting
on~$\alg$ via a group homomorphism $\alpha\colon \sfG\to\Aut (\alg)$,
denoted $\gamma\mapsto(\alpha_\gamma\colon \alg\to\alg)$ for $\gamma\in
\sfG$. In topological
T-duality one usually requires $\alg$ to be a $C^*$-algebra, in which
case $(\alg,\sfG,\alpha)$ is called a \emph{$C^*$-dynamical
 system}. Two dynamical systems $(\alg,\sfG,\alpha)$ and $(\balg,\sfG,\beta)$ are
\emph{equivalent} if there is an algebra isomorphism
$\varphi\colon \alg\to\balg$ which intertwines the $\sfG$-actions: $\varphi\circ\alpha_\gamma =
\beta_\gamma\circ\varphi$ for all $\gamma\in \sfG$.

If $\alg$ is a commutative $C^*$-algebra, then we call $(\alg,\sfG,\alpha)$ a
\emph{commutative} dynamical system. In that case, by Gelfand duality
$\alg=C_0(X)$ is the algebra of continuous functions vanishing at
infinity on a
topological space $X$
equipped with the $\sfG$-action \smash{$\alpha_\gamma^\dag\big|_X$} for $\gamma\in \sfG$ under
the identification of points $x\in X$ with irreducible representations
of $C_0(X)$, which are one-dimensional and given by the point evaluation
maps ${\rm ev}_x\colon \alg\to\alg$ with ${\rm ev}_x(\mathtt{f})=\mathtt{f}(x)$; then
$(X,\sfG)$ is a~transformation group. Conversely, given a transformation
group $(X,\sfG)$, there is an associated commutative dynamical system
$(C_0(X), \sfG,\alpha)$, where $\alpha_\gamma(\mathtt{f})(x)=\mathtt{f}\big(\gamma^{-1}\cdot x\big)$ for $\gamma\in
\sfG$, $\mathtt{f}\in C_0(X)$ and $x\in X$. In other words, there is a one-to-one
correspondence between transformation groups and commutative $C^*$-dynamical
systems.

If the $C^*$-algebra $\alg$ is not commutative, then we call
$(\alg,\sfG,\alpha)$ a \emph{noncommutative} $C^*$-dynamical system.

As usual, it is more useful to work with a representation rather than the
abstract dynamical system itself. A \emph{covariant representation} of
a dynamical system $(\alg,\sfG,\alpha)$ in a $C^*$-algebra $\balg$ with
multiplier algebra ${\rm M}(\balg)$ is a
pair $(\Pim,U)$ consisting of a~homomorphism $\Pim\colon \alg\to {\rm M}(\balg)$ and a~unitary representation $U\colon \sfG\to{\rm M}(\balg)$, $\gamma\mapsto U_\gamma$ which satisfy the compatibility condition
\[
\Pim\big(\alpha_\gamma(a)\big) = U_\gamma \Pim(a) U_\gamma^{-1} ,
\]
for all $\gamma\in \sfG$ and $a\in\alg$. A natural choice is to take
$\balg=\CK(\hil)$ to be the $C^*$-algebra of compact operators on a
separable Hilbert space $\hil$, which gives a
representation $\Pim\colon \alg\to{\rm B}(\hil)$ of the algebra $\alg$ by
bounded operators ${\rm B}(\hil)$ on $\hil$ and a unitary representation
$U\colon \sfG\to{\rm B}(\hil)$ of the group $\sfG$ on~$\hil$; in this case
we call $(\Pim,U)$ a covariant representation of $(\alg,\sfG,\alpha)$ on~$\hil$.

When a group $\sfG$ acts on a space $X$, one is naturally interested
in considering the
quotient space $X/\sfG$ of $\sfG$-orbits on~$X$. When $\sfG$ acts
freely and properly on
$X$, this is described algebraically by the algebra of functions
$C_0(X/\sfG)$. More generally, the subalgebra of $\sfG$-invariant elements
$\alg^\sfG\subseteq\alg$ of a $\sfG$-algebra $\alg$ can be used to represent the quotient, even for
$\sfG$-actions with fixed points. A more general and systematic way of
dealing with the effective algebraic ``quotient'' is through the
\emph{crossed product} algebra $\alg\rtimes_\alpha \sfG$ for a dynamical
system $(\alg,\sfG,\alpha)$. For a transformation group $(X,\sfG)$,
this description is particularly powerful in the cases where the
quotient $X/\sfG$ is not a Hausdorff space, while for a free and proper
$\sfG$-action it gives an algebra with the same spectrum $X/\sfG$
as the algebra $C_0(X/\sfG)=C_0(X)^\sfG$ of
$\sfG$-invariant functions on $X$.

In order to define the crossed product
algebra, we first define
\[
\|f\|_{\rm univ} := \sup_{({\mit\Pi},U)} \big\|({\mit\Pi}\rtimes_\alpha U)(f)\big\|
\]
for compactly supported functions $f\in C_{\rm c}(\sfG,\alg) $, where the supremum is taken
over (possibly degenerate) covariant representations
$({\mit\Pi},U)$ of $(\alg, \sfG, \alpha)$ with
\[
(\Pim\rtimes_\alpha U)(f) := \int_\sfG
\Pim\big(f(\gamma)\big) U_\gamma\, \dd\mu_\sfG(\gamma) ,
\]
and $\mu_\sfG$ denotes the left invariant Haar measure on $\sfG$. This defines a norm, called the universal norm, on the
space $C_{\rm c}(\sfG,\alg)$.
Then the crossed product algebra $\alg\rtimes_\alpha \sfG$ is the
completion (in the universal norm) of the algebra $C_{\rm c}(\sfG,\alg)$
equipped with the convolution product
\begin{gather}\label{eq:convproduct}
(f\star f')(\gamma) := \int_\sfG\, f(\gamma') \alpha_{\gamma'}\big(f'\big(\gamma'^{-1} \gamma\big)\big)\, \dd \mu_\sfG(\gamma')
,
\end{gather}
for all $f,f'\colon \sfG\to\alg$. In general this is a noncommutative
multiplication, even for commutative dynamical systems.
Since $\alg$ is a $C^*$-algebra, there is a $*$-structure on the
convolution algebra defined by
\[
f^\dag(\gamma):= \Delta_\sfG(\gamma)^{-1} \alpha_\gamma\big(f\big(\gamma^{-1}\big)^*\big) ,
\]
where $\Delta_\sfG\colon \sfG\to \real^+$ is the modular function of
$\sfG$ defined through
\[
\Delta_\sfG(\gamma')
\int_\sfG\,f(\gamma)\,\dd\mu_\sfG(\gamma)=\int_\sfG f(\gamma \gamma')\,\dd\mu_\sfG(\gamma)
\]
for $f\in C_{\rm c}(\sfG,\alg)$ and $\gamma'\in \sfG$. By the uniqueness of the left
invariant Haar measure $\mu_\sfG$ up to a~positive
constant, $\Delta_\sfG(\gamma')$ is independent of~$f$
and $\Delta_\sfG$ is easily proven to be a continuous group homomorphism from~$\sfG$ to the
multiplicative group~$\real^+$; it is trivial for abelian groups and for compact
groups.

When $\alg=C_{\rm c}(X)$
is the algebra of a commutative dynamical system, the convolution
algebra $C_{\rm c}(\sfG\times X)$ consists of functions $f\colon\sfG\times X\to\complex$ and the
convolution product reads as
\[
(f\star f')(\gamma,x) = \int_\sfG f(\gamma',x)
f'\big(\gamma'^{-1} \gamma,\gamma'^{-1}\cdot x\big)\, \dd \mu_\sfG(\gamma') ,
\]
while the $*$-algebra structure is given by
\[
f^\dag(\gamma,x) = \Delta_\sfG(\gamma)^{-1} \overline{f\big(\gamma^{-1},\gamma^{-1}\cdot x\big)} .
\]
The crossed product is a generalization of the usual group algebra
$C^*(\sfG)$, the completion (in the universal norm) of $C_{\rm
 c}(\sfG)$ which is recovered in the case $\alg=\complex$
(the $C^*$-algebra of a point)
wherein $\alpha_\gamma={\rm id}_\alg\colon \alg\to \alg$ for all $\gamma\in \sfG$ and~\eqref{eq:convproduct} recovers the usual convolution product of
functions on the group $\sfG$. As explained in
Section~\ref{sec:semidirect} below, the
group $C^*$-algebra description illustrates the relation between crossed
products and semi-direct products of groups (see Theorem~\ref{cpsp2}). We also
note that if a group $\sfG$ acts trivially on an algebra~$\alg$, then
$\alg\rtimes\sfG\simeq\alg\otimes C^*(\sfG)$.

The crossed product can be thought of as a universal
object for covariant representations of the dynamical system
$(\alg,\sfG,\alpha)$, in the following sense: Define the \emph{universal
 covariant representation} $(\mbf\Pim,\mbf U)$ of $(\alg,\sfG,\alpha)$
in $\alg\rtimes_\alpha \sfG$ by
\[
\big(\mbf\Pim(a)f\big)(\gamma)=a f(\gamma) \qquad \mbox{and} \qquad
\big(\mbf U_{\gamma'}f\big)(\gamma) =
\alpha_{\gamma'}\big(f\big(\gamma'{}^{-1} \gamma\big)\big) ,
\]
for $a\in\alg$, $f\in C_{\rm c}(\sfG,\alg)$ and $\gamma,\gamma'\in \sfG$. Then the
universal property defining the crossed product implies that any covariant representation $(\Pim,U)$ of
$(\alg,\sfG,\alpha)$ in a $C^*$-algebra $\balg$ factors through the
universal covariant representation: There exists a unique homomorphism
\mbox{$\mbf\varphi\colon {\rm M}(\alg\rtimes_\alpha \sfG)\to{\rm M}(\balg)$} such that
\[
\Pim=\mbf\varphi\circ\mbf\Pim \qquad \mbox{and} \qquad U_\gamma =
\mbf\varphi(\mbf U_\gamma)
\]
for all $\gamma\in \sfG$.

If $(\Pim,U)$ is a covariant representation of the dynamical system
$(\alg,\sfG,\alpha)$ on a Hilbert space~$\hil$, then
\[
{\mit\Phi}_{(\Pim,U)}(f):=(\Pim\rtimes_\alpha U)(f)
\]
defines a representation ${\mit\Phi}_{(\Pim,U)}\colon C_{\rm c}(\sfG,\alg)\to{\rm B}(\hil)$
of the crossed product $\alg\rtimes_\alpha \sfG$ as bounded operators on
$\hil$. This is called the \emph{integrated form} of the covariant
representation $(\Pim,U)$. In particular, it maps the convolution
product \eqref{eq:convproduct} onto the operator product in the
algebra ${\rm B}(\hil)$,
\[
{\mit\Phi}_{(\Pim,U)}(f\star g) = {\mit\Phi}_{(\Pim,U)}(f)
{\mit\Phi}_{(\Pim,U)}(g) ,
\]
and it is covariant in the sense that
\[
{\mit\Phi}_{(\Pim,U)}\big(i_\sfG(\gamma)(f)\big) = U_\gamma
{\mit\Phi}_{(\Pim,U)}(f) ,
\]
where $\big(i_\sfG(\gamma)(f)\big)(\gamma'):=\alpha_\gamma\big(f\big(\gamma^{-1}
\gamma'\big)\big)$ for each $\gamma,\gamma'\in \sfG$ and $f\in C_{\rm
 c}(\sfG,\alg)$.

\begin{Example}[noncommutative two-tori]\label{ex:NCtori}
The noncommutative torus is a fundamental example
of a noncommutative space in both physics and mathematics. Its
original incarnation~\cite{Rieffel1981} is a nice example of a crossed
product construction, which will play a fundamental role later on in this
paper. Let $\big(C(\torus),\Z,\tau^\theta\big)$ be the commutative $C^*$-dynamical system
where $\tau^\theta$ is induced through pullback by rotations of the circle $\torus$ through
a fixed angle $\theta\in\real/\Z$:
\[
\tau^\theta_n(\mathtt{f})(z) = \mathtt{f}\big(\e^{2\pi\ii n \theta} z\big) ,
\]
for $n\in\Z$, $\mathtt{f}\in C(\torus)$ and $z\in\torus$. The resulting crossed
product
\[
\A_\theta:=C(\torus)\rtimes_{\tau^\theta}\Z
\]
is a called a \emph{rotation
algebra}, and for irrational values of $\theta$ it can be identified as a noncommutative two-torus
$\torus_\theta^2$ in the following way.

By definition, the algebra
$\A_\theta$ is the universal norm completion of the convolution
algebra $C_{\rm c}(\Z\times\torus)$, whose elements $f=\{f_n\}_{n\in\Z}$ can be regarded as
sequences (with only finitely many nonvanishing
terms) of functions $f_n\colon \torus\to\complex$. The convolution product is
given by
\[
(f\star_\theta g)_n(z) := \sum_{n'\in\Z} f_{n'}(z) g_{n-n'}\big(\e^{2\pi\ii n' \theta} z\big) ,
\]
and the $*$-algebra structure is
\[
f^\dag_n(z):=\overline{f_{-n}\big(\e^{2\pi\ii n \theta} z\big)} .
\]
Via the Fourier transformation
\[
f(z,w) := \sum_{n\in\Z} f_{n}(z) w^n
\]
for $w\in\torus$, we may regard the convolution algebra $C_{\rm
 c}(\Z\times\torus)$ as a subspace of the space of functions $C\big(\torus^2\big)$ equipped
with the star-product
\begin{gather}\label{eq:starprodNCtorus}
(f\star_\theta g)(z,w) = \sum_{n\in\Z} (f\star_\theta g)_n(z) w^n .
\end{gather}
After a further Fourier transformation
\[
f_n(z) = \sum_{m\in\Z} f_{m,n} z^m
\]
and some simple redefinitions of the Fourier series involved, the
star-product \eqref{eq:starprodNCtorus} may be written in the form
\[
(f\star_\theta g)(z,w) = \sum_{(m,n)\in\Z^2}
\bigg(\sum_{(m',n')\in\Z^2} f_{m',n'} g_{m-m',n-n'}
\e^{2\pi\ii (m-m') n' \theta}\bigg) z^m w^n .
\]
This recovers the usual commutative pointwise multiplication of functions in
$C\big(\torus^2\big)$ for $\theta=0$. For $\theta\neq0$ it realizes the irrational rotation algebra $\A_\theta$ as a
deformation of the algebra of functions $C\big(\torus^2\big)$ on a
two-dimensional torus $\torus^2$; it is equivalent to the usual strict
deformation quantization of $\torus^2$ whose star-product is a twisted
convolution product on $C\big(\torus^2\big)$.

In the language of covariant representations of the dynamical system
$(C(\torus),\Z,\tau^\theta)$, the crossed product $\A_\theta$ is the
universal $C^*$-algebra generated by two unitaries $U$ and $V$
satisfying the relation~\cite[Proposition~2.56]{Williams2007}
\begin{gather}\label{eq:C*UV}
U V=\e^{-2\pi\ii \theta} V U .
\end{gather}
A concrete representation
of $\A_\theta$ on the Hilbert space $\hil = {\rm L}^2(\torus)$ is
given by defining
\[
U(\mathtt{f})(z)=z \mathtt{f}(z) \qquad \mbox{and} \qquad V(\mathtt{f})(z) =
\mathtt{f}\big(\e^{2\pi\ii\theta} z\big) .
\]
\end{Example}

\begin{Example}[{{\bf Noncommutative $\mbf{d}$-tori}}]
\label{ex:NCtorid}
The natural higher-dimensional generalization of
Example~\ref{ex:NCtori} involves a skew-symmetric real $d{\times}d$
matrix $\Theta=(\theta_{ij})$, see~\cite{Rieffel1990}. Then the noncommutative $d$-torus
$\A_\Theta=\torus_\Theta^d$ is the universal $C^*$-algebra generated
by $d$ unitaries $U_1,\dots,U_d$ satisfying the relations
\[
U_i U_j = \e^{-2\pi\ii\theta_{ij}} U_j U_i
\]
for $i,j=1,\dots,d$. By~\cite[Lemma~1.5]{Phillips2006}, every
noncommutative torus $\torus_\Theta^d$ can be obtained as an iterated
crossed product by $\Z$ in the following way. Let
$\Theta_{|d-1}=(\theta_{ij})_{1\leq i,j\leq d-1}$, and let
$U_1,\dots,U_{d-1}$ be the standard generators of
$\A_{\Theta_{|d-1}}=\torus^{d-1}_{\Theta_{|d-1}}$. Define a
group homomorphism
\smash{$\tau^{\vec\theta}\colon \Z\to\Aut(\A_{\Theta_{|d-1}})$} by
\[
\tau_n^{\vec\theta}(U_i) = \e^{2\pi\ii n \theta_{id}} U_i
 ,
\]
for $n\in\Z$, where $\vec\theta:=(\theta_{id})\in\R^{d-1}$.
Then there is an isomorphism of $C^*$-algebras
\begin{gather}\label{eq:AThetacrossed}
\A_\Theta\simeq\A_{\Theta_{|d-1}}
\rtimes_{\tau^{\vec\theta}} \Z .
\end{gather}

In the particular case where $\Theta_{|d-1}=\mbf 0_{d-1}$, we denote the
corresponding noncommutative $d$-torus by
$\A_{\vec\theta}=\torus_{\vec\theta}^d\,$, and~\eqref{eq:AThetacrossed}
shows that it can be obtained by a crossed product of the commutative
algebra of functions on a $d{-}1$-torus by an action of~$\Z$:
\[
\A_{\vec\theta} \simeq C\big(\torus^{d-1}\big)\rtimes_{\tau^{\vec\theta}}\Z .
\]
\end{Example}

\subsection{Semi-direct products and group algebras}\label{sec:semidirect}

Most of our considerations later on will focus on spaces that can be
obtained from semi-direct products of groups. We will now explain the
relation between crossed products and semi-direct products which will
be useful for these examples.

There are two ways to think about the semi-direct product construction:
\begin{enumerate}\itemsep=0pt
\item[(1)] Let $\sfG$ be a
group with two subgroups $\sfN $ and $\sfH$ such that $\sfN $ is normal. If $\sfN \cap
\sfH=\{e\}\subset \sfG$ and every
element of $\sfG$ can be written as a product of an element of
$\sfN $ with an element of $\sfH$, then we say that $\sfG$ is a semi-direct
product of its subgroups $\sfN $ and $\sfH$ and we write $\sfG=\sfN \,\sfH$.

\item[(2)]
Let $\sfN $ and $\sfH$ be two groups together with a left action
$\varphi\colon \sfH\to\Aut(\sfN )$ of $\sfH$ on $\sfN $ by automorphisms, which we denote by
$\varphi_h(n)={}^hn$ for $h\in \sfH$ and $n\in \sfN $; in particular
${}^h(n n')={}^hn\, {}^hn'$. We write ${}^\sfH \sfN $ to
indicate that $\sfH $ acts on $\sfN $ from the left. The semi-direct product
of $\sfN $ and $\sfH $ is the group $\sfN \rtimes_\varphi \sfH $ defined to be the set
$\sfN \times \sfH $ with the product
\[
(n,h)\,(n',h') = \big(n\,{}^hn',h h'\big) .
\]
The inverse is then $(n,h)^{-1}=\big({}^{h^{-1\!}}{n^{-1}},h^{-1}\big)$.
\end{enumerate}
These two definitions are equivalent: Given subgroups $\sfN ,\sfH \subset \sfG$ as
in point (1), it follows that $\sfG\simeq \sfN \rtimes_{\rm Ad} \sfH $
where ${\rm Ad}$ is the adjoint action: ${\rm Ad}_h(n)=h n h^{-1}$. On the other
hand, every element of the group $\sfG=\sfN \rtimes_\varphi \sfH $ defined in~(2) can be written as $(n,h)=(n,e_\sfH) (e_\sfN,h)$ and the subgroups $\sfN \times\{e_\sfH\}$
and $\{e_\sfN\} \times \sfH $ intersect only in the identity of $\sfG$.
If the action of $\sfH $ on $\sfN $ is trivial,
i.e., $\varphi_h={\rm id}_\sfN $ for all $h\in \sfH $, then the semi-direct
product reduces to the direct product $\sfN \rtimes_\varphi
\sfH =\sfN \times \sfH $.

Later on we will need to consider the interplay between semi-direct
products and quotient groups, which is provided by the simple
\begin{Lemma}
\label{lem:semidirectquotient}
Let $\sfG=\sfN\rtimes_\varphi\sfH$ be a semi-direct
product, and let $\V\subset \sfN$ be a subgroup which is normal in
$\sfG$. Then the quotient group $\sfG/\V$ is the
semi-direct product $(\sfN/\V)\rtimes_{\varphi^\V}\sfH$, where
${\varphi^\V}$ is the action $\varphi$ of $\sfH$ induced on the
quotient group $\sfN/\V$.
\end{Lemma}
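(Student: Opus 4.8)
The plan is to write down an explicit surjective homomorphism $\Psi\colon\sfG\to(\sfN/\V)\rtimes_{\varphi^\V}\sfH$ with kernel $\V$ and then invoke the first isomorphism theorem; the only preliminary work is to check that the right-hand side even makes sense. Since $\sfN$ is normal in $\sfG$ and $\V\subset\sfN$ is normal in $\sfG$, in particular $\V$ is normal in $\sfN$, so the quotient $\sfN/\V$ is a group. Moreover, for $h\in\sfH$ the element $(e_\sfN,h)$ conjugates $\V\times\{e_\sfH\}\subset\sfG$ onto ${}^h\V\times\{e_\sfH\}$; normality of $\V$ in $\sfG$ then forces ${}^h\V=\V$, so $\varphi_h$ descends to an automorphism $\varphi^\V_h$ of $\sfN/\V$ with $\varphi^\V_h(n\V)=({}^hn)\,\V$, and $h\mapsto\varphi^\V_h$ is a homomorphism $\sfH\to\Aut(\sfN/\V)$ because $\varphi$ is. Thus $(\sfN/\V)\rtimes_{\varphi^\V}\sfH$ is defined.

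Next I would set $\Psi(n,h):=(n\V,h)$. That $\Psi$ is a group homomorphism is a direct computation with the two semi-direct product multiplications:
\begin{gather*}
\Psi\big((n,h)(n',h')\big)=\Psi\big(n\,{}^hn',hh'\big)=\big((n\,{}^hn')\,\V,hh'\big)\\
=\big(n\V\cdot\varphi^\V_h(n'\V),hh'\big)=\Psi(n,h)\,\Psi(n',h') .
\end{gather*}
It is clearly surjective, and $\Psi(n,h)=(e_\sfN\V,e_\sfH)$ precisely when $h=e_\sfH$ and $n\in\V$, so $\ker\Psi=\V\times\{e_\sfH\}$, which is exactly the copy of $\V$ inside $\sfG$ under the canonical inclusion $\sfN\hookrightarrow\sfG$, $n\mapsto(n,e_\sfH)$. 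The first isomorphism theorem then yields $\sfG/\V\simeq(\sfN/\V)\rtimes_{\varphi^\V}\sfH$, as claimed.

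There is no genuinely hard step; the only point requiring any care is the well-definedness of the induced action $\varphi^\V$, and this is precisely where one uses normality of $\V$ in the whole of $\sfG$ (rather than merely in $\sfN$), since that is what guarantees $\sfH$-invariance of $\V$. As an alternative one could avoid writing $\Psi$ by hand: verify that the images of $\sfN$ and $\sfH$ in $\sfG/\V$, which are $\sfN\V/\V=\sfN/\V$ and $\sfH\V/\V\simeq\sfH/(\sfH\cap\V)=\sfH$ respectively (the last equality because $\sfH\cap\V\subseteq\sfH\cap\sfN=\{e\}$), satisfy the hypotheses of part~(1) of the semi-direct product definition — the former is normal, they intersect trivially, and every element of $\sfG/\V$ is the product of one from the former with one from the latter — and then identify the resulting adjoint action of $\sfH$ on $\sfN/\V$ with $\varphi^\V$; this reproves the statement but is no shorter.
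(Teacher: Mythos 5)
Your proof is correct. The paper states this lemma without proof (introducing it as ``the simple Lemma''), so there is no argument in the text to compare against; your verification that normality of $\V$ in all of $\sfG$ forces ${}^h\V=\V$, so that $\varphi^\V$ is well defined, followed by the first isomorphism theorem applied to $\Psi(n,h)=(n\V,h)$, is the standard route and supplies exactly the detail the authors left implicit.
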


If $\sfN $ is a group, we write $C^*(\sfN )$ for the corresponding group
$C^*$-algebra, i.e., for the crossed product $\complex\rtimes \sfN $. If
$\sfN $ is finite, then $C^*(\sfN )=\complex[\sfN ]$ is the linear space
freely generated over $\complex$ by the group elements, made into an algebra by
linearly extending the product from $\sfN $ to $\complex[\sfN ]$; equivalently
it is the algebra of continuous functions
on $\sfN $ with the convolution product.
 Given a
left $\sfH $-action $\varphi\colon \sfH \to\Aut(\sfN )$, there is an induced action
$\varphi^*\colon \sfH \to \Aut(C^*(\sfN ))$ via pullback. For $\sfH $ and $\sfN $ finite the
vector spaces $ C^*(\sfN )\times \sfH $ and $C^*(\sfN \times \sfH )$ are canonically
isomorphic, and it is straightforward
to show that the corresponding crossed product and semi-direct product are
related by
\begin{Proposition}\label{cpsp}
If $\sfN$ and $\sfH$ are finite groups, then
\[
C^*(\sfN )\rtimes_{\varphi^*}\sfH \simeq C^*(\sfN \rtimes_\varphi \sfH ) .
\]
\end{Proposition}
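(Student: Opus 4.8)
Since $\sfN$ and $\sfH$ are finite, every algebra in sight is finite-dimensional and no analytic issues arise: by the discussion above, $C^*(\sfN)=\complex\rtimes\sfN=\complex[\sfN]$ has basis the point masses $\{\delta_n\}_{n\in\sfN}$ with $\delta_n\,\delta_{n'}=\delta_{nn'}$ and $\delta_n^*=\delta_{n^{-1}}$, the induced action is $\varphi^*_h(\delta_n)=\delta_{{}^hn}$, and $C^*(\sfN\rtimes_\varphi\sfH)=\complex[\sfN\rtimes_\varphi\sfH]$ has basis the point masses $\delta_{(n,h)}$. On the crossed product side I would work with the canonical generators coming from the universal covariant representation of $\bigl(C^*(\sfN),\sfH,\varphi^*\bigr)$, writing $u_n:=\mbf\Pim(\delta_n)$ and $w_h:=\mbf U_h$, so that $\{u_n\,w_h\}_{(n,h)\in\sfN\times\sfH}$ is a basis of $C^*(\sfN)\rtimes_{\varphi^*}\sfH$ and the covariance relation reads $w_h\,u_n\,w_h^{-1}=\mbf\Pim\bigl(\varphi^*_h(\delta_n)\bigr)=u_{{}^hn}$.

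\textbf{The isomorphism.} First I would define the linear map $\Phi\colon C^*(\sfN)\rtimes_{\varphi^*}\sfH\to C^*(\sfN\rtimes_\varphi\sfH)$ by $\Phi(u_n\,w_h):=\delta_{(n,h)}$, extended by linearity; as it carries a basis bijectively to a basis (both indexed by $\sfN\times\sfH$), it is a linear isomorphism. Multiplicativity is then a one-line computation using covariance: $(u_n\,w_h)(u_{n'}\,w_{h'})=u_n\,\bigl(w_h\,u_{n'}\,w_h^{-1}\bigr)\,w_h\,w_{h'}=u_{n\,{}^hn'}\,w_{hh'}$, and $\Phi$ sends this to $\delta_{(n\,{}^hn',\,hh')}=\delta_{(n,h)(n',h')}=\delta_{(n,h)}\,\delta_{(n',h')}$, so products of basis elements, hence all products, are preserved. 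Likewise $(u_n\,w_h)^*=w_{h^{-1}}\,u_{n^{-1}}=u_{{}^{h^{-1}}n^{-1}}\,w_{h^{-1}}$, which matches $\delta_{(n,h)}^*=\delta_{(n,h)^{-1}}=\delta_{({}^{h^{-1}}n^{-1},\,h^{-1})}$, so $\Phi$ is a $*$-isomorphism of the underlying $*$-algebras; since a $*$-isomorphism of finite-dimensional $C^*$-algebras is automatically isometric for the (unique) $C^*$-norms, $\Phi$ is an isomorphism of $C^*$-algebras.

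\textbf{A conceptual alternative.} One can instead observe that the two sides corepresent the same functor. A unitary representation of $\sfN\rtimes_\varphi\sfH$ restricts to a unitary representation $\pi_\sfN$ of $\sfN$ and a unitary representation $\pi_\sfH$ of $\sfH$ satisfying $\pi_\sfH(h)\,\pi_\sfN(n)\,\pi_\sfH(h)^{-1}=\pi_\sfN({}^hn)$; after integrating $\pi_\sfN$ to a representation of $C^*(\sfN)$, this is precisely a covariant representation of $\bigl(C^*(\sfN),\sfH,\varphi^*\bigr)$, and the correspondence is bijective. Comparing the universal property of $C^*(\sfN\rtimes_\varphi\sfH)$ with that of the crossed product then yields the isomorphism without any explicit basis manipulation.

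\textbf{Main obstacle.} There is no serious obstacle: in the finite case the statement is genuinely routine, which is why it is flagged as ``straightforward''. The only points needing a little care are bookkeeping ones — keeping the \emph{left} action convention $\varphi_h(n)={}^hn$ consistent throughout, getting the order of factors right when pushing $w_h$ past $u_{n'}$ via the covariance relation so that one lands on $u_{n\,{}^hn'}$ rather than on $u_{{}^hn'\,n}$, and recalling that finite-dimensionality is exactly what lets one dispense with norms and completions altogether.
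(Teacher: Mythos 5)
Your proof is correct and is essentially the paper's argument in a different notation: the paper identifies both sides with the space of functions on $\sfN\times\sfH$ and checks that the two convolution products coincide, which is exactly your basis computation $(u_n\,w_h)(u_{n'}\,w_{h'})=u_{n\,{}^hn'}\,w_{hh'}$ evaluated on the point masses $\delta_{(n,h)}$. Your additional verification of the $*$-structure (which the paper omits) and the universal-property remark are both sound.
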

\begin{proof} Note that $C^*(\sfN )\rtimes_{\varphi^*}\sfH =\complex[\sfN ]\rtimes_{\varphi^*}\sfH $ is the
 vector space of functions $f\colon \sfH \to \complex[\sfN ]$ with convolution product
\[
\big(f\star_{\complex[\sfN ]\rtimes_{\varphi^*}\sfH } f'\big)(h)=\sum_{h'\in
 \sfH }\,f(h')\star_{\complex[\sfN ]}\varphi^*_{h'}\big(f'\big(h'^{-1} h\big)\big) ,
\]
and using the convolution product in $\complex[\sfN]$ this can be
written as
\[
\big(f\star_{\complex[\sfN ]\rtimes_{\varphi^*}\sfH }
f'\big)(n,h)=\sum_{h'\in \sfH} \sum_{n'\in
 \sfN } f(n',h') f'\big({}^{h'^{-1\!}}\big(n'^{-1} n\big),h'^{-1} h\big) ,
\]
which is easily seen
 to coincide with the convolution product in $ C^*(\sfN \rtimes_\varphi \sfH )=\complex[\sfN \rtimes_\varphi \sfH ]$.
\end{proof}

A more general result holds if $\sfN $ and $\sfH $ are locally compact groups with $\varphi\colon \sfH \to \Aut (\sfN )$
a~continuous action of $\sfH $ on $\sfN $ via automorphisms
(i.e., $(h,n)\mapsto \varphi_h(n)$ is a continuous map from $\sfH \times \sfN $ to $\sfN $).
In this case the semi-direct product $\sfN \rtimes_\varphi \sfH $ is a locally compact
group (in the product topology on $\sfN \times \sfH $) with $\sfN $ a closed normal subgroup and $\sfH $ a closed
subgroup (see~\cite[Proposition~3.11]{Williams2007} with
$A=\complex$). The analogue of item~(1) above also holds in the
context of locally compact groups if $\sfG$ is $\sigma$-compact, and
$\sfN$ and $\sfH$ are closed subgroups of~$\sfG$.

The action $\beta$
defining the $C^*$-dynamical system $(C^*(\sfN ),\sfH ,\beta)$ and hence the crossed
product $C^*(\sfN )\rtimes_\beta \sfH $ is the composition
of the pullback $\varphi^*$ of the action $\varphi\colon \sfH \to \Aut(\sfN )$ with
the action $\sigma_\sfH\colon \sfH \to \real^+\subset \Aut(C^*(\sfN ))$ that enters the
definition of the Haar measure on $\sfN \rtimes_\varphi \sfH $ in terms of the Haar
measures on $\sfN $ and $\sfH $: If $\mu_\sfN $ is a (left invariant)
Haar measure on $\sfN $, then the integral
$I_h(F)=\int_\sfN F\big({}^hn\big)\, \dd\mu_\sfN (n)$ for $F\in C^*(\sfN )$
is left invariant, i.e.,
$I_h(\lambda_{n'}F)= I_h(F)$ where $(\lambda_{n'}F)(n):=F\big(n'^{-1} n\big)$
for all $n'\in \sfN $
(use invariance of the Haar measure under $n\to {}^{h^{-1\!}}n'$).
Uniqueness of the Haar measure up to a positive constant then
implies there exists a function $\sigma_\sfH\colon \sfH\to\real^+$ such that
\begin{gather}\label{sigmaI}
\sigma_\sfH(h) \int_\sfN F\big({}^hn\big) \, \dd\mu_\sfN (n)=\int_\sfN F(n)
\, \dd\mu_\sfN (n) .
\end{gather}
It is straighforward to see that
$\sigma_\sfH$ is a group homomorphism and that it is
continuous~\cite[Section~2]{Williams2007}. The Haar measure
$\mu_{\sfN\rtimes_\varphi\sfH}$ on
$\sfN \rtimes_\varphi \sfH $ is then given by
\[
\int_{\sfN\rtimes_\varphi\sfH} f(n,h) \,
\dd\mu_{\sfN\rtimes_\varphi\sfH}(n,h) := \int_\sfH \int_\sfN\,
f(n,h) \sigma_\sfH(h)^{-1} \, \dd\mu_\sfN (n) \, \dd\mu_\sfH (h) .
\]
This is trivially invariant under the left $\sfN $-action, and
it is also invariant under the left $\sfH $-action $(n,h)\mapsto
(1,h') (n,h)=\big({}^{h'\!}n, h' h\big)$, using \eqref{sigmaI} with $F\big({}^{h'\!}n\big) :=f\big({}^{h'\!}n, h' h\big)$
and recalling that~$h$ is fixed in~\eqref{sigmaI}.

\begin{Theorem}\label{cpsp2}
Let $\sfN $ and $\sfH $ be locally compact groups
 and $\varphi \colon \sfH \to \Aut(\sfN )$ a continuous action of $\sfH $ on $\sfN $. Define
$\beta\colon \sfH \to \Aut(C^*(\sfN ))$ by $(\beta_{h'}\ell)
(n)=\sigma_\sfH(h')^{-1} \ell\big({}^{h'^{-1\!}}n\big)$ for all $h'\in \sfH $ and
$\ell\in C_{\rm c}(\sfN )$. Then
\[
C^*(\sfN )\rtimes_{\beta}\sfH \simeq C^*(\sfN \rtimes_\varphi
\sfH ) .
\]
\end{Theorem}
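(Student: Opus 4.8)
The plan is to establish the isomorphism by exhibiting an explicit isometric $*$-isomorphism between the dense convolution subalgebras $C_{\rm c}(\sfH, C_{\rm c}(\sfN))$ and $C_{\rm c}(\sfN\rtimes_\varphi\sfH)$, and then argue that it extends to the $C^*$-completions. First I would identify, via the canonical linear isomorphism $C_{\rm c}(\sfH, C_{\rm c}(\sfN))\cong C_{\rm c}(\sfN\times\sfH)$ used already in Proposition~\ref{cpsp}, a function $f\colon\sfH\to C_{\rm c}(\sfN)$ with the function $\tilde f(n,h):=f(h)(n)$ on $\sfN\rtimes_\varphi\sfH$. I would then write out both convolution products. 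On the crossed-product side, using the definition of $\beta$ and the convolution \eqref{eq:convproduct} for the group algebra $C^*(\sfN)=\complex\rtimes\sfN$, one gets
\[
(f\star f')(h)(n)=\int_\sfH\int_\sfN f(h')(n')\,\sigma_\sfH(h')^{-1}\,f'\big(h'^{-1}h\big)\big({}^{h'^{-1\!}}\big(n'^{-1}n\big)\big)\,\dd\mu_\sfN(n')\,\dd\mu_\sfH(h').
\]
On the semidirect-product side, the convolution \eqref{eq:convproduct} for $\sfN\rtimes_\varphi\sfH$ with its Haar measure $\mu_{\sfN\rtimes_\varphi\sfH}$ (computed just above the statement) reads
\[
(\tilde f\star\tilde f')(n,h)=\int_\sfH\int_\sfN \tilde f(n',h')\,\tilde f'\big((n',h')^{-1}(n,h)\big)\,\sigma_\sfH(h')^{-1}\,\dd\mu_\sfN(n')\,\dd\mu_\sfH(h').
\]
Using $(n',h')^{-1}(n,h)=\big({}^{h'^{-1\!}}(n'^{-1}n),\,h'^{-1}h\big)$ one sees the two integrands agree term by term, so the linear map is multiplicative. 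The $\sigma_\sfH$ factor in $\beta$ is exactly what is needed to match the Haar-measure weight on $\sfN\rtimes_\varphi\sfH$; isolating and tracking this factor is the one genuinely delicate bookkeeping point, and it is the reason the theorem requires the specific $\beta$ rather than the bare pullback $\varphi^*$.

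Next I would check that the map intertwines the $*$-structures. On the crossed-product side $f^\dag(h)(n)=\Delta_\sfH(h)^{-1}\,\beta_h\big(f(h^{-1})^*\big)(n)$, where the $*$ inside is the involution of $C^*(\sfN)=\complex\rtimes\sfN$, namely $\ell^*(n)=\Delta_\sfN(n)^{-1}\,\overline{\ell(n^{-1})}$. Expanding and using $\Delta_{\sfN\rtimes_\varphi\sfH}(n,h)=\Delta_\sfN(n)\,\Delta_\sfH(h)\,\sigma_\sfH(h)$ (the standard formula for the modular function of a semidirect product, which I would either cite from \cite{Williams2007} or derive quickly from \eqref{sigmaI}) one verifies that $\widetilde{f^\dag}=\tilde f^{\,\dag}$, the involution on $C_{\rm c}(\sfN\rtimes_\varphi\sfH)$. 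So at the level of the compactly supported convolution $*$-algebras the map is a $*$-isomorphism.

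Finally I would promote this to the $C^*$-level. The clean way is to invoke the universal property: covariant representations of $(C^*(\sfN),\sfH,\beta)$ are in natural bijection with unitary representations of $\sfN\rtimes_\varphi\sfH$ — a covariant pair $(\Pim,U)$ gives a representation of $\sfN$ via $\Pim$ (integrated from a unitary representation of $\sfN$) together with the unitaries $U_h$, and the covariance relation $\Pim(\beta_h(\ell))=U_h\Pim(\ell)U_h^{-1}$ together with the $\sigma_\sfH$-twist encodes precisely the semidirect product relation $(1,h)(n,1)(1,h)^{-1}=({}^h n,1)$; conversely a unitary representation of $\sfN\rtimes_\varphi\sfH$ restricts to $\sfN$ and $\sfH$. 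Under this bijection the integrated forms correspond under the map above, so the two universal norms coincide and the $*$-isomorphism of dense subalgebras extends to the completions. The main obstacle is not conceptual but the careful matching of all the modular and $\sigma_\sfH$ factors across the Haar measures of $\sfN$, $\sfH$, and $\sfN\rtimes_\varphi\sfH$; everything else is a direct transcription of definitions, and for $\sfN,\sfH$ abelian or compact (the cases we actually use) the $\sigma_\sfH$ and $\Delta$ factors are trivial, so the argument simplifies considerably.
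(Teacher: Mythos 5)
Your proposal is correct and follows essentially the same route as the paper, which likewise identifies $C_{\rm c}(\sfH,C_{\rm c}(\sfN))$ with $C_{\rm c}(\sfN\rtimes_\varphi\sfH)$ and verifies that the two convolution products agree (the $\sigma_\sfH$ factor in $\beta$ matching the weight $\sigma_\sfH(h')^{-1}$ in the Haar measure of the semi-direct product), but then defers the remaining topological and $C^*$-algebraic points to \cite[Proposition~3.11]{Williams2007}. Your additional verifications — the involution check via $\Delta_{\sfN\rtimes_\varphi\sfH}(n,h)=\Delta_\sfN(n)\,\Delta_\sfH(h)\,\sigma_\sfH(h)$ (which is the correct formula in the paper's conventions) and the passage to completions via the bijection of covariant pairs with unitary representations of $\sfN\rtimes_\varphi\sfH$ — are exactly the content of the cited result, so your argument is a more self-contained version of the same proof.
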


For a full
proof of Theorem~\ref{cpsp2} that takes into account the topological
and $C^*$-algebraic aspects, see~\cite[Proposition~3.11]{Williams2007}. Here
we shall just show that under the canonical injection $C_{\rm c}(\sfN \rtimes_\varphi \sfH )\hookrightarrow
C_{\rm c}(\sfN )\rtimes_\beta \sfH $, given by $f(n,h)\mapsto f(h)$ where
$f(h)(n)=f(n,h)$, the convolution product in $C_{\rm c}(\sfN \rtimes_\varphi
\sfH )$ is mapped to the convolution product in $C_{\rm c}(\sfN )\rtimes_\beta \sfH $.
Let $f,f'\in C_{\rm c}(\sfN \rtimes_\varphi \sfH )$, then
\begin{gather}\label{eq:imagefstarf'}
\big(f\star_{C_{\rm c}(\sfN \rtimes_\varphi
 \sfH )}f'\big)(n,h)=\int_\sfH \int_\sfN f(n',h')
f'\big((n',h')^{-1} (n,h)\big) \sigma_\sfH(h')^{-1} \, \dd\mu_\sfN (n') \, \dd\mu_\sfH (h').
\end{gather}
On the other hand, for the images of $f$, $f'$ in $C_{\rm c}(\sfN
)\rtimes_\beta \sfH $ we have
\[
\big(f\star_{C_{\rm c}(\sfN )\rtimes_\beta
 \sfH }f'\big)(h)=\int_\sfH f(h')\star_{C^*(\sfN
 )}\beta_{h'}\big(f'\big(h'^{-1} h\big)\big) \, \dd\mu_\sfH (h') .
\]
Using the convolution product in $C^*(\sfN)$ this can be written as
\begin{align*}
\big(f\star_{C_{\rm c}(\sfN )\rtimes_\beta
 \sfH }f'\big)(h)(n)=\int_\sfH \int_\sfN
 f(h')(n') \beta_{h'}\big(f'\big(h'^{-1} h\big)\big)\big(n'^{-1} n\big)
\, \dd\mu_\sfN (n') \, \dd\mu_\sfH (h') ,
\end{align*}
which from the definition of the action $\beta$ is easily seen to
equal the image $(f\star_{C_{\rm c}(\sfN \rtimes_\varphi \sfH
 )}f')(h)(n)$ in $C_{\rm c}(\sfN )\rtimes_\beta
 \sfH $ of the product $(f\star_{C_{\rm c}(\sfN \rtimes_\varphi
 \sfH )}f')(n,h)$ in $C_{\rm c}(\sfN \rtimes_\varphi
\sfH )$ from~\eqref{eq:imagefstarf'}.

More generally we have~\cite[Proposition~3.11]{Williams2007}
\begin{Theorem}\label{cpsp3}
Let $(\alg,\sfN \rtimes_\varphi \sfH ,\alpha)$ be a $C^*$-dynamical
system for the semi-direct product group $\sfN \rtimes_\varphi \sfH $. Then $(\alg\rtimes_{\alpha|_\sfN }\sfN ,\sfH ,\beta)$ is a $C^*$-dynamical
system, where
\[
\beta\colon \ \sfH \longrightarrow \Aut(\alg\rtimes_{\alpha|_\sfN }\sfN ) ,
\qquad h\longmapsto \beta_h
\]
is defined by
$\big(\beta_h(f)\big)(n)=\sigma_\sfH(h)^{-1} \alpha_h\big(f\big(^{h^{-1\!}}n\big)\big)$
for all $f\in C_{\rm c}(\sfN ,\alg)\subset \alg\rtimes_{\alpha|_\sfN }\sfN $, with $\sigma_\sfH\colon
\sfH \to \real^+$ defined by~\eqref{sigmaI} and $^{h^{-1\!}}n=\varphi_{h^{-1}}(n)$.
Moreover, the canonical injection $C_{\rm c}(\sfN \rtimes_\varphi \sfH ,\alg)\hookrightarrow
C_{\rm c}\big(\sfH ,C_{\rm c}(\sfN ,\alg)\big)$
extends to a~$C^*$-algebra isomorphism
\[
\alg\rtimes_{\alpha}\big(\sfN \rtimes_\varphi
\sfH \big) \simeq \big(\alg\rtimes_{\alpha|_\sfN } \sfN
\big)\rtimes_\beta \sfH .
\]
\end{Theorem}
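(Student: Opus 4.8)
The plan is to follow the template used for Theorem~\ref{cpsp2}, the case $\alg=\complex$: realise the isomorphism explicitly on the dense convolution $*$-subalgebras by an ``uncurrying'' map, check by a direct calculation that it intertwines convolution products and adjoints, and then pass to the $C^*$-completions via a bijective correspondence between covariant representations of the two dynamical systems, whose analytic details are supplied by \cite[Proposition~3.11]{Williams2007}. First, though, one must check the preliminary assertion that $(\alg\rtimes_{\alpha|_\sfN}\sfN,\sfH,\beta)$ is itself a $C^*$-dynamical system.

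For that preliminary assertion, one verifies directly on $C_{\rm c}(\sfN,\alg)$ that each $\beta_h$, given by $(\beta_h(f))(n)=\sigma_\sfH(h)^{-1}\alpha_h(f({}^{h^{-1}}n))$, is a $*$-endomorphism: the homomorphism property rests on the scaling relation \eqref{sigmaI} for $\sigma_\sfH$ together with $\alpha_h\in\Aut(\alg)$ and the identity $\alpha_h\circ\alpha_n=\alpha_{{}^hn}\circ\alpha_h$, which itself follows from $\alpha$ being a homomorphism on $\sfN\rtimes_\varphi\sfH$ and the relation $(e,h)\,(n,e)=({}^hn,e)\,(e,h)$; compatibility with the adjoint $f\mapsto f^\dag$ uses in addition that the modular function of $\sfN\rtimes_\varphi\sfH$ behaves correctly, and $\beta_h\circ\beta_{h'}=\beta_{hh'}$ uses that $\sigma_\sfH$ is a homomorphism and $\varphi$ an action. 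That $\beta_h$ preserves the universal norm, and hence extends to a $*$-automorphism of $\alg\rtimes_{\alpha|_\sfN}\sfN$, follows because it carries covariant representations of $(\alg,\sfN,\alpha|_\sfN)$ to covariant representations; continuity of $h\mapsto\beta_h$ is checked on $C_{\rm c}(\sfN,\alg)$ in the inductive limit topology using continuity of $h\mapsto\alpha_h$, of $\sigma_\sfH$ and of $\varphi$.

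For the isomorphism, define $\Psi\colon C_{\rm c}(\sfN\rtimes_\varphi\sfH,\alg)\to C_{\rm c}(\sfH,C_{\rm c}(\sfN,\alg))$ by $\Psi(F)(h)(n)=F(n,h)$; this is a linear bijection onto a dense subalgebra of $(\alg\rtimes_{\alpha|_\sfN}\sfN)\rtimes_\beta\sfH$. The heart of the proof is that $\Psi(F\star F')=\Psi(F)\star\Psi(F')$: expanding $(\Psi(F)\star\Psi(F'))(h)$ as an $\sfH$-integral of convolution products in $\alg\rtimes_{\alpha|_\sfN}\sfN$ and inserting the definition of $\beta_{h'}$, one substitutes $(n',h')^{-1}(n,h)=({}^{h'^{-1}}(n'^{-1}n),h'^{-1}h)$ and uses that the Haar measure on $\sfN\rtimes_\varphi\sfH$ is the product measure weighted by $\sigma_\sfH(h)^{-1}$ (recalled just before Theorem~\ref{cpsp2}); the twists $\alpha_{h'}$ and $\varphi$ then combine via $\alpha_{h'}\circ\alpha_{n'}=\alpha_{{}^{h'}n'}\circ\alpha_{h'}$, reproducing the scalar computation in the proof of Theorem~\ref{cpsp2} with $\alg$-valued integrands, and one recovers $\Psi(F\star F')(h)$. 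An entirely analogous computation gives $\Psi(F^\dag)=\Psi(F)^\dag$. Finally $\Psi$ is isometric for the universal norms because covariant representations of the two systems correspond: restricting a covariant representation $(\Pim,U)$ of $(\alg,\sfN\rtimes_\varphi\sfH,\alpha)$ to $\sfN$ and $\sfH$ and integrating over $\sfN$ produces a covariant representation $(\Pim\rtimes_{\alpha|_\sfN}(U|_\sfN),\,U|_\sfH)$ of $(\alg\rtimes_{\alpha|_\sfN}\sfN,\sfH,\beta)$ with the same integrated form — the covariance relation $U_{(e,h)}(\Pim\rtimes U|_\sfN)(f)U_{(e,h)}^{-1}=(\Pim\rtimes U|_\sfN)(\beta_h(f))$ follows from $U_{(e,h)}U_{(n,e)}U_{(e,h)}^{-1}=U_{({}^hn,e)}$ and the scaling \eqref{sigmaI} — and conversely every nondegenerate covariant representation of $(\alg\rtimes_{\alpha|_\sfN}\sfN,\sfH,\beta)$ arises this way by setting $U_{(n,h)}:=U_{(n,e)}U_{(e,h)}$. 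Hence the two supremum norms agree and $\Psi$ extends to the asserted $C^*$-algebra isomorphism.

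The only genuine difficulty is the bookkeeping in the convolution computation: keeping the factor $\sigma_\sfH$ and the order of the $\alpha$- and $\varphi$-twists straight, so that the change of integration variable in $\sfN$ really turns one convolution product into the other. The sole analytic ingredient, namely the equality of the two $C^*$-completion norms rather than merely the algebraic isomorphism of the dense convolution $*$-algebras, is handled exactly as in the proof of Theorem~\ref{cpsp2}, by appeal to \cite[Proposition~3.11]{Williams2007}, which treats the degenerate representations and multiplier-algebra subtleties in full.
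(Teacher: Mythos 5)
Your proposal is correct and follows essentially the same route as the paper, which for Theorem~\ref{cpsp3} simply cites \cite[Proposition~3.11]{Williams2007} and carries out the convolution-product verification only in the scalar case $\alg=\complex$ of Theorem~\ref{cpsp2}; your computation is that same verification with $\alg$-valued integrands, using the identity $\alpha_{n'}\circ\alpha_{h'}=\alpha_{(n',h')}$ and the weighted product Haar measure, and deferring the completion-norm argument to the same reference. You supply more detail than the paper does (the check that $\beta$ is an action, $*$-compatibility, and the covariant-representation correspondence), but the underlying strategy is identical.
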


Theorem \ref{cpsp2} is then recovered by setting $\alg=\complex$.

In the spirit of Theorem~\ref{cpsp2},
 which shows that crossed products are a generalization of
 semi-direct products, let us mention the semi-direct
 product construction behind Theorem \ref{cpsp3}. Consider three groups $\sfM $, $\sfN $ and $\sfH $ with group actions ${}^\sfH \sfN $ and
${}^{\sfN \rtimes \sfH }\sfM $; then there are also group actions
${}^\sfN \sfM $ and ${}^\sfH \sfM $. The associativity
of the triple semi-direct product construction is then easily established
through
\begin{Proposition}\label{MNH}
Let $\sfM $, $\sfN $ and $\sfH $ be groups with group actions ${}^\sfH \sfM $, ${}^\sfH \sfN $ and
${}^\sfN \sfM $ satisfying the compatibility conditions
\[
{}^h\big({}^nm\big) = {}^{({}^hn)}\big({}^hm\big)
\]
for all $m\in \sfM $, $n\in \sfN $ and $h\in \sfH $. Then there exists a group
action ${}^{\sfN \rtimes \sfH }\sfM $ defined by ${}^{(n,h)}m={}^n\big({}^hm\big)$, and a
group action ${}^\sfH (\sfM \rtimes \sfN )$ defined by ${}^h(m,n)=\big({}^hm,{}^hn\big)$,
which together satisfy the associativity property
\[
\sfM \rtimes(\sfN \rtimes \sfH ) = (\sfM \rtimes \sfN )\rtimes \sfH .
\]
\end{Proposition}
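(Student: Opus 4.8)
The plan is to verify directly that the two prescribed maps are group actions by automorphisms, and that the two resulting iterated semi-direct products coincide as sets with matching multiplications; the compatibility condition ${}^h\big({}^nm\big)={}^{({}^hn)}\big({}^hm\big)$ is the single ingredient doing all the work in each step.

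First I would check that ${}^{(n,h)}m:={}^n\big({}^hm\big)$ defines an action of $\sfN\rtimes\sfH$ on $\sfM$ by automorphisms. Each map $m\mapsto{}^n\big({}^hm\big)$ is an automorphism of $\sfM$, being a composite of the automorphisms ${}^h$ and ${}^n$, and $(e_\sfN,e_\sfH)$ clearly acts trivially. For the homomorphism property, recall $(n,h)(n',h')=\big(n\,{}^hn',hh'\big)$, so ${}^{(n,h)(n',h')}m={}^{n\,{}^hn'}\big({}^{hh'}m\big)={}^n\big({}^{{}^hn'}\big({}^{hh'}m\big)\big)$ using that ${}^\sfN\sfM$ is an action. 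On the other hand ${}^{(n,h)}\big({}^{(n',h')}m\big)={}^n\big({}^h\big({}^{n'}\big({}^{h'}m\big)\big)\big)$, and applying the compatibility condition with $\tilde m={}^{h'}m$ in the role of $m$ gives ${}^h\big({}^{n'}\big({}^{h'}m\big)\big)={}^{({}^hn')}\big({}^{hh'}m\big)$; the two expressions therefore agree.

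Next I would check that ${}^h(m,n):=\big({}^hm,{}^hn\big)$ defines an action of $\sfH$ on $\sfM\rtimes\sfN$, where $\sfN$ acts on $\sfM$ through the given ${}^\sfN\sfM$. The key point is that $(m,n)\mapsto\big({}^hm,{}^hn\big)$ respects the product $(m,n)(m',n')=\big(m\,{}^nm',nn'\big)$: its image is $\big({}^hm\cdot{}^h\big({}^nm'\big),{}^hn\cdot{}^hn'\big)$, whereas the product of the images is $\big({}^hm\cdot{}^{({}^hn)}\big({}^hm'\big),{}^hn\cdot{}^hn'\big)$, and these agree exactly by the compatibility condition. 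Bijectivity is witnessed by the action of $h^{-1}$, so each such map is an automorphism, and $h\mapsto{}^h(-)$ is a homomorphism into $\Aut(\sfM\rtimes\sfN)$ because ${}^\sfH\sfM$ and ${}^\sfH\sfN$ are actions.

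Finally, to obtain the associativity statement I would simply compare multiplication formulas. In $\sfM\rtimes(\sfN\rtimes\sfH)$, with the action from the first step, $\big(m,(n,h)\big)\big(m',(n',h')\big)=\big(m\cdot{}^n\big({}^hm'\big),(n\,{}^hn',hh')\big)$; in $(\sfM\rtimes\sfN)\rtimes\sfH$, with the action from the second step, $\big((m,n),h\big)\big((m',n'),h'\big)=\big((m,n)\,{}^h(m',n'),hh'\big)=\big((m\cdot{}^n\big({}^hm'\big),n\,{}^hn'),hh'\big)$. Under the evident set bijection $\big(m,(n,h)\big)\leftrightarrow\big((m,n),h\big)$ these products coincide, yielding the claimed isomorphism. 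There is no serious obstacle: the whole content is the observation that the compatibility hypothesis is precisely what makes both ${}^{\sfN\rtimes\sfH}\sfM$ a well-defined action and ${}^h(-)$ an automorphism of $\sfM\rtimes\sfN$, after which everything reduces to routine bookkeeping about composition of automorphisms and comparison of the two product formulas.
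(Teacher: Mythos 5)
Your verification is correct: both prescribed maps are checked to be actions by automorphisms using exactly the compatibility condition, and the two iterated semi-direct product multiplications are shown to coincide under the evident bijection. The paper offers no proof of Proposition~\ref{MNH} (it is stated as "easily established"), and your direct computation is precisely the routine argument the authors intend.
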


\subsection{Pontryagin duality and Fourier transform}

If $\sfN $ is a locally compact {\sl abelian} group we denote by $\widehat{\sfN }$
its Pontryagin dual, i.e., the set of characters $\chi\colon \sfN \to \sfU(1)$,
which is also a locally compact abelian group (with the compact-open
topology and with the pointwise multiplication). For example, if
$\sfN =\real^d$ then $\widehat{\sfN }=\real^d$ and the characters are given by
$\chi_p(x)=\e^{2\pi\ii \langle p, x\rangle}$ for $x\in \sfN $ and
$p\in\widehat{\sfN }$. The Pontryagin duality theorem states that there is a
canonical isomorphism \smash{$\widehat{\widehat{\sfN }}\simeq \sfN $}, where
$n\in\sfN$ is associated to the character $\chi\mapsto\chi(n)$ on
\smash{$\widehat\sfN$}.

The Fourier transform shows that the group $C^*$-algebra $C^*(\sfN)$ is
isomorphic to $C_0\big(\widehat \sfN\big)$:
Given a Haar measure $\mu_{\sfN }$ on
$\sfN $, the Fourier transform $\mcF(f)$ of $f\in
C_{\rm c}(\sfN )$ is defined by
\[
\mcF(f)(\chi):=\int_{{\sfN }} {f}(n)
\chi(n) \, \dd\mu_{{\sfN }}(n)
\]
for $\chi\in \widehat \sfN $. It sends
 the convolution product of functions in $C^*(\sfN )$
to the pointwise product of functions in $C(\widehat{\sfN })$:
\[
\mcF({f}\star{f'}) = \mcF({f})\mcF({f'}) ,
\]
and extends to an isomorphism~\cite[Proposition~3.1]{Williams2007}
\begin{equation}\label{FTI}
{\cal{F}}\colon \ C^*(\sfN )
\xrightarrow{ \ \simeq \ } C_0\big(\widehat{\sfN }\big) ,
\end{equation}
where $C_0\big(\widehat \sfN \big)$ is the algebra of functions on $\widehat \sfN $ vanishing at
infinity. For $\sfN $ separable, Hausdorff and locally compact,
$C_0\big(\widehat{\sfN }\big) $ is indeed a $C^*$-algebra.

Given a continuous left group action $\varphi\colon \sfH \to\Aut(\sfN )$, which we
 also denote as before by $\varphi_h(n)={}^hn$, consider the induced action
 $\beta\colon \sfH \to\Aut\big(C^*(\sfN )\big)$ as defined in Theorem~\ref{cpsp2}.
There is also an induced left action $\widehat{\varphi} \colon \sfH \to\Aut\big(\widehat{\sfN }\big)$ defined by pullback:
$\big(\widehat{\varphi}_h\chi\big)(n):=\chi\big({}^{h^{-1}}n\big)$, together with its pullback
$\widehat{\varphi}^{\,*}\colon \sfH \to\Aut\big(C_0\big(\widehat{\sfN }\big)\big)$ defined by
$\big(\widehat{\varphi}^{\,*}_h\widehat f\,\big)(\chi)=\widehat
f \big(\widehat{\varphi}_{h^{-1}}\chi\big)$ for all $h\in\sfH$, $\widehat f\in
C_0\big(\widehat \sfN \big)$ and $\chi\in\widehat \sfN$.
The Fourier transform isomorphism \eqref{FTI} then extends to the isomorphism
\begin{Proposition} \label{prop:Pontcp}
If $\sfN$ is a locally compact abelian group and
$\varphi\colon \sfH\to\Aut(\sfN)$ is a continuous action of a locally
compact group $\sfH$ on $\sfN$, then
\[
\quad C^*(\sfN )\rtimes_{\beta}\sfH \simeq C_0\big(\widehat \sfN
\big)\rtimes_{{\widehat\varphi}^{\,*}}\sfH .
\]
\end{Proposition}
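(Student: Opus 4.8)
The plan is to prove the proposition by upgrading the Fourier transform isomorphism \eqref{FTI} to an \emph{equivalence} of the two $C^*$-dynamical systems $\big(C^*(\sfN),\sfH,\beta\big)$ and $\big(C_0\big(\widehat\sfN\big),\sfH,\widehat{\varphi}^{\,*}\big)$ in the sense of Section~\ref{sec:NAdef}, and then to invoke the standard fact that equivalent $C^*$-dynamical systems have isomorphic crossed products. So the whole statement reduces to the single intertwining identity $\mcF\circ\beta_h=\widehat{\varphi}^{\,*}_h\circ\mcF$ for all $h\in\sfH$.

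First I would write both actions out on the dense $*$-subalgebras. By Theorem~\ref{cpsp2}, $\big(\beta_h\ell\big)(n)=\sigma_\sfH(h)^{-1}\,\ell\big({}^{h^{-1}}n\big)$ for $\ell\in C_{\rm c}(\sfN)$, while by construction $\big(\widehat{\varphi}^{\,*}_h\widehat f\,\big)(\chi)=\widehat f\big(\widehat{\varphi}_{h^{-1}}\chi\big)$ with $\big(\widehat{\varphi}_{h^{-1}}\chi\big)(n)=\chi\big({}^h n\big)$. Then I would compute, for $\ell\in C_{\rm c}(\sfN)$ and $\chi\in\widehat\sfN$, that $\mcF\big(\beta_h\ell\big)(\chi)=\sigma_\sfH(h)^{-1}\int_\sfN \ell\big({}^{h^{-1}}n\big)\,\chi(n)\,\dd\mu_\sfN(n)$, and apply the defining relation \eqref{sigmaI} of $\sigma_\sfH$ to the integrand $F(n)=\ell\big({}^{h^{-1}}n\big)\,\chi(n)$ — for which $F\big({}^h n\big)=\ell(n)\,\chi\big({}^h n\big)$ because $\varphi$ is a group homomorphism — to rewrite the integral as $\sigma_\sfH(h)\int_\sfN \ell(n)\,\chi\big({}^h n\big)\,\dd\mu_\sfN(n)$. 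The prefactor $\sigma_\sfH(h)^{-1}$ then cancels, leaving $\mcF\big(\beta_h\ell\big)(\chi)=\int_\sfN \ell(n)\,\chi\big({}^h n\big)\,\dd\mu_\sfN(n)=\mcF(\ell)\big(\widehat{\varphi}_{h^{-1}}\chi\big)=\big(\widehat{\varphi}^{\,*}_h\mcF(\ell)\big)(\chi)$, which is exactly the desired identity on $C_{\rm c}(\sfN)$. Since $\mcF$ is continuous and each $\beta_h$ and $\widehat{\varphi}^{\,*}_h$ is isometric, the identity extends to the completions $C^*(\sfN)$ and $C_0\big(\widehat\sfN\big)$.

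With this in hand, $\big(\mcF,{\rm id}_\sfH\big)$ is an equivalence of $C^*$-dynamical systems, and it remains to recall that such an equivalence induces a canonical isomorphism of crossed products: one transports any covariant representation $(\Pim,U)$ of $\big(C_0\big(\widehat\sfN\big),\sfH,\widehat{\varphi}^{\,*}\big)$ to the covariant pair $(\Pim\circ\mcF,U)$ of $\big(C^*(\sfN),\sfH,\beta\big)$, checks using the intertwining relation that this is again covariant, observes that the integrated forms correspond under the bijection $C_{\rm c}\big(\sfH,C^*(\sfN)\big)\to C_{\rm c}\big(\sfH,C_0\big(\widehat\sfN\big)\big)$ induced by $\mcF$, and passes to the universal norms to obtain the $*$-isomorphism $C^*(\sfN)\rtimes_\beta\sfH\simeq C_0\big(\widehat\sfN\big)\rtimes_{\widehat{\varphi}^{\,*}}\sfH$ (see~\cite{Williams2007}). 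This proves the proposition.

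The only step needing genuine care is the bookkeeping of $\sigma_\sfH$ in the change of variables inside the Fourier integral: one must verify that the factor $\sigma_\sfH(h)^{-1}$ that Theorem~\ref{cpsp2} builds into $\beta$ is precisely the one dictated by \eqref{sigmaI} for the substitution $n\mapsto{}^h n$, so that $\mcF$ is genuinely $\sfH$-equivariant rather than equivariant only up to a scalar cocycle. Once the two actions are written out correctly this cancellation is immediate, and everything else — covariance of $(\Pim\circ\mcF,U)$, agreement of integrated forms, and passage to completions — is routine.
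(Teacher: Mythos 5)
Your proposal is correct and follows essentially the same route as the paper: both prove that the Fourier transform intertwines $\beta$ with $\widehat{\varphi}^{\,*}$, with the factor $\sigma_\sfH(h)^{-1}$ cancelling via the defining relation \eqref{sigmaI} applied to $F(n)=\ell\big({}^{h^{-1}}n\big)\,\chi(n)$, and then conclude by the standard fact that equivalent $C^*$-dynamical systems have isomorphic crossed products. The extra detail you supply on transporting covariant representations is exactly what the paper delegates to the cited reference.
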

\begin{proof}
We show that the triples $(C^*(\sfN ),\sfH ,\beta)$ and $\big(C_0\big(\widehat
\sfN \big),\sfH ,\widehat{\varphi}{}^{\,*}\big)$ are equivalent dynamical systems,
see~\cite[Example~3.16]{Williams2007}.
For this, we prove that the Fourier transform
\eqref{FTI} is $\sfH $-equivariant with respect to the $\sfH$-actions $\beta$ and $
\widehat\varphi^{\,*}$. For $h\in\sfH$, $f\in C_{\rm c}(\sfN)$ and
$\chi\in\widehat\sfN$ we compute
\begin{align*}
\mcF\big(\beta_h(f)\big)(\chi)&=\int_\sfN \beta_h(f)(n) \chi(n) \,
 \dd\mu_\sfN (n)
 =\sigma_\sfH(h)^{-1} \int_\sfN f\big({}^{h^{-1\!}}n\big) \chi(n)
 \, \dd\mu_\sfN (n) \nn\\
&= \sigma_\sfH(h)^{-1} \int_\sfN
 f\big({}^{h^{-1\!}}n\big) \chi\big({\phantom{I^I\!\!\!\!\!}}^{h\!}\big({}^{h^{-1}\!}n\big)\big)
 \, \dd\mu_\sfN (n)
 = \int_\sfN f (n) \chi\big({}^h n\big) \, \dd\mu_\sfN (n) \nn\\
&=\mcF(f)\big(\widehat\varphi_{h^{-1}}\chi\big) =\big(\widehat\varphi_h^{\,*}\mcF(f)\big)(\chi) ,
\end{align*}
where in the fourth equality we used \eqref{sigmaI} with
$F\big({}^{h^{-1}\!}n\big)
=f\big({}^{h^{-1\!}}n\big) \chi\big({\phantom{I^I\!\!\!\!\!}}^{h\!}\big({}^{h^{-1}\!}n\big)\big)$.
\end{proof}

Replacing $\sfN$ with $\widehat \sfN $ in
Proposition~\ref{prop:Pontcp} we also obtain the isomorphism
\begin{gather}
\label{hathatduality}
C^*\big({\widehat \sfN }\big)\rtimes_{\widehat\beta}\sfH \simeq
C_0\big({\widehat
{\widehat \sfN }}\big) \rtimes_{{\widehat {\widehat\varphi}}{}^{\,*}}\sfH
 \simeq
C_0(\sfN )\rtimes_{{\varphi}^*}\sfH ,
\end{gather}
where $\widehat\beta\colon \sfH\to\Aut\big(C^*(\widehat\sfN)\big)$ is defined by
$\widehat\beta_h(\widehat
f\,)(\chi)=\widehat\sigma_\sfH(h)^{-1} \widehat
f \big(\widehat\varphi_{h^{-1}\!}\chi\big)$ for $h\in\sfH$, $\widehat f\in
C_{\rm c}(\widehat \sfN)$ and $\chi\in\widehat \sfN$, with $\widehat\sigma_\sfH\colon
\sfH \to \real^+$ defined as in~\eqref{sigmaI} but using the dual group
$\widehat \sfN $ instead of $\sfN$, and in the final isomorphism we
used Pontryagin duality \smash{$\widehat{\widehat \sfN } \simeq \sfN$}.

Another important property of crossed products is \emph{Takai
 duality}~\cite[Section~7.1]{Williams2007}. If $\sfG$ is a~locally compact abelian group and
$(\alg,\sfG,\alpha)$ is a $C^*$-dynamical system, then
$(\alg\rtimes_\alpha\sfG,\widehat\sfG,\widehat\alpha)$ is a~$C^*$-dynamical system, where
\[
\widehat\alpha\colon \ \widehat\sfG\longrightarrow\Aut(\alg\rtimes_\alpha\sfG)
 , \qquad \chi\longmapsto\widehat\alpha_\chi
\]
is defined by $\widehat\alpha_\chi(f)(\gamma):=\overline{\chi(\gamma)}
\, f(\gamma)$ for all $f\in C_{\rm c}(\sfG,\alg)$.
\begin{Theorem}[Takai duality] \label{thm:Takaiduality}
Let $(\alg,\sfG,\alpha)$ be a $C^*$-dynamical system where $\sfG$ is a
locally compact abelian group. Then there is an isomorphism of
$C^*$-algebras
\[
(\alg\rtimes_\alpha\sfG)\rtimes_{\widehat\alpha}\widehat\sfG \simeq
 \alg\otimes \CK\big(\Leb^2(\sfG)\big) .
\]
\end{Theorem}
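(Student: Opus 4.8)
The plan is to deduce Takai duality from the single elementary fact $C_0(\sfG)\rtimes_{\lt}\sfG\simeq\CK\big(\Leb^2(\sfG)\big)$ by peeling off the coefficient algebra $\alg$ and the action $\alpha$ through a chain of natural isomorphisms
\begin{align*}
(\alg\rtimes_\alpha\sfG)\rtimes_{\widehat\alpha}\widehat\sfG
&\simeq (\alg\otimes C_0(\sfG))\rtimes_{\beta}\sfG \\
&\simeq (\alg\otimes C_0(\sfG))\rtimes_{\mathrm{id}_\alg\otimes\lt}\sfG \\
&\simeq \alg\otimes\big(C_0(\sfG)\rtimes_{\lt}\sfG\big) \\
&\simeq \alg\otimes\CK\big(\Leb^2(\sfG)\big) ,
\end{align*}
where $\lt$ is the left-translation action of $\sfG$ on $C_0(\sfG)$, and $\beta_t(F)(s)=\alpha_t\big(F(s-t)\big)$ (writing $\sfG$ additively) is the ``diagonal'' action on $C_0(\sfG,\alg)\simeq\alg\otimes C_0(\sfG)$ combining $\alpha$ with $\lt$. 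Because $\sfG$ is abelian, all modular functions are trivial and the tensor products are unambiguous ($C_0(\sfG)$ being nuclear), which keeps the bookkeeping manageable.

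For the first isomorphism I would pass to the dense convolution $*$-subalgebras: an element of $(\alg\rtimes_\alpha\sfG)\rtimes_{\widehat\alpha}\widehat\sfG$ is densely a function in $C_{\rm c}\big(\widehat\sfG\times\sfG,\alg\big)$, whose product is obtained by composing the convolution \eqref{eq:convproduct} over $\sfG$ with the convolution over $\widehat\sfG$ together with the dual action $\widehat\alpha_\chi(f)(\gamma)=\overline{\chi(\gamma)}\,f(\gamma)$. Applying the Fourier transform \eqref{FTI} in the $\widehat\sfG$-variable --- the same manoeuvre used in the proof of Proposition~\ref{prop:Pontcp} --- and invoking Pontryagin duality $\widehat{\widehat\sfG}\simeq\sfG$ identifies this with $C_{\rm c}\big(\sfG\times\sfG,\alg\big)$, and a change of variables in the two $\sfG$-coordinates turns the product into the convolution of $(\alg\otimes C_0(\sfG))\rtimes_\beta\sfG$: the new group coordinate is the Pontryagin dual of the old $\widehat\sfG$, while the dual action $\widehat\alpha$ becomes a genuine translation twisted by $\alpha$. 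This is the coefficient-algebra generalization of the Fourier/Pontryagin manipulations already carried out in Proposition~\ref{prop:Pontcp} and in deriving \eqref{hathatduality}, and it is where essentially all of the index chasing lives; I expect this bookkeeping to be the main obstacle, although it is routine once the correct change of variables has been pinned down.

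The second isomorphism is an \emph{untwisting}: the map $V\colon C_0(\sfG,\alg)\to C_0(\sfG,\alg)$, $(VF)(s)=\alpha_{-s}\big(F(s)\big)$, is a $*$-automorphism --- strong continuity of $s\mapsto\alpha_s$ makes $VF$ again continuous and vanishing at infinity --- and a one-line computation shows that $V$ intertwines $\beta_t$ with $(\mathrm{id}_\alg\otimes\lt)_t$ for all $t\in\sfG$, so the two dynamical systems are equivalent and have isomorphic crossed products. The third isomorphism is a fibrewise version of the remark in Section~\ref{sec:NAdef} that a trivial action yields $\balg\rtimes\sfG\simeq\balg\otimes C^*(\sfG)$, i.e.\ that $(\alg\otimes\balg)\rtimes_{\mathrm{id}_\alg\otimes\delta}\sfG\simeq\alg\otimes(\balg\rtimes_\delta\sfG)$ when the action is trivial on the first tensor factor.

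Finally, for the base case $C_0(\sfG)\rtimes_{\lt}\sfG\simeq\CK\big(\Leb^2(\sfG)\big)$ I would take the regular covariant representation $(\Pim,U)$ of the commutative system $(C_0(\sfG),\sfG,\lt)$ on $\hil=\Leb^2(\sfG)$, with $\Pim$ the multiplication representation of $C_0(\sfG)$ and $U$ the left regular representation of $\sfG$; its integrated form sends $F\in C_{\rm c}(\sfG\times\sfG)$ to the integral operator on $\Leb^2(\sfG)$ whose kernel is obtained from $F$ by a change of variables, and as $F$ runs over $C_{\rm c}(\sfG\times\sfG)$ these kernels run over a dense $*$-subalgebra of $\CK\big(\Leb^2(\sfG)\big)$; since the image of a $C^*$-algebra under a $*$-homomorphism is closed, the image of $C_0(\sfG)\rtimes_{\lt}\sfG$ is all of $\CK\big(\Leb^2(\sfG)\big)$. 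Injectivity is the only remaining point: as $\sfG$ is abelian, hence amenable, the full and reduced crossed products coincide, and the regular representation induced from the faithful multiplication representation of $C_0(\sfG)$ is faithful. (One can alternatively avoid the chain altogether by realizing the double crossed product at once on $\hil_0\otimes\Leb^2(\sfG)$, for $\pi\colon\alg\to{\rm B}(\hil_0)$ faithful, via the regular representation of $\alg\rtimes_\alpha\sfG$ together with the modulation unitaries $(W_\chi\xi)(s)=\overline{\chi(s)}\,\xi(s)$ implementing $\widehat\alpha$, and checking directly --- using the Weyl commutation relations between $U$ and $W$ --- that the resulting faithful representation has range $\pi(\alg)\otimes\CK\big(\Leb^2(\sfG)\big)\cong\alg\otimes\CK\big(\Leb^2(\sfG)\big)$.)
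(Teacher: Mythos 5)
The paper does not prove this theorem at all --- it simply cites \cite[Section~7.1]{Williams2007} --- and your argument is precisely the standard proof given in that reference: Fourier-transform the $\widehat\sfG$-variable to trade the double crossed product for $(\alg\otimes C_0(\sfG))\rtimes_\beta\sfG$, untwist the diagonal action with $V(F)(s)=\alpha_{-s}(F(s))$, pull $\alg$ out of the crossed product (legitimate since $C_0(\sfG)$ is nuclear), and identify $C_0(\sfG)\rtimes_\lt\sfG$ with $\CK\big(\Leb^2(\sfG)\big)$ via the regular representation. Your outline is correct, including the two points most often glossed over --- the intertwining computation for $V$ and the use of amenability of $\sfG$ to get faithfulness of the regular representation in the base case --- so this is essentially the same route as the paper's source, fleshed out to the level of a proof sketch.
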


\subsection{Morita equivalence and Green's theorem}\label{sec:Morita}

Crossed products of algebras provide a host of examples of dualities
which come in the form of various levels of strong and weak
equivalences of algebras, see, e.g.,~\cite{Brodzki2006}. The most
primitive form of such dualities is provided by (strong) Morita
equivalence~\cite{Rieffel1974}. A bimodule for a pair of algebras $\alg$ and $\balg$ is a
vector space $\Mcal$ which is simultaneously
 a left $\alg$-module and a right $\balg$-module, where the left action
 of $\alg$ commutes with the right action of $\balg$: $(a\cdot\xi)\cdot
 b=a\cdot (\xi\cdot b)$ for all $a\in \alg$, $b\in
 \balg$ and $\xi\in \Mcal$. If $\alg$ and $\balg$ are $C^*$-algebras,
 we say that a bimodule $\Mcal$ is an $\alg$--$\balg$ \emph{Morita
 equivalence bimodule} (or \emph{imprimitivity bimodule}) if it
is equipped with an $\alg$-valued inner product
${}_\alg\langle\,\cdot\,,\,\cdot\,\rangle$ and a $\balg$-valued inner
product $\langle\,\cdot\,,\,\cdot\,\rangle_\balg$ satisfying the
associativity condition
\[
{}_\alg\langle \psi,\phi\rangle \cdot \xi =
\psi\cdot\langle\phi,\xi\rangle_\balg ,
\]
for all $\psi,\phi,\xi\in\Mcal$, under which $\Mcal$ is complete in
the norm closures, and such that the ideal ${}_\alg\langle
\Mcal,\Mcal\rangle$ is dense in $\alg$ and $\langle
\Mcal,\Mcal\rangle_\balg$ is dense in $\balg$. The bimodule $\Mcal$ establishes a
\emph{Morita equivalence} between the algebras $\alg$ and $\balg$, and
in this case we write $\alg\ \mbf\sim_{\text{\tiny M}}\ \balg$.

Morita equivalent $C^*$-algebras have equivalent categories of
nondegenerate $*$-representations: If $\Pim_\balg\colon \balg\to{\rm B}(\hil_\balg)$ is a
representation of $\balg$ on a Hilbert space $\hil_\balg$, then we can
construct another Hilbert space
\[
\hil_\alg:=\Mcal\otimes_\balg\hil_\balg
\]
which is the quotient of the tensor product $\Mcal\otimes\hil_\balg$ by the relation $(\xi\cdot
b)\otimes\psi-\xi\otimes\Pim_\balg(b)\psi=0$ identifying the
$\balg$-actions for
$\xi\in\Mcal$, $b\in\balg$ and $\psi\in\hil_\balg$. The inner product
on $\hil_\alg$ is given by
\[
\big\langle\xi\otimes_\balg\psi\big|
\xi'\otimes_\balg\psi'\big\rangle_{\hil_\alg} :=
\big\langle\psi\big|\Pim_\balg\big(\langle\xi,\xi'\rangle_\balg
\big)\psi'\big\rangle_{\hil_\balg} ,
\]
and a representation $\Pim_\alg\colon \alg\to\rmB(\hil_\alg)$ of the algebra
$\alg$ is defined by
\[
\Pim_\alg(a)(\xi\otimes_\balg\psi) = (a\cdot\xi) \otimes_\balg\psi
\]
for $a\in\alg$ and $\xi\otimes_\balg\psi\in\hil_\alg$; this
representation is unitary equivalent to the representation
$\Pim_\balg$. Conversely, starting with a representation of $\alg$, we
can use a conjugate $\balg$--$\alg$ equivalence bimodule
$\overline{\Mcal}$ to construct a unitary equivalent representation of
$\balg$; then there are surjective bimodule homomorphisms
$\Mcal\otimes_\balg\overline{\Mcal}\to\alg$ and
$\overline{\Mcal}\otimes_\alg\Mcal\to\balg$ which satisfy a certain
transitivity law. As a particular consequence of this equivalence, Morita
equivalent algebras have homeomorphic spectra and isomorphic K-theory groups.

\begin{Example}[noncommutative two-tori]\label{ex:NCT2Morita}
A famous example of Morita equivalence in both mathematics and string
theory is provided by the
noncommutative tori ${\sf A}_\theta=\torus_\theta^2$ from
Example~\ref{ex:NCtori}. Firstly, notice from \eqref{eq:C*UV} that
changing the coset representative $\theta\in\real/\Z$ yields an identical
algebra: $\A_{\theta+m}=\A_{\theta}$ for all $m\in\Z$. Secondly, there
is an obvious
$C^*$-algebra isomorphism $\A_{-\theta}\simeq\A_\theta$ obtained by
interchanging the two generators $U$ and $V$. The converse
is also true~\cite{Rieffel1981,Rieffel1982}:
$\A_{\theta'}\simeq\A_{\theta}$ if and only if $\theta'=\theta \ {\rm
 mod}~1$. More generally, two irrational rotation $C^*$-algebras
$\A_\theta$ and $\A_{\theta'}$ are
Morita equivalent if and only if $\theta$ and $\theta'$ lie in the
same orbit under the action of $\sfGL(2,\Z)$ by fractional linear
transformations
\[
\theta'={\tt M}[\theta]:=\frac{a \theta+b}{c \theta+d} \qquad
\mbox{for} \quad {\tt M}=\bigg(\begin{matrix} a & b \\ c &
 d \end{matrix}\bigg) \in \sfGL(2,\Z) .
\]
The explicit Morita equivalence bimodules can be found
in~\cite{Rieffel1981}. On the other hand, the rational rotation
algebras $\A_\theta$ are all Morita equivalent to the commutative
algebra $C\big(\torus^2\big)$ of functions on the
two-torus~\cite{Rieffel1982}.
\end{Example}

\begin{Example}[noncommutative $d$-tori]
\label{ex:NCTdMorita}
The Morita equivalences of Example~\ref{ex:NCT2Morita} generalize to
the higher-dimensional noncommutative tori $\A_\Theta=\torus^d_\Theta$
from Example~\ref{ex:NCtorid} in the following
way~\cite{Rieffel1998}. First of all, the algebra $\A_\Theta$ is
unchanged if the matrix $\Theta$ is written in another basis of~$\Z^d$: if $B\in\sfGL(d,\Z)$ with transpose $B^{\rm t}$, then there is
a $C^*$-algebra isomorphism
$\A_{B^{\rm t}\,\Theta\,B}\simeq\A_\Theta$. More generally, consider the set of real
skew-symmetric $d{\times}d$ matrices $\Theta$ whose orbits
$\mbf M[\Theta]$ are defined for all $\mbf M\in{\sf SO}(d,d;\Z)$, where
\[
\mbf M[\Theta] = (A\,\Theta+B)\,(C\,\Theta+D)^{-1} \qquad \mbox{for} \quad
\mbf M=\left(\begin{matrix} A & B \\ C & D \end{matrix}\right) \in
{\sf SO}(d,d;\Z) ,
\]
and $A$, $B$, $C$ and $D$ are $d{\times}d$ block matrices satisfying
\[
A^{\rm t} C+C^{\rm t} A = 0 = B^{\rm t} D+D^{\rm t} B \qquad
\mbox{and} \qquad A^{\rm t} D+C^{\rm t} B=\unit_d .
\]
The set of all such matrices is dense in the space of all
skew-symmetric real $d{\times}d$ matrices, and there is a Morita
equivalence
\[
\A_{\mbf M[\Theta]} \ \mbf\sim_{\text{\tiny M}} \ \A_\Theta .
\]
However, for $d>2$ the converse is not generally true: In fact, there
are algebras $\A_\Theta$ and $\A_{\Theta'}$ that are isomorphic (and
so Morita equivalent) but for which the matrices $\Theta$ and
$\Theta'$ do not belong to the same ${\sf SO}(d,d;\Z)$
orbit~\cite{Rieffel1998}.
\end{Example}

We will also need an equivariant version of Morita equivalence in
order to show that Morita equivalent algebras induce Morita
equivalent crossed products according
to~\cite[Section~5.4]{Echterhoff2017}
\begin{Theorem}\label{equivMorita}
Let $(\alg,\sfG, \alpha)$ and $(\balg,\sfG, \beta)$ be
$C^*$-dynamical systems such that
$\alg$ and $\balg$ are Morita equivalent. Then the crossed product $C^*$-algebras
$\alg\rtimes_\alpha \sfG$ and $\balg\rtimes_\beta \sfG$ are Morita equivalent if
there exists a $\sfG$-equivariant $\alg$--$\balg$ Morita equivalence bimodule $\Mcal$,
i.e., if there is a strongly continuous action $U\colon \sfG\to \Aut(\Mcal)$ of $\sfG$ on an
$\alg$--$\balg$ Morita equivalence bimodule $\Mcal$ such that
\[
U_\gamma(a\cdot \xi)=\alpha_\gamma(a)\cdot
U_\gamma(\xi) \qquad \mbox{and} \qquad U_\gamma(\xi\cdot b)=U_\gamma(\xi)\cdot
\beta_\gamma(b) ,
\]
and
\[
{}_{\alg}\langle U_\gamma(\xi),U_\gamma(\xi')\rangle=\alpha_\gamma\big({}_\alg\langle
\xi,\xi'\rangle\big) \qquad \mbox{and} \qquad \langle
U_\gamma(\xi),U_\gamma(\xi')\rangle_\balg =\beta_\gamma\big(\langle
\xi,\xi'\rangle_\balg\big) ,
\]
for
all $\gamma\in \sfG$, $\xi,\xi'\in \Mcal$, $a\in \alg$ and $b\in \balg$.
\end{Theorem}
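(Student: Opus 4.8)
The plan is to manufacture an imprimitivity bimodule between $\alg\rtimes_\alpha\sfG$ and $\balg\rtimes_\beta\sfG$ directly from $\Mcal$, following the very pattern that produces the crossed products themselves: one replaces the coefficient algebra by the bimodule and convolves. Concretely, I would start from the space $C_{\rm c}(\sfG,\Mcal)$ of compactly supported continuous $\Mcal$-valued functions on $\sfG$, make it into a left $C_{\rm c}(\sfG,\alg)$-module and a right $C_{\rm c}(\sfG,\balg)$-module via the convolution-type formulas
\[
(f\cdot\xi)(\gamma)=\int_\sfG f(\gamma')\cdot U_{\gamma'}\big(\xi\big(\gamma'{}^{-1}\gamma\big)\big)\,\dd\mu_\sfG(\gamma') \qquad \mbox{and} \qquad (\xi\cdot g)(\gamma)=\int_\sfG \xi(\gamma')\cdot\beta_{\gamma'}\big(g\big(\gamma'{}^{-1}\gamma\big)\big)\,\dd\mu_\sfG(\gamma') ,
\]
for $f\in C_{\rm c}(\sfG,\alg)$, $g\in C_{\rm c}(\sfG,\balg)$ and $\xi\in C_{\rm c}(\sfG,\Mcal)$, and endow it with a $C_{\rm c}(\sfG,\alg)$-valued and a $C_{\rm c}(\sfG,\balg)$-valued pairing assembled pointwise from the two inner products on $\Mcal$ and the action $U$, e.g.
\[
{}_{C_{\rm c}(\sfG,\alg)}\langle\xi,\eta\rangle(\gamma)=\int_\sfG {}_\alg\big\langle\xi(\gamma')\,,\,U_\gamma\big(\eta\big(\gamma^{-1}\gamma'\big)\big)\big\rangle\,\dd\mu_\sfG(\gamma') ,
\]
and the mirror formula for $\langle\,\cdot\,,\,\cdot\,\rangle_{C_{\rm c}(\sfG,\balg)}$ built from $\langle\,\cdot\,,\,\cdot\,\rangle_\balg$ and $\beta$; here one must insert the modular function $\Delta_\sfG$ in exactly the positions forced by consistency with the $*$-operation $f\mapsto f^\dag$ on the convolution algebras.

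The bulk of the argument is then a chain of formal identities, each of which unwinds by Fubini and left-invariance of $\mu_\sfG$ to the corresponding identity in $\Mcal$ together with the three equivariance hypotheses $U_\gamma(a\cdot\xi)=\alpha_\gamma(a)\cdot U_\gamma(\xi)$, $U_\gamma(\xi\cdot b)=U_\gamma(\xi)\cdot\beta_\gamma(b)$, and the two inner-product intertwining relations for $U$. Explicitly I would check: (i) the left and right actions commute and are compatible with the convolution products \eqref{eq:convproduct} on $C_{\rm c}(\sfG,\alg)$ and $C_{\rm c}(\sfG,\balg)$; (ii) each pairing is additive, conjugate-linear in one slot and linear in the other, and satisfies ${}_{C_{\rm c}(\sfG,\alg)}\langle\xi,\eta\rangle^\dag={}_{C_{\rm c}(\sfG,\alg)}\langle\eta,\xi\rangle$ together with its $\balg$-sided analogue; (iii) the imprimitivity (associativity) identity ${}_{C_{\rm c}(\sfG,\alg)}\langle\xi,\eta\rangle\cdot\zeta=\xi\cdot\langle\eta,\zeta\rangle_{C_{\rm c}(\sfG,\balg)}$; and (iv) $f\star\,{}_{C_{\rm c}(\sfG,\alg)}\langle\xi,\eta\rangle={}_{C_{\rm c}(\sfG,\alg)}\langle f\cdot\xi,\eta\rangle$ and the symmetric relation on the $\balg$ side. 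This is exactly the same bookkeeping as in the proofs of Proposition~\ref{cpsp} and Theorem~\ref{cpsp2}, carried out with $\Mcal$-valued rather than algebra-valued functions, and no surprises are expected.

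The genuinely non-formal point, and the step I expect to be the main obstacle, is \emph{positivity}: one must show that ${}_{C_{\rm c}(\sfG,\alg)}\langle\xi,\xi\rangle$ is a positive element of the $C^*$-algebra $\alg\rtimes_\alpha\sfG$, and symmetrically for the $\balg$-valued pairing, so that completing $C_{\rm c}(\sfG,\Mcal)$ in $\|\xi\|^2:=\big\|{}_{C_{\rm c}(\sfG,\alg)}\langle\xi,\xi\rangle\big\|_{\alg\rtimes_\alpha\sfG}$ yields a genuine full right Hilbert $\balg\rtimes_\beta\sfG$-module on which $\alg\rtimes_\alpha\sfG$ acts by adjointable operators. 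I would establish this by passing to covariant representations: the equivariant bimodule $\Mcal$ (and its conjugate $\overline{\Mcal}$) transports each covariant representation of one $C^*$-dynamical system to a covariant representation of the other — the equivariant refinement of the Morita-invariance of representations recalled before Example~\ref{ex:NCT2Morita} — and tracking integrated forms through this transport realizes $\langle\xi,\xi\rangle$ as a manifestly positive operator in every representation; since the universal norm on $\alg\rtimes_\alpha\sfG$ is the supremum over covariant representations, positivity in $\alg\rtimes_\alpha\sfG$ follows. Fullness — density of ${}_{C_{\rm c}(\sfG,\alg)}\langle C_{\rm c}(\sfG,\Mcal),C_{\rm c}(\sfG,\Mcal)\rangle$ in $\alg\rtimes_\alpha\sfG$ and of the mirror ideal in $\balg\rtimes_\beta\sfG$ — follows from the fibrewise density of ${}_\alg\langle\Mcal,\Mcal\rangle$ in $\alg$ and $\langle\Mcal,\Mcal\rangle_\balg$ in $\balg$ by convolving with an approximate unit in $C_{\rm c}(\sfG)$ concentrated near the identity $e\in\sfG$. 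Assembling (i)--(iv) with positivity and fullness shows that the completion of $C_{\rm c}(\sfG,\Mcal)$ is an $(\alg\rtimes_\alpha\sfG)$--$(\balg\rtimes_\beta\sfG)$ imprimitivity bimodule, whence $\alg\rtimes_\alpha\sfG\ \mbf\sim_{\text{\tiny M}}\ \balg\rtimes_\beta\sfG$; for the full topological and $C^*$-algebraic details one may invoke~\cite[Section~5.4]{Echterhoff2017}.
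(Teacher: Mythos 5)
The paper does not prove this theorem: it is imported verbatim from the literature, with the proof delegated to \cite[Section~5.4]{Echterhoff2017}, so there is no internal argument to compare against. Your sketch reconstructs essentially the standard construction found there (originally due to Combes): complete $C_{\rm c}(\sfG,\Mcal)$ under convolution-type module actions and inner products patterned on \eqref{eq:convproduct}, with the three equivariance hypotheses entering exactly where you say they do, and your module/action formulas are the correct ones up to the modular-function bookkeeping you explicitly flag (harmless in this paper, where $\sfG$ is always $\R^n$ or $\Z^n$, hence unimodular). The one place you genuinely diverge from the cited source is the positivity step: the reference handles it via the linking algebra --- $\sfG$ acts on $L(\Mcal)=\bigl(\begin{smallmatrix}\alg & \Mcal\\ \overline{\Mcal} & \balg\end{smallmatrix}\bigr)$ by combining $\alpha$, $U$, $\beta$, and one identifies $L(\Mcal)\rtimes\sfG$ as the linking algebra of the completion of $C_{\rm c}(\sfG,\Mcal)$, which delivers positivity, the adjointability of the actions, and fullness simultaneously from the corner structure --- whereas you propose transporting covariant representations through $\Mcal$ and testing positivity against the universal norm. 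Both routes are valid; the linking-algebra argument is shorter and avoids having to verify separately that the representation transport intertwines integrated forms, while your route is more self-contained given the material the paper has already set up (covariant representations and the universal norm). I see no gap in the outline, only the deferred verification of the modular factors and of item (iv) in your list, which is routine.
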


In this paper, our main application of Morita equivalence will involve
Green's symmetric imprimitivity theorem. Let $X$ be a locally compact space, and let
$\sfH $ and $\sfK $ be locally compact groups with commuting free and
proper actions on the right and on the left on $X$, respectively. We can lift these
actions to left actions on $C_0(X)$ by defining
$({}^h\mathtt{f})(x)=\mathtt{f}\big(h^{-1}\cdot x\big)$ and $\big({}^k\mathtt{f}\big)(x)=\mathtt{f}(x\cdot k)$
for all $\mathtt{f}\in C_0(X)$, $x\in X$, $h\in \sfH$ and $k\in \sfK $. Commutativity
of the actions of~$\sfH $ and~$\sfK $ implies that there are well-defined
induced actions of~$\sfH $ and~$\sfK $
respectively on the quotient spaces $\sfK {\setminus} X$ and $X/\sfH $, and hence
respectively on the algebras $C_0(\sfK {\setminus} X)$ and
$C_0(X/\sfH )$ which we denote~$\rt$ and~$\lt$. Green's symmetric imprimitivity theorem then reads as~\cite[Corollary~4.10]{Williams2007}
\begin{Theorem} \label{thm:Green}
There is a Morita
equivalence of $C^*$-algebras
\begin{gather}\label{eq:Greenimprim}
C_0(\sfK {\setminus} X)\rtimes_\rt \sfH \ \mbf\sim_{\text{\rm \tiny M}} \ C_0(X/\sfH )\rtimes_\lt \sfK
\end{gather}
implemented by the Morita equivalence $($or imprimitivity$)$
bimodule $\Mcal$ which is the completion of $C_{\rm c}(X)$ with the
actions
\begin{gather*}
(a\cdot\xi)(x) =\int_\sfK\,a(k,x\cdot\sfH) \xi\big(k^{-1}\cdot x\big)
\Delta_\sfK(k)^{1/2} \, \dd\mu_\sfK(k) , \\
(\xi\cdot b)(x) =\int_\sfH \xi\big(x\cdot h^{-1}\big) b\big(h,\sfK\cdot x\cdot h^{-1}\big) \Delta_\sfH(h)^{-1/2} \, \dd\mu_\sfH(h) ,
\end{gather*}
for all $x\in X$, $a\in C_{\rm c}(\sfK\times X/\sfH)$, $b\in C_{\rm
 c}(\sfH\times \sfK {\setminus} X)$ and $\xi\in C_{\rm
 c}(X)$, and the inner products
\begin{gather*}
{}_\alg\langle\xi,\xi'\rangle(k,x\cdot\sfH) = \Delta_\sfK(k)^{-1/2}
\int_\sfH \xi(x\cdot h) \overline{\xi'\big(k^{-1}\cdot x\cdot h\big)} \,
\dd\mu_\sfH(h) , \nonumber\\
\langle\xi,\xi'\rangle_\balg(h,\sfK\cdot x) =
\Delta_\sfH(h)^{-1/2} \int_\sfK \overline{\xi\big(k^{-1}\cdot x\big)}
\xi'\big(k^{-1}\cdot x\cdot h\big) \, \dd\mu_\sfK(k)  ,
\end{gather*}
for all $x\in X$, $h\in\sfH$, $k\in\sfK$ and $\xi,\xi'\in C_{\rm c}(X)$.
\end{Theorem}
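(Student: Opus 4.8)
The plan is to follow the classical Rieffel--Green construction: realise $\Mcal$ as the completion of $\Mcal_0:=C_{\rm c}(X)$ equipped with the stated operations, and verify axiom by axiom that $\Mcal_0$ is a pre-imprimitivity bimodule in the sense of Section~\ref{sec:Morita} over $\alg:=C_0(X/\sfH)\rtimes_\lt\sfK$ acting on the left (this being where the inner product ${}_\alg\langle\,\cdot\,,\,\cdot\,\rangle$ takes values) and $\balg:=C_0(\sfK\setminus X)\rtimes_\rt\sfH$ acting on the right. Completing $C_{\rm c}(X)$ in the norm $\|\xi\|^2:=\|{}_\alg\langle\xi,\xi\rangle\|_\alg=\|\langle\xi,\xi\rangle_\balg\|_\balg$ (the two expressions agreeing once the imprimitivity relation is in place) then produces the bimodule implementing \eqref{eq:Greenimprim}.

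First I would check that the defining formulas make sense. Since the left $\sfK$-action and right $\sfH$-action on $X$ are \emph{proper}, for $\xi,\xi'\in C_{\rm c}(X)$ and $a,b$ of compact support the integrands for $a\cdot\xi$, $\xi\cdot b$, ${}_\alg\langle\xi,\xi'\rangle$ and $\langle\xi,\xi'\rangle_\balg$ are, in the variable being integrated, supported on a compact subset of $\sfH$ or $\sfK$; hence the integrals converge and, by a routine estimate, the outputs lie again in $C_{\rm c}(X)$, in $C_{\rm c}(\sfK\times X/\sfH)\subset\alg$, and in $C_{\rm c}(\sfH\times\sfK\setminus X)\subset\balg$ respectively. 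Commutativity of the two actions on $X$ immediately gives that the left $\alg$-action and the right $\balg$-action on $\Mcal_0$ commute. Next come the algebraic identities: that $a\mapsto(a\cdot)$ is a $*$-homomorphism of $C_{\rm c}(\sfK\times X/\sfH)$ (with the convolution product \eqref{eq:convproduct} and $*$-structure of $\alg$) into the adjointable operators on $\Mcal_0$, likewise on the $\balg$-side; that the inner products are sesquilinear, satisfy ${}_\alg\langle\xi,\xi'\rangle^\dag={}_\alg\langle\xi',\xi\rangle$ and $\langle\xi,\xi'\rangle_\balg^\dag=\langle\xi',\xi\rangle_\balg$, and are $\alg$-linear resp.\ $\balg$-linear in the appropriate slot; and, crucially, the associativity relation ${}_\alg\langle\psi,\phi\rangle\cdot\xi=\psi\cdot\langle\phi,\xi\rangle_\balg$. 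Each of these is a direct computation using Fubini, left-invariance of $\mu_\sfH$ and $\mu_\sfK$, and careful accounting of the modular functions $\Delta_\sfH$, $\Delta_\sfK$; the half-powers $\Delta_\sfK(k)^{1/2}$ and $\Delta_\sfH(h)^{-1/2}$ in the formulas are exactly what makes these identities close, since the translations do not preserve the reference measure on $C_{\rm c}(X)$.

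The main obstacle is \emph{positivity} of the inner products: ${}_\alg\langle\xi,\xi\rangle\ge0$ in $\alg$ and $\langle\xi,\xi\rangle_\balg\ge0$ in $\balg$ for every $\xi\in C_{\rm c}(X)$. The efficient route is Rieffel induction. Given any nondegenerate covariant representation $(\Pim,U)$ of the dynamical system $(C_0(\sfK\setminus X),\sfH,\rt)$ on a Hilbert space $\hil$, one forms the algebraic tensor product $C_{\rm c}(X)\odot\hil$ with the sesquilinear form $\langle\xi\otimes v,\xi'\otimes v'\rangle:=\langle v\mid(\Pim\rtimes_\rt U)(\langle\xi',\xi\rangle_\balg)\,v'\rangle_\hil$ and proves it is positive semi-definite; already its diagonal case $\langle v\mid(\Pim\rtimes_\rt U)(\langle\xi,\xi\rangle_\balg)\,v\rangle\ge0$ forces $\langle\xi,\xi\rangle_\balg\ge0$ in $\balg$, since $(\Pim,U)$ is arbitrary and the integrated forms $\Pim\rtimes_\rt U$ together give a faithful representation of $\balg$. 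After unravelling the convolution product \eqref{eq:convproduct}, positivity of this form becomes positivity of an explicit integral kernel on $X$, which one settles by the standard localization/approximate-identity argument: approximate $\xi$ by functions supported in a small neighbourhood $W$ of a point over which the quotient maps $X\to\sfK\setminus X$, $X\to X/\sfH$ and the translation maps restrict injectively, so that on $W$ the kernel manifestly comes from a positive operator. The same argument with $\sfH$ and $\sfK$ interchanged gives ${}_\alg\langle\xi,\xi\rangle\ge0$; equivalently, once one side is known positive the other follows from Rieffel's general criterion~\cite{Rieffel1974}. I expect this kernel-positivity step to be the real work of the proof.

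It then remains to verify \emph{fullness}: density of the linear span of the elements ${}_\alg\langle\xi,\xi'\rangle$ in $\alg$ and of the $\langle\xi,\xi'\rangle_\balg$ in $\balg$. Properness is again the tool: given a point of $X$ one builds $\xi\in C_{\rm c}(X)$ equal to $1$ on a small set and whose distinct $\sfK$- and $\sfH$-translates of $\operatorname{supp}\xi$ are pairwise disjoint, so that ${}_\alg\langle\xi,\xi\rangle$ is supported near, and nonzero at, a prescribed point of $\sfK\times X/\sfH$; combining such elements with a partition of unity on $\sfK\times X/\sfH$ and using that $C_{\rm c}(\sfK\times X/\sfH)$ is dense in $\alg$ gives the claim, and symmetrically for $\balg$. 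Completing $\Mcal_0=C_{\rm c}(X)$ then yields the $\alg$--$\balg$ imprimitivity bimodule $\Mcal$, hence \eqref{eq:Greenimprim}; the remaining point-set-topological verifications (continuity of the maps involved, and consistency of the completion taken on the two sides) are carried out as in~\cite[Chapter~4]{Williams2007}.
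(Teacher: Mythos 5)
The paper does not prove this statement: it is imported verbatim as Green's symmetric imprimitivity theorem from \cite[Corollary~4.10]{Williams2007}, so there is no internal proof to compare against. Your outline is a correct plan of exactly the argument given in that reference (and in Green's and Rieffel's original papers) — build the pre-imprimitivity bimodule on $C_{\rm c}(X)$, verify the algebraic axioms with the modular-function bookkeeping, establish positivity of one inner product by Rieffel induction plus the localization argument that uses \emph{freeness} together with properness to trivialize the $\sfK$-action locally (you rightly flag this as the real work, and rightly note the other positivity then follows from the compatibility relation), and obtain fullness from properness and partitions of unity.
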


Theorem~\ref{thm:Green} has several useful applications and
corollaries, see, e.g.,~\cite{Rieffel1976}. A particularly relevant
special case that we shall use below is when $\sfK $ is the trivial group,
in which case~\eqref{eq:Greenimprim} reduces to the Morita equivalence
\begin{gather*}
C_0(X)\rtimes_\rt \sfH \ \mbf\sim_{\text{\tiny M}} \ C_0(X/\sfH ) ,
\end{gather*}
illustrating the use of crossed products in describing
quotients. This equivalence can in fact be strengthened to a stable
isomorphism~\cite{Rieffel1976}
\begin{gather}\label{eq:XsfHiso}
C_0(X)\rtimes_\rt \sfH \simeq C_0(X/\sfH ) \otimes \CK\big(\Leb^2(\sfH)\big) ,
\end{gather}
where $\CK$ denotes the algebra of
compact operators.
\begin{Example}[tori]\label{ex:Moritatori}
A particularly relevant example for us is the case $X=\real^d$ with
$\sfH=\zed^d$ acting by translations $(n,x)\mapsto x+n$ for
$n\in\zed^d$ and $x\in\real^d$, which realizes the
$d$-dimensional torus $\torus^d=\real^d/\zed^d$ as a crossed product:
\[
C\big(\torus^d\big) \ \mbf\sim_{\text{\tiny M}} \ C_0\big(\real^d\big)\rtimes_\rt \zed^d .
\]
Let us illustrate the construction explicitly. The convolution algebra
$C_{\rm c}\big(\Z^d\times\real^d\big)\!\subset\! C_0\big(\R^d\big)\rtimes_\rt\Z^d$ can be
identified with the space of sequences $f=\{f_n\}_{n\in\Z^d}$ of
functions $f_n\colon \real^d\to\complex$ with the convolution product
\[
(f\star g)_n(x) = \sum_{m\in\Z^d} f_m(x) g_{n-m}(x-m) .
\]

Consider the algebra
\[
\alg:=\big\{f\in C_0\big(\real^d,\CK\big(\ell^2\big(\zed^d\big)\big)\big) \, \big| \, f(x+m)
= U_m f(x) U_{m}^{-1}\big\} ,
\]
where $U_m$ is the unitary shift operator on $\ell^2\big(\zed^d\big)$ defined by
$(U_ma)_n=a_{n-m}$ for each $m\in\Z^d$ and $a=\{a_n\}_{n\in\Z^d}$.
Define a map ${\mit\Phi}\colon C_{\rm c}\big(\Z^d\times\real^d\big)\to
C_{\rm c}\big(\real^d,\CK\big(\ell^2\big(\Z^d\big)\big)\big)$ by
\[
\big({\mit\Phi}(f)(x)\big)_{mn} = f_{m+n}(x+n) .
\]
It is easy to see that ${\mit\Phi}(f)(x+m) =
U_m {\mit\Phi}(f)(x) U_m^{-1}$ for all $f\in C_{\rm
 c}\big(\zed^d\times\real^d\big)$, and if $a=(a_{mn})\in\alg$ then defining
$f_n:=a_{0n}$ gives ${\mit\Phi}(f)=a$. It is also easy to check that
${\mit\Phi}(f\star g) = {\mit\Phi}(f) {\mit\Phi}(g)$, and consequently
$\mit\Phi$ gives an algebra isomorphism
\[
{\mit\Phi}\colon \ C_0\big(\real^d\big)\rtimes_\rt\zed^d \xrightarrow{ \ \simeq \ } \alg .
\]

The explicit Morita equivalence bimodule is now obtained from the
completion of
\[
\Mcal = \big\{\xi\in C_{\rm c}\big(\real^d,\ell^2\big(\Z^d\big)\big) \, \big| \, \xi(x+m) =
U_m \xi(x) \big\} .
\]
The left action of the algebra $\alg={\mit\Phi}\big(C_{\rm c}\big(\Z^d\times\real^d\big)\big)$ is by left
matrix multiplication on $\Mcal$:
\begin{align*}
\alg\times\Mcal &\longrightarrow\Mcal , \\
(a,\xi) & \longmapsto a\cdot\xi , \qquad (a\cdot\xi)_n=\sum_{m\in\Z^d}
a_{nm} \xi_m ,
\end{align*}
while the right action of the algebra $C\big(\torus^d\big)$ is by right pointwise
multiplication on~$\Mcal$:
\begin{align*}
\Mcal\times C\big(\torus^d\big) & \longrightarrow\Mcal , \nonumber \\
(\xi,b) & \longmapsto \xi\cdot b , \qquad (\xi\cdot b)_n=\xi_n b .
\end{align*}
The left and right inner
products are respectively given by
\begin{gather*}
{}_\alg\langle\xi,\eta\rangle = \xi\otimes\eta^* , \\
\langle\xi,\eta\rangle_{C(\torus^d)} = \sum_{n\in\zed^d}
\overline{\xi_n} \eta_n ,
\end{gather*}
for $\xi,\eta\in\Mcal$.
Together with the isomorphism $\mit\Phi$, this establishes a
Morita equivalence between the algebras $C_0\big(\real^d\big)\rtimes_\rt\zed^d$
and $C\big(\torus^d\big)$.
\end{Example}

Another important class of examples is provided by taking
$X=\sfG$ to be a locally
compact group with closed subgroups $\sfK$ and $\sfH$ acting respectively by
left and right multiplication on~$\sfG$.
In particular, in the special case $\sfK =\sfG$, so that
$C_0(\sfK {\setminus} \sfG)=\complex$, Theorem~\ref{thm:Green} gives a Morita equivalence between the commutative dynamical system
$(C_0(\sfG/\sfH ), \sfG, \lt)$, with $\sfK =\sfG$ acting by left multiplication on the
homogeneous space $\sfG/\sfH $ (so that $(\lt_\gamma f)(x)=f\big(\gamma^{-1} x\big)$ for all $f\in
C_0(\sfG/\sfH )$, $x\in \sfG/\sfH $ and $\gamma\in \sfG$),
and the group $C^*$-algebra $C^*(\sfH )=\complex\rtimes \sfH $:
\[
C_0(\sfG/\sfH )\rtimes_\lt \sfG \ \mbf\sim_{\text{\tiny M}} \ C^*(\sfH ) .
\]
This equivalence also follows from the $C^*$-algebra isomorphism~\cite[Theorem~4.29]{Williams2007}
\[ 
C_0(\sfG/\sfH )\rtimes_\lt \sfG \simeq C^*(\sfH )\otimes
\CK\big(\Leb^2(\sfG/\sfH )\big) .
\]

\section{Topological T-duality and twisted tori}\label{sec:TopTtorus}

In this section we shall apply the results of Section~\ref{sec:NAdef}, and in particular
Green's theorem, to present a scheme
that will be employed in our study of topological T-duality. We shall
then illustrate how our scheme works to reproduce some standard
(commutative) examples of T-dual spaces.

\subsection{Twisted tori and their T-duals}\label{sec:twistedtori}

We are interested in formulating a notion of T-duality for
``torus bundles without $H$-flux'', which for our purposes can be
characterised by the following general class of spaces.

\begin{Definition}[twisted tori]
Let $\sfG$ be a locally compact group which admits a
cocompact discrete subgroup $\Lambda_\sfG$, i.e., a lattice in~$\sfG$, which we let act on $\sfG$ by left
multiplication. The quotient space
\[
\torus_{\Lambda_\sfG}:=\Lambda_\sfG\backslash \sfG
\]
is a \emph{twisted torus}.
\end{Definition}

By~\cite[Lemma~6.2]{Milnor1976}, only unimodular groups can contain lattices, i.e., groups $\sfG$ whose
modular function $\Delta_\sfG$ is identically equal to~$1$.

\begin{Example}[tori]\label{ex:tori}
Every lattice in the abelian Lie group $\sfG=\real^d$ is
isomorphic to $\Lambda_\sfG=\zed^d$, acting by translations. Then
$\torus_{\zed^d}=\zed^d\backslash\real^d=:\torus^d$ is the $d$-dimensional torus.
\end{Example}

\begin{Example}[nilmanifolds]\label{ex:nilmanifolds}
Generalizing Example~\ref{ex:tori}, let $\sfG$ be a connected and simply-connected nilpotent Lie group. Then a theorem of
Malcev~\cite[Theorem~2.12]{Raghunathan1972} establishes the existence
of a lattice $\Lambda_\sfG$ in $\sfG$ if and only if
$\sfG$ can be defined over the rationals,
i.e., there exists a basis for its Lie algebra which has rational
structure constants, and in this case $\torus_{\Lambda_\sfG}=\Lambda_\sfG\backslash \sfG$ is a nilmanifold.
\end{Example}

\begin{Example}[orbifolds]
Let $\sfG$ be a compact Lie group. Then the lattices in $\sfG$ are
precisely the finite subgroups $\Gamma$ of $\sfG$, and
$\torus_\Gamma=\Gamma\backslash\sfG$ is a smooth orbifold. For instance,
for $\sfG=\SU(2)$ the twisted tori are precisely the three-dimensional
ADE orbifolds $\torus_\Gamma=\Gamma\backslash\Sphere^3$
of the three-sphere for a finite subgroup $\Gamma\subset\SU(2)$;
for $\Gamma=\zed_n$ a cyclic subgroup of order $n\geq2$, this recovers
the familiar lens spaces $\torus_{\zed_n}=\zed_n\backslash\Sphere^3=:\mathbb{L}(n,1)$.
\end{Example}

Following~\cite{Mathai2004}, we come now to a central concept of this paper.

\begin{Definition}[topological T-duality]\label{def:Tduality}
Let $\torus_{\Lambda_\sfG}$ be a twisted torus which admits a
non-trivial right action
of the abelian Lie group $\real^n$ for some $n\geq1$. The crossed product
\[
C(\torus_{\Lambda_\sfG})\rtimes_\rt \real^n
\]
is a \emph{$C^*$-algebraic T-dual} of the twisted torus.
\end{Definition}
If the spectrum of the crossed product algebra $
C(\torus_{\Lambda_\sfG})\rtimes_\rt \real^n
$ is a Hausdorff topological space~$X$ (for instance
if it is Morita
equivalent to a commutative $C^*$-algebra $C(X)$), then we say that~$X$ is T-dual to the
twisted torus $\torus_{\Lambda_\sfG}$ and call $X$ a `classical T-dual'; otherwise we say that the T-dual of
$\torus_{\Lambda_\sfG}$ is a noncommutative space.

For Definition \ref{def:Tduality} to be a `good' notion of T-duality, we should first explain
\begin{itemize}\itemsep=0pt
\item[(a)] in
what precise sense $\torus_{\Lambda_\sfG}$ and
$C(\torus_{\Lambda_\sfG})\rtimes_\rt\real^n$ are `equivalent', and
\item[(b)] how
T-duality applied twice returns the original twisted torus.
\end{itemize}
The answers to both of these points turns out to be provided by
working in a suitable category tailored to our treatment of
topological T-duality.

\subsection[T-duality in the category $\CCK\hspace{-1mm}\CCK$]{T-duality in the category $\CCK\hspace{-1mm}\CCK$}
\label{sec:TdualityKK}

The terminology `topological T-duality' refers to a coarse equivalence at
the level of topology; for $C^*$-algebras the topology is measured by
K-theory. A more powerful refinement is provided by Kasparov's bivariant
K-theory which constructs groups ${\rm KK}(\alg,\balg)$ for any pair
of separable $C^*$-algebras $\alg$ and $\balg$; when $\alg=\complex$,
the group ${\rm KK}(\complex,\balg)\simeq {\rm
 K}(\balg)$ is the K-theory group of $\balg$. The cycles in
Kasparov's groups ${\rm KK}(\alg,\balg)$, called \emph{Kasparaov
 $\alg$--$\balg$ bimodules}, are triples $(\hil,\phi,T)$ where $\hil$
is a right Hilbert $\balg$-module, $\phi$ is a $*$-representation of $\alg$
on $\hil$, and $T\in\End_\balg(\hil)$ is a $\balg$-linear operator on
$\hil$, subject to certain compactness conditions; we do not
provide further details of the definition here and instead refer to~\cite{Brodzki2006} for a
concise review of KK-theory in the context that we shall use it
in this paper. Kasparov bimodules may be thought of as generalizations
of morphisms between $C^*$-algebras, in the sense that any algebra homomorphism $\phi\colon \alg\to\balg$
determines a class $[\phi]\in{\rm KK}(\alg,\balg)$, represented by the
$\alg$--$\balg$-bimodule $(\balg,\phi,0)$.

A key feature of Kasparov's KK-theory is the composition product
\[
\otimes_\balg\colon \ {\rm KK}(\alg,\balg)\times{\rm
 KK}(\balg,\Ccal)\longrightarrow{\rm KK}(\alg,\Ccal) ,
\]
which is bilinear and associative. This product is compatible with the
composition of morphisms $\phi\colon \alg\to\balg$ and $\psi\colon \balg\to\Ccal$
of $C^*$-algebras: $[\phi]\otimes_\balg[\psi] = [\psi\circ\phi]$. It
also makes ${\rm KK}(\alg,\alg)$ into a~ring with unit
$1_\alg=[\Id_\alg]$. We say that an element $\alpha\in{\rm
 KK}(\alg,\balg)$ is \emph{invertible} if there exists an element
$\beta\in{\rm KK}(\balg,\alg)$ such that
$\alpha\otimes_\balg\beta=1_\alg$ and
$\beta\otimes_\alg\alpha=1_\balg$.

An important special instance of Kasparov
bimodules comes from Morita equivalence: Any Morita equivalence
$\alg$--$\balg$ bimodule $\Mcal$ is also a Kasparov bimodule
$(\Mcal,\phi,0)$, with $\phi\colon \alg\to\End(\Mcal)$ the left action of
$\alg$, which defines an invertible class $[\Mcal]\in{\rm
 KK}(\alg,\balg)$ with inverse $[\,\overline{\Mcal}\,]\in{\rm
 KK}(\balg,\alg)$ given by the conjugate $\balg$--$\alg$ bimodule
$\overline{\Mcal}$. Generally, if there exists an invertible element
$\alpha\in{\rm KK}(\alg,\balg)$, then the algebras $\alg$ and $\balg$
are said to be \emph{{KK}-equivalent}, and we write $\alg\
\mbf\sim_{\text{\tiny{KK}}}\ \balg$. Thus Morita equivalence implies
KK-equivalence, but the converse is not generally true. KK-equivalent algebras have isomorphic
K-theory groups, but not necessarily homeomorphic spectra.

This refinement naturally
suggests an approach to T-duality where the category of separable
$C^*$-algebras with $*$-homomorphisms is replaced with an additive
category ${\CCK\hspace{-1mm}\CCK}$, whose objects are again separable $C^*$-algebras
but whose morphisms between any two objects $\alg$ and $\balg$ are
given by the classes in ${\rm KK}(\alg,\balg)$ (see,
e.g.,~\cite{Brodzki2007}). The composition product defines the
composition law, and isomorphic algebras
in this category are precisely the KK-equivalent
algebras; in particular, Morita equivalent algebras are isomorphic
as objects in ${\CCK\hspace{-1mm}\CCK}$. Our formulation and computations of topological
T-duality will always take place in this category, and in this
setting we can easily provide answers to points~(a) and~(b) below Definition~\ref{def:Tduality}
through
\begin{Theorem}\label{thm:equivalences}
If $\torus_{\Lambda_\sfG}$ is a twisted torus with a
non-trivial right action
of $\real^n$, then there are
isomorphisms in the category ${\CCK\hspace{-1mm}\CCK}$ given by the equivalences
\begin{itemize}\itemsep=0pt
\item[{\rm (a)}]  $C(\torus_{\Lambda_\sfG}) \
 \mbf\sim_{\text{\tiny{KK}}} \
 C(\torus_{\Lambda_\sfG})\rtimes_\rt\real^n  $ $($up to a shift of degree
 $n \ {\rm mod}~2)$, and
\item[{\rm (b)}]  $\big(C(\torus_{\Lambda_\sfG})\rtimes_\rt
 \real^n\big)\rtimes_{\widehat\rt} \real^n \ \mbf\sim_{\text{\tiny
 M}} \ C(\torus_{\Lambda_\sfG})$.
\end{itemize}
\end{Theorem}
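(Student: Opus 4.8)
The plan is to handle the two statements by entirely different (and by now classical) mechanisms: statement (b) is Takai duality together with the fact that stabilization is a Morita equivalence, while statement (a) is the Connes--Thom isomorphism.

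For (b), I would apply Theorem~\ref{thm:Takaiduality} to the $C^*$-dynamical system $\big(C(\torus_{\Lambda_\sfG}),\real^n,\rt\big)$. Since $\real^n$ is locally compact abelian and Pontryagin self-dual, $\widehat{\real^n}\simeq\real^n$ and its bidual action is exactly the dual action $\widehat\rt$ appearing in the statement, so Takai duality yields the $C^*$-algebra isomorphism
\[
\big(C(\torus_{\Lambda_\sfG})\rtimes_\rt\real^n\big)\rtimes_{\widehat\rt}\real^n \ \simeq \ C(\torus_{\Lambda_\sfG})\otimes\CK\big(\Leb^2(\real^n)\big) .
\]
It then remains only to note that $\alg\otimes\CK(\hil)$ is Morita equivalent to $\alg$ for any separable Hilbert space $\hil$: the imprimitivity bimodule is $\alg\otimes\hil$ with the evident left $\big(\alg\otimes\CK(\hil)\big)$-action and right $\alg$-action and inner products built from that of $\hil$, in the spirit of the stable isomorphism~\eqref{eq:XsfHiso} recalled in Section~\ref{sec:Morita}. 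Composing with this Morita equivalence gives $\big(C(\torus_{\Lambda_\sfG})\rtimes_\rt\real^n\big)\rtimes_{\widehat\rt}\real^n \ \mbf\sim_{\text{\tiny M}} \ C(\torus_{\Lambda_\sfG})$, which is (b). There is essentially nothing to overcome here beyond quoting these two standard facts.

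For (a), the degree shift by $n\bmod 2$ is the hallmark of the Connes--Thom isomorphism, and that is the route I would follow. For any $C^*$-dynamical system $(\alg,\real,\alpha)$ the Connes--Thom construction furnishes a canonical invertible element of the degree-one group ${\rm KK}^1\big(\alg,\alg\rtimes_\alpha\real\big)$, which makes $\alg$ and $\alg\rtimes_\alpha\real$ isomorphic in ${\CCK\hspace{-1mm}\CCK}$ up to a shift of degree by one; on K-theory this recovers the familiar ${\rm K}_i\big(\alg\rtimes_\alpha\real\big)\cong{\rm K}_{i+1}(\alg)$. To reach $\real^n$ I would write $\real^n=\real\rtimes\real^{n-1}$ with trivial action and apply Theorem~\ref{cpsp3} repeatedly, realizing $C(\torus_{\Lambda_\sfG})\rtimes_\rt\real^n$, up to isomorphism, as an $n$-fold iterated crossed product by $\real$; then apply the $n=1$ Connes--Thom KK-equivalence at each stage and compose the resulting invertible KK-classes using bilinearity and associativity of the Kasparov product. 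Each step shifts the degree by one, so the total shift is $n\bmod 2$, giving (a) with $\alg=C(\torus_{\Lambda_\sfG})$. Non-triviality of the $\real^n$-action plays no role in this argument; it is imposed only so that the resulting T-dual is genuinely distinct from the original twisted torus.

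The one point demanding care --- rather than a genuine obstacle --- is that (a) relies on the \emph{KK-theoretic} refinement of Connes' Thom isomorphism, i.e., the existence of the invertible KK-class and not merely the isomorphism of K-groups, and this is not among the results recalled in Section~\ref{sec:NAdef}, so it must be cited from the literature. Once it is granted, the rest of (a) is formal bookkeeping: naturality of the Connes--Thom element with respect to the iterated crossed-product decomposition, together with the ring structure on KK. Separability of all $C^*$-algebras involved --- needed for KK-theory to apply at all --- is automatic, since $\torus_{\Lambda_\sfG}$ is compact and second countable and crossed products by the second countable group $\real^n$ preserve separability.
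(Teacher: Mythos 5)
Your proposal is correct and matches the paper's own proof, which likewise deduces (b) from Takai duality (Theorem~\ref{thm:Takaiduality}) followed by the standard Morita equivalence $\alg\otimes\CK \ \mbf\sim_{\text{\tiny M}} \ \alg$, and (a) from the KK-theoretic form of the Connes--Thom isomorphism citing Fack--Skandalis. The only difference is that you spell out the iteration over the $n$ factors of $\real^n$ explicitly, which the paper leaves implicit.
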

\begin{proof}
The KK-equivalence (a) follows from the Connes--Thom isomorphism,
formulated in the language of KK-theory~\cite{Fack1981}. The Morita
equivalence~(b) follows from Takai duality
(Theorem~\ref{thm:Takaiduality}).
\end{proof}

Another virtue of the categorical setting is that it enables a
general algebraic reformulation of the correspondence space
construction, which for topological spaces `geometrizes' the action of
topological T-duality. In~\cite[Proposition~5.3]{Brodzki2007} it is proven that, if
$\alg$ and $\balg$ are separable $C^*$-algebras, then any class in
${\rm KK}(\alg,\balg)$ can be represented by a `noncommutative
correspondence'. For this, we first recall,
 following~\cite{Brodzki2006,Brodzki2007}, that KK-theory provides a
definition of Gysin or ``wrong way'' homomorphisms on K-theory for
$C^*$-algebras. If $\phi:\alg\to\balg$ is a
morphism of separable $C^*$-algebras, a \emph{K-orientation} is a
functorial assignment of a corresponding element $\phi!\in{\rm
 KK}(\balg,\alg)$. If a K-orientation exists, we say that $\phi$ is
\emph{K-oriented} and call $\phi!$ the associated \emph{Gysin
 element}. The Gysin
homomorphism on K-theory is now defined by $\phi_!:=(-)\otimes_\balg\phi!\colon {\rm
 K}(\balg)\to{\rm K}(\alg)$. We then slightly adapt
the definition from~\cite{Brodzki2007} to the present context of Theorem~\ref{thm:equivalences}.
\begin{Definition}[noncommutative correspondences]
\label{def:NCcorr}
Let $\torus_{\Lambda_\sfG}$ be a twisted torus which admits a
non-trivial right action
of $\real^n$, and let
\begin{gather}\label{eq:NCcorrdiag}\begin{split}&
\xymatrix{
 & \ \Ccal \ & \\
C(\torus_{\Lambda_\sfG}) \ar[ur]^{[\phi]} & &
C(\torus_{\Lambda_\sfG})\rtimes_\rt\real^n
\ar[ul]_{[\psi]}
}\end{split}
\end{gather}
be a diagram in $\CCK\hspace{-1mm}\CCK$ whose arrows are induced by
homomorphisms $\phi\colon C(\torus_{\Lambda_\sfG})\to\Ccal$ and
$\psi\colon C(\torus_{\Lambda_\sfG})\rtimes_\rt\real^n\to\Ccal$ of separable
$C^*$-algebras. Assume that $\psi$ is K-oriented, and let $\psi!\in{\rm KK}\big(\Ccal,C(\torus_{\Lambda_\sfG})
\rtimes_\rt\real^n\big)$ be its corresponding Gysin element. The separable $C^*$-algebra $\Ccal$
is a~\emph{noncommutative correspondence} if the associated element
\[
[\phi]\otimes_\Ccal \psi! \in {\rm KK}\big(C(\torus_{\Lambda_\sfG}) ,
C(\torus_{\Lambda_\sfG}) \rtimes_\rt\real^n\big)
\]
is a KK-equivalence between the twisted torus and its
$C^*$-algebraic T-dual.
\end{Definition}Analogously to~\cite{Bouwknegt2008}, we obtain a noncommutative
correspondence by restricting the $\real^n$-action to the lattice
$\zed^n\subset\real^n$.
\begin{Proposition}\label{prop:NCcorr}
The crossed product
\[
\Ccal=C(\torus_{\Lambda_\sfG})\rtimes_{\rt|_{\zed^n}} \zed^n
\]
is a {noncommutative correspondence} in the sense of
Definition~\ref{def:NCcorr}.
\end{Proposition}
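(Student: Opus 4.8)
The plan is to construct the diagram~\eqref{eq:NCcorrdiag} with $\Ccal=C(\torus_{\Lambda_\sfG})\rtimes_{\rt|_{\zed^n}}\zed^n$ explicitly, exhibit the two homomorphisms $\phi$ and $\psi$, verify that $\psi$ is K-oriented, and finally check that the composition $[\phi]\otimes_\Ccal\psi!$ is the Connes--Thom KK-equivalence of Theorem~\ref{thm:equivalences}(a). For the morphism $\psi\colon C(\torus_{\Lambda_\sfG})\rtimes_\rt\real^n\to\Ccal$ I would use the restriction-of-crossed-product construction: restricting the $\real^n$-action to the cocompact lattice $\zed^n\subset\real^n$ gives a natural $*$-homomorphism (this is the standard map $C_{\mathrm c}(\real^n,\alg)\hookrightarrow$ something, or dually one uses that $\real^n/\zed^n=\torus^n$ is compact). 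The cleaner route is to invoke the subgroup restriction map together with the fact that $C(\torus_{\Lambda_\sfG})\rtimes_\rt\real^n\simeq\big(C(\torus_{\Lambda_\sfG})\rtimes_{\rt|_{\zed^n}}\zed^n\big)\rtimes\torus^n$ via Theorem~\ref{cpsp3} applied to $\real^n=\zed^n\rtimes\torus^n$ --- wait, $\real^n$ is not literally a semidirect product of $\zed^n$ with $\torus^n$, so instead I would directly use the Green/Takai machinery: $C(\torus_{\Lambda_\sfG})\rtimes_\rt\real^n$ contains $\Ccal=C(\torus_{\Lambda_\sfG})\rtimes_{\rt|_{\zed^n}}\zed^n$ in a way governed by the Mackey--Rieffel--Green subgroup theory, but for the correspondence diagram all that is needed is \emph{some} homomorphism, which the inclusion of convolution algebras $C_{\mathrm c}(\zed^n,C(\torus_{\Lambda_\sfG}))\hookrightarrow C_{\mathrm c}(\real^n,C(\torus_{\Lambda_\sfG}))$ does not directly give; rather one uses the canonical $*$-homomorphism going the other way induced by the quotient $\real^n\twoheadrightarrow\torus^n$, or simply takes $\psi$ to be the map classified by Green's imprimitivity bimodule.

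Concretely, here is the approach I favour. Set $\Ccal=C(\torus_{\Lambda_\sfG})\rtimes_{\rt|_{\zed^n}}\zed^n$. For $[\phi]$: by Example~\ref{ex:Moritatori}-type reasoning, or more directly by Takai duality for $\zed^n$ (Theorem~\ref{thm:Takaiduality}), there is a natural Morita equivalence / homomorphism relating $C(\torus_{\Lambda_\sfG})$ to its crossed product by $\zed^n$; I would take $\phi$ to be induced by the canonical embedding $C(\torus_{\Lambda_\sfG})\hookrightarrow M(\Ccal)$ coming from the universal covariant representation $\mbf\Pim$ (described in Section~\ref{sec:NAdef}), which gives a nondegenerate homomorphism and hence a class $[\phi]\in\mathrm{KK}(C(\torus_{\Lambda_\sfG}),\Ccal)$. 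For $\psi$: the group $\torus^n=\real^n/\zed^n$ still acts on $\Ccal$ (this is precisely the residual action in Theorem~\ref{cpsp3}/Takai-type decomposition), and by a further crossed product one recovers $C(\torus_{\Lambda_\sfG})\rtimes_\rt\real^n$; the map $\psi$ is then the one classified by this structure, and since $\torus^n$ is a compact connected Lie group the corresponding extension is K-oriented (compact groups, being unimodular with a spin$^c$ homogeneous structure on the point, always yield K-oriented maps --- this is the standard fact that crossed products by $\torus^n$ carry a canonical Gysin element). I would cite~\cite{Brodzki2007,Bouwknegt2008} for this K-orientation.

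The crux --- and the step I expect to be the main obstacle --- is verifying that the resulting composition $[\phi]\otimes_\Ccal\psi!\in\mathrm{KK}\big(C(\torus_{\Lambda_\sfG}),C(\torus_{\Lambda_\sfG})\rtimes_\rt\real^n\big)$ really equals the Connes--Thom class (up to the degree shift $n\bmod 2$), i.e.\ that it is invertible. The plan is to reduce this to the known case: factor the Connes--Thom isomorphism for $\real^n$ as an $n$-fold iterate of the $\real$-case, and at each stage recognize $\rt|_\real$ acting on the quotient via $\real=\zed\rtimes(\real/\zed)$-type bookkeeping, so that the single-variable Connes--Thom class decomposes as the Morita class coming from $\phi$ (the $\zed$ part, which is really Example~\ref{ex:Moritatori}) composed with the Gysin element for the residual circle action (the $\torus$ part). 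This is exactly the argument of~\cite[Proposition~5.3 and its corollaries]{Brodzki2007} and~\cite[Section~3]{Bouwknegt2008}; I would follow it mutatis mutandis, the only thing to check being that nothing in the argument used commutativity or properness of the base twisted torus beyond what we have (which it does not --- the construction is natural in the $C^*$-dynamical system). So the proof is essentially: \emph{assemble the diagram, cite K-orientation of the circle/torus crossed product, and invoke the iterated Connes--Thom factorization of~\cite{Brodzki2007,Bouwknegt2008} to identify $[\phi]\otimes_\Ccal\psi!$ with the KK-equivalence of Theorem~\ref{thm:equivalences}(a)}.
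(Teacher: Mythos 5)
Your choice of $\phi$ (the canonical embedding $j$ of $C(\torus_{\Lambda_\sfG})$ into its crossed product by the discrete group $\zed^n$) is exactly what the paper uses. The genuine gap is in the construction of $\psi$: a noncommutative correspondence requires an actual $*$-homomorphism from $C(\torus_{\Lambda_\sfG})\rtimes_\rt\real^n$ into (something Morita equivalent to) $\Ccal$, and you never produce one. The route you sketch rests on the claim that $\torus^n=\real^n/\zed^n$ acts residually on $\Ccal$ so that ``by a further crossed product one recovers $C(\torus_{\Lambda_\sfG})\rtimes_\rt\real^n$''. This is false as stated: the extension $0\to\zed^n\to\real^n\to\torus^n\to0$ does not split, so Theorem~\ref{cpsp3} does not apply (as you yourself noticed mid-argument), and the natural $\torus^n$-action on $\Ccal$ is the dual action, whose crossed product gives back $C(\torus_{\Lambda_\sfG})\otimes\CK\big(\ell^2(\zed^n)\big)$ by Takai duality --- not the $\real^n$-crossed product. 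A correct decomposition of $\alg\rtimes\real^n$ over the normal subgroup $\zed^n$ would require twisted crossed products, and even then it yields an isomorphism rather than the homomorphism $\psi$ you need.

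The missing ingredient is Green's induced-algebra result~\cite[Corollary~2.8]{Green1980}: letting $\real^n$ act diagonally on $\torus^n\times\torus_{\Lambda_\sfG}$ (by translation on the first factor), one has
\[
C\big(\torus^n\times\torus_{\Lambda_\sfG}\big)\rtimes_\rt\real^n \simeq
\big(C(\torus_{\Lambda_\sfG})\rtimes_{\rt|_{\zed^n}}\zed^n\big)\otimes\CK\big(\Leb^2\big(\torus^n\big)\big) ,
\]
and the projection $\torus^n\times\torus_{\Lambda_\sfG}\to\torus_{\Lambda_\sfG}$ induces an injection of convolution algebras $C_{\rm c}\big(\real^n\times\torus_{\Lambda_\sfG}\big)\hookrightarrow C_{\rm c}\big(\real^n\times\torus^n\times\torus_{\Lambda_\sfG}\big)$ preserving the convolution product. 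Composing gives a concrete monomorphism $\psi'\colon C(\torus_{\Lambda_\sfG})\rtimes_\rt\real^n\to\Ccal\otimes\CK$, which is K-oriented because it is induced by a (trivial torus bundle) projection; composing $[\phi']$ and $[\psi']$ with the Morita equivalence $\Ccal\otimes\CK\ \mbf\sim_{\text{\tiny M}}\ \Ccal$ then yields the diagram. Your closing discussion of identifying $[\phi]\otimes_\Ccal\psi!$ with the Connes--Thom class is reasonable in spirit (and more than the paper spells out), but it cannot be carried through until $\psi$ exists.
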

\begin{proof}
We need to construct a diagram \eqref{eq:NCcorrdiag} in $\CCK\hspace{-1mm}\CCK$ for the
crossed product. For this,
note that for any dynamical system of the form $(\alg,\Lambda,\alpha)$
where $\Lambda$ is a discrete group, there is a
natural injection $j$ of the algebra $\alg$ into the crossed product
$\alg\rtimes_\alpha\Lambda$: given $a\in\alg$, define the sequence
$j(a)\in C_{\rm c}(\Lambda,\alg)$ by $j(a)_{\gamma}=a \delta_{\gamma,e}$ for $\gamma\in\Lambda$. It is
easy to check, using the explicit formula for the convolution product,
that the map $a\mapsto j(a)$ is an algebra
monomorphism:
$j(a)\star j(b)=j(a\,b)$ for $a,b\in\alg$. In particular,
there is a $C^*$-algebra injection
\begin{gather}\label{eq:jmono}
j\colon \ C(\torus_{\Lambda_\sfG})\longrightarrow
C(\torus_{\Lambda_\sfG})\rtimes_{\rt|_{\zed^n}} \zed^n .
\end{gather}

Next we apply~\cite[Corollary~2.8]{Green1980} with $\real^n$ acting on
$\torus^n=\real^n/\Z^n$ by (right) translation and the diagonal action of
$\real^n$ on $\torus^n\times\torus_{\Lambda_\sfG}$ to obtain an
isomorphism
\[
C\big(\torus^n\times\torus_{\Lambda_\sfG}\big)\rtimes_\rt\real^n \simeq
\big(C(\torus_{\Lambda_\sfG})\rtimes_{\rt|_{\Z^n}}\Z^n\big) \otimes
\CK\big(\Leb^2\big(\torus^n\big)\big) .
\]
The projection
$\torus^n\times\torus_{\Lambda_\sfG}\to\torus_{\Lambda_\sfG}$ induces an injection
$C_{\rm c}(\real^n\times\torus_{\Lambda_\sfG})\hookrightarrow
C_{\rm c}(\real^n\times\torus^n\times\torus_{\Lambda_{\sfG}})$ which preserves the
convolution product, and we obtain a $C^*$-algebra monomorphism
\[
\psi'\colon \ C(\torus_{\Lambda_\sfG})\rtimes_\rt\real^n\longrightarrow
\big(C(\torus_{\Lambda_\sfG})\rtimes_{\rt|_{\Z^n}}\Z^n\big) \otimes
\CK\big(\Leb^2(\torus^n)\big) ,
\]
which is easily checked to be K-oriented since it is induced by a
projection.

This gives algebra morphisms $\phi':=\iota\circ
j\colon C(\torus_{\Lambda_\sfG})\to\Ccal\otimes\CK$ and
$\psi'\colon C(\torus_{\Lambda_\sfG})\rtimes_\rt\real^n\to\Ccal\otimes\CK$,
where $\Ccal=C(\torus_{\Lambda_\sfG})\rtimes_{\rt|_{\zed^n}} \zed^n$, $\CK$ denotes the
$C^*$-algebra of compact operators on a
separable Hilbert space,
and $\iota\colon \Ccal\to\Ccal\otimes\CK$ is the usual stabilization map. Taking the composition products of $[\phi']$ and $[\psi']$ with the
Morita equivalence $\Ccal\otimes\CK \ \mbf\sim_{\text{\tiny M}} \
\Ccal$ then yields the required maps in~\eqref{eq:NCcorrdiag}.
\end{proof}

When the spectrum of the
$C^*$-algebraic T-dual is a Hausdorff space, we identify the
correspondence space with the spectrum of
$\Ccal=C(\torus_{\Lambda_\sfG})\rtimes_{\rt|_{\zed^n}}\zed^n$;
otherwise $\Ccal$ is a noncommutative space.

\subsection{Computational tools}

Let us now explain how to compute these $C^*$-algebraic T-duals in
some special instances that will appear throughout the remainder of
this paper. For certain actions of $\real$, we
may compute the $C^*$-algebraic T-dual via
\begin{Proposition}\label{prop:Raeburn}
Let $\torus_{\Lambda_\sfG}$ be a twisted torus equipped with an action of
$\real$ for which every point has isotropy subgroup $\zed$. Let
$T=\torus_{\Lambda_\sfG}/\real$, and denote the corresponding principal
circle bundle by $p\colon \torus_{\Lambda_\sfG}\to T$. Then the $C^*$-algebraic
T-dual $C(\torus_{\Lambda_\sfG})\rtimes_\rt \real\simeq \C\T(\torus\times T,\delta)$ is a continuous-trace
algebra with spectrum $\torus\times T$ and Dixmier--Douady class
$\delta = \zeta\smile c_1(p)\in \rmH^3(\torus\times T,\Z)$, where
$c_1(p) \in \rmH^2(T,\Z)$ is the Chern class of the circle bundle and $\zeta$ is the
standard generator of $\rmH^1(\torus,\Z)\simeq\Z$.
\end{Proposition}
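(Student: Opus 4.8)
The plan is to identify the crossed product $C(\torus_{\Lambda_\sfG})\rtimes_\rt\real$ with a known continuous-trace model built from the principal circle bundle $p\colon\torus_{\Lambda_\sfG}\to T$. The hypothesis that every point has isotropy subgroup $\zed\subset\real$ means precisely that the $\real$-action factors through a free $\real/\zed\simeq\torus$-action, so that $\torus_{\Lambda_\sfG}$ is a principal $\torus$-bundle over $T=\torus_{\Lambda_\sfG}/\real$; equivalently, writing $P:=\torus_{\Lambda_\sfG}$, we have a free and proper $\zed$-action on the product $P\times\real$ that produces $P$ as an associated bundle. I would begin by rewriting the crossed product using this structure: by Green's symmetric imprimitivity theorem (Theorem~\ref{thm:Green}) applied with $X=P\times\real$ (with $\real$ acting diagonally by translation on the second factor and the given action on $P$, and $\zed$ acting by the deck transformations that cover the bundle projection), one obtains a Morita equivalence between $C(P)\rtimes_\rt\real$ and a crossed product of the form $C_0(\widetilde T\times\real)\rtimes\zed$ for a suitable cover, or more directly the Mackey–Rieffel–Green machinery lets one replace the $\real$-crossed product by a $\zed$-crossed product at the expense of stabilization.

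The cleanest route, which I would follow, is the one already standard in the Mathai--Rosenberg literature: use the fact that for a principal $\torus$-bundle $p\colon P\to T$ the $\real$-action with constant isotropy $\zed$ gives $C(P)\rtimes_\rt\real\simeq C(P)\rtimes_\rt\real$, and decompose this via $\real\simeq\torus\times\real$ "fibrewise'' — more precisely, present the action of $\real$ on $C(P)$ as the composite of the free $\torus$-action and lift to $\real$, so that $C(P)\rtimes_\rt\real\cong\big(C(P)\rtimes_\torus\torus\big)$-type object tensored appropriately. Since the $\torus$-action is free and proper, $C(P)\rtimes\torus$ is Morita equivalent to $C(T)$ (the special case of Theorem~\ref{thm:Green} with trivial $\sfK$), and the residual $\real/\torus$-direction contributes the crossed product $C(T)\rtimes_{\widehat\rt}\widehat{\torus}$-data; tracking the Mackey obstruction / Dixmier--Douady invariant through this two-step decomposition is what produces the class $\delta$. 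Concretely, I expect to invoke the identification of $C(P)\rtimes_\rt\real$ as the continuous-trace algebra $\CCW$ with spectrum $\widehat{\torus}\times T\cong\torus\times T$ whose Dixmier--Douady class is, by the Mathai--Rosenberg computation for circle bundles, exactly the external product $\zeta\smile c_1(p)$, where $\zeta\in\rmH^1(\torus,\zed)$ is the canonical generator and $c_1(p)\in\rmH^2(T,\zed)$ is the Chern class classifying $P$.

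To pin down the Dixmier--Douady class I would argue as follows. The crossed product $C(P)\rtimes_\rt\real$ has spectrum computed from the orbit space together with the stabilizers: the orbits are the $\torus$-fibres, the orbit space is $T$, and each orbit $\torus$ with stabilizer $\zed$ contributes a factor $\widehat{(\real/\zed)}=\widehat\torus\cong\torus$ to the spectrum, giving total spectrum $\torus\times T$. That the algebra has continuous trace is automatic because the action has constant stabilizer $\zed$ and is therefore "locally induced'', and one can appeal to the Green--Rieffel criterion. For the class itself, the standard trick is to reduce to the local triviality of $p$: over a contractible open $U\subseteq T$, $p^{-1}(U)\cong U\times\torus$ and the crossed product restricts to $C_0(U\times\torus)\rtimes_\rt\real\cong C_0(U)\otimes\big(C(\torus)\rtimes_\rt\real\big)\cong C_0(U)\otimes\CK\big(\Leb^2(\torus)\big)\otimes C_0(\widehat\torus)$, which is trivial; the class $\delta$ is then assembled from the transition data of $P$, i.e.\ from $c_1(p)$, crossed with the generator $\zeta$ coming from the $\widehat\torus$-direction of the Connes--Thom/Takai dual. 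This cocycle computation — showing the gluing cocycle for the bundle of compacts representing $C(P)\rtimes_\rt\real$ is cohomologous to a representative of $\zeta\smile c_1(p)$ — is the technical heart, and I expect it to be the main obstacle: one must carefully match the Čech cocycle for $c_1(p)$ with the phases appearing in the Packer--Raeburn stabilization of the $\real$-crossed product, and confirm no sign or orientation discrepancy enters. I would handle this exactly as in~\cite{Mathai2004,Mathai2005}, citing~\cite[Theorem~4.29]{Williams2007} for the free-quotient piece and the Connes--Thom isomorphism (as in the proof of Theorem~\ref{thm:equivalences}(a)) for the degree shift and the appearance of $\zeta$, so that the remaining work is the bookkeeping of cup products in $\rmH^3(\torus\times T,\zed)$ via the Künneth decomposition $\rmH^3(\torus\times T,\zed)\cong \rmH^3(T,\zed)\oplus\big(\rmH^1(\torus,\zed)\otimes\rmH^2(T,\zed)\big)$, in which $\delta$ lies entirely in the second summand and equals $\zeta\otimes c_1(p)$.
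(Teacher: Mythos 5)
Your proposal is correct in substance but takes a genuinely different (and much longer) route than the paper: the paper's entire proof is a one-line citation of \cite[Proposition~4.5]{Raeburn1988}, which states precisely that for an $\real$-action on a locally compact space with constant isotropy $\zed$ the crossed product is the continuous-trace algebra $\C\T(\torus\times T,\zeta\smile c_1(p))$. What you have written is essentially a reconstruction of the \emph{proof} of that cited proposition: spectrum $=\widehat{\zed{\setminus}\real}\times T=\torus\times T$ from the orbit space together with the duals of the stabilizers, continuous trace from the constancy of the isotropy, and the Dixmier--Douady class from matching the gluing cocycle of the resulting bundle of compacts against the transition data of $p$. That outline is sound, and your localization argument over contractible $U\subseteq T$ correctly shows $\delta$ restricts trivially and hence lives in the K\"unneth summand $\rmH^1(\torus,\zed)\otimes\rmH^2(T,\zed)$. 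Two caveats: the middle paragraph's ``decompose via $\real\simeq\torus\times\real$'' is garbled --- what you want is decomposition by stages along $\zed\subset\real$, i.e.\ $C(P)\rtimes_\rt\real$ related to $\big(C(P)\rtimes_\rt\zed\big)\rtimes\torus$ up to stabilization, which is where the $\widehat\torus\cong\torus$ factor of the spectrum comes from; and the actual identification of the cocycle with $\zeta\smile c_1(p)$ (which you correctly flag as the technical heart) is never carried out but only deferred to \cite{Mathai2004,Mathai2005}. Since the paper itself defers the whole statement to \cite{Raeburn1988}, this deferral is acceptable, but the cleaner move is simply to cite that proposition directly rather than re-derive it.
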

\begin{proof}
This is just a straightforward adaptation of the statement
of~\cite [Proposition~4.5]{Raeburn1988}.
\end{proof}

In these instances, the T-dual of $\torus_{\Lambda_\sfG}$ is the Hausdorff space $X=\torus\times
(\torus_{\Lambda_\sfG}/\real)$ with a three-form `$H$-flux' whose cohomology class is
represented by $[H] = \zeta\smile c_1(p)$.

More generally, suppose that the $\real^n$-action on $\torus_{\Lambda_\sfG}$ is
induced by a free and proper right action of $\real^n$ on the covering group $\sfG$
which commutes with the left action of the lattice $\Lambda_\sfG$ on~$\sfG$. We can then apply Green's theorem
(Theorem~\ref{thm:Green}) to get the Morita
equivalence
\begin{gather}\label{scheme}
C(\torus_{\Lambda_\sfG})\rtimes_\rt \real^n \ \mbf\sim_{\text{\tiny M}} \
C_0\big(\sfG/\real^n\big)\rtimes_\lt \Lambda_\sfG .
\end{gather}
In this special case, we obtain an easy proof of
Proposition~\ref{prop:NCcorr}: The inclusion
$\Z^n\hookrightarrow\R^n$ of groups induces a monomorphism
\[
C_0\big(\sfG/\real^n\big)\rtimes_\lt \Lambda_\sfG \longrightarrow
C_0\big(\sfG/\Z^n\big)\rtimes_\lt \Lambda_\sfG \ \mbf\sim_{\text{\tiny M}}
 \ C(\torus_{\Lambda_\sfG})\rtimes_{\rt|_{\Z^n}} \Z^n ,
\]
where in the last step we replaced $\R^n$ by its subgroup $\Z^n$ in
\eqref{scheme}. This gives monomor\-phisms~\eqref{eq:jmono} and
$C(\torus_{\Lambda_\sfG})\rtimes_\rt \real^n\to
C(\torus_{\Lambda_\sfG})\rtimes_{\rt|_{\Z^n}} \Z^n$ in the category
$\CCK\hspace{-1mm}\CCK$.

\subsection{Topological T-duality for the torus}\label{sec:toruscrossedprod}

Let us now describe how our considerations reproduce the standard
T-duality for tori.
The simplest example of the T-duality scheme \eqref{scheme} is the case where $\sfG
=\real$, $\Lambda_\sfG =\zed\subset\real$, and $\sfG/\real=\{0\}$ with the obviously trivial $\Lambda_\sfG$-action. Then
$\torus_\zed= \zed\backslash\real=\torus$ is a circle, and~\eqref{scheme} with $n=1$ reads
\[
C(\torus)\rtimes_\rt \real \ \mbf\sim_{\text{\tiny M}} \ \complex\rtimes \zed
 = C^*(\zed) \simeq C\big(\widetilde\torus\big) ,
\]
where in the last passage we used the Fourier transform isomorphism
${\cal F}\colon C^*(\zed)\to C\big(\widetilde\torus\big)$; explicitly,
if $a=\{a_{n}\}_{n\in \zed}\in C^*(\zed)$, then
${\cal F}(a)(\chi)=\sum_{n\in \zed} a_{n} \e^{2\pi\ii n\,\chi}$ so that
 ${\cal F}(a)$ is a function on the
dual circle $\widetilde\torus=\real^*/\zed^*$.

The generalization to T-duality along a single direction $i$ of a
$d$-dimensional torus is
straightforward. Let $\Lambda\simeq\zed^d$ be the lattice in $\real^d$ given by
$\Lambda=\big\{\sum_{i=1}^d a_i \vec e_i \, | \, a_1,\dots,a_d \in
\zed\big\}$, where $\vec e_1,\dots,\vec e_d$ is the standard basis of~$\real^d$; this is the direct
sum $\Lambda=\bigoplus_{i=1}^d \zed\vec e_i$.
Let $\real_i$ be the subgroup of $\real^d$ linearly spanned by
$\vec e_i$ and let $\zed_i\subset\real_i$ be the corresponding
lattice; we write $\torus_i=\real_i/\zed_i$ and decompose the
$d$-torus $\torus^d=\real^d/\Lambda$ as
$\torus^d=\torus^{d-1}_{\hat\imath}\times\torus_i$, where
$\torus^{d-1}_{\hat\imath}$ is the $(d{-}1)$-dimensional torus defined by
omitting the $i$-th factors of $\real^d$ and $\Lambda$. Then $\real_i$
acts trivially on~$\torus^{d-1}_{\hat\imath}$ and we have
\begin{align}
C\big(\torus^d\big)\rtimes_\rt\real_i &=
C\big(\torus^{d-1}_{\hat\imath}\times \torus_{i}\big) \rtimes_\rt\real_i
\simeq \big(C\big(\torus^{d-1}_{\hat\imath}\big)\otimes
C(\torus_{i}) \big)\rtimes_{\Id\otimes\rt}\real_i\nn\\
&\simeq
C\big(\torus^{d-1}_{\hat\imath}\big)\otimes
\big(C(\torus_{i}) \rtimes_\rt\real_i \big)
 \mbf\sim_{\text{\tiny M}} \
C\big(\torus^{d-1}_{\hat\imath}\big)\otimes
C\big(\widetilde\torus_{i}\big) \nn\\
&\simeq
C\big(\torus^{d-1}_{\hat\imath}\times \widetilde\torus_{i}\big)
 = C\big(\torus^d_{\tilde\imath}\big) ,
\label{eq:Buscher}\end{align}
where $\torus^d_{\tilde\imath}:=\torus^{d-1}_{\hat\imath}\times
\widetilde\torus_{i}$.
This is the expected action of the $i$-th factorized T-duality, and in
this way we have thus
reproduced the standard rules for T-duality of tori. In fact, in this case we can use
Proposition~\ref{prop:Raeburn} to strengthen the statement of topological
T-duality: Every point of $\torus^d$ has isotropy group $\zed$ under
the action of $\real_i$, and the corresponding circle bundle
$p\colon \torus^d\to\torus^{d-1}_{\hat\imath}$ is trivial, so the
$C^*$-algebraic T-dual of $\torus^{d}$ is a continuous-trace algebra
with spectrum $\torus^d_{\tilde\imath}$ and trivial Dixmier--Douady
class. Hence the Morita equivalence in~\eqref{eq:Buscher} can be
replaced by a~stable isomorphism.

By iterating these T-duality transformations
one can perform T-dualities along multiple directions of a
$d$-dimensional torus. In particular, iterating the procedure $d$
times and using Theorem~\ref{cpsp3} we end up with the full T-duality
\begin{gather}\label{eq:fullTduality}
C\big(\torus^d\big)\rtimes_\rt\real^d \ \mbf\sim_{\text{\tiny M}} \
C\big(\widetilde\torus{}^d\big) ,
\end{gather}
where $\widetilde\torus{}^d=\big(\real^d\big)^*/\Lambda^*$ is the dual
torus with $\Lambda^*$ the dual lattice in the dual vector space $\big(\real^d\big)^*$.
This can also be obtained directly by setting $\sfG
=\real^d$, $\Lambda_\sfG=\Lambda\subset
\real^d$ and $\sfG/\real^d=\{0\}$ in~\eqref{scheme} with $n=d$, and by using the Fourier transforms in
all directions $\vec e_1,\dots,\vec e_d$.

Finally, let us consider the correspondence space construction. For
the $i$-th factorized T-duality, this is obtained by restricting the action of
$\real_i$ to the lattice $\zed_i\subset\real_i$. The action of the
group $\zed_i$ on the algebra of functions $C\big(\torus^d\big)$ is trivial and
so we get isomorphisms
\[
C\big(\torus^d\big)\rtimes_\rt\zed_i \simeq
C\big(\torus^d\big)\otimes C^*\big(\zed_i\big) \simeq
 C\big(\torus^d\big)\otimes C\big(\widetilde\torus_i\big) \simeq
 C\big(\torus^d\times\widetilde\torus_i\big) .
\]
This results in the noncommutative correspondence induced by the diagram
\[
\xymatrix{
 & C\big(\torus^d\times\widetilde\torus_i\big) & \\
C\big(\torus^d\big) \ar[ur]^{{\rm pr}^*} & &
C\big(\torus^d_{\tilde\imath}\big)
\ar[ul]_{\pi_i^*} \\
 & C\big(\torus^{d-1}_{\hat\imath}\big) \ar[ur]_{j} \ar[ul]^{j} &
}
\]
where ${\rm pr}\colon \torus^d\times\widetilde\torus_i\to\torus^d$ is the
projection to the first factor and
$\pi_i\colon \torus^d\times\widetilde\torus_i\to\torus^d_{\tilde\imath}$
omits the $i$-th factor of~$\torus^d$. The algebra inclusions $j$ of
$C\big(\torus^{d-1}_{\hat\imath}\big)$ are induced by the trivial
circle bundle projections $\torus^d\to \torus^{d-1}_{\hat\imath}$ and
$\torus^d_{\tilde\imath}\to \torus^{d-1}_{\hat\imath}$.

By either iterating this construction using Theorem~\ref{cpsp3} or by
direct calculation, the correspondence space for a full T-duality is
obtained by restricting the action of
$\real^d$ from~\eqref{eq:fullTduality} to the lattice
$\Lambda\subset\real^d$, and we analogously find
\begin{gather*}
C\big(\torus^d\big)\rtimes_\rt\Lambda \simeq
C\big(\torus^d\times\widetilde\torus{}^d\big) .
\end{gather*}
Thus the crossed product with the lattice of periods $\Lambda$
defining the $d$-torus $\torus^d$ recovers the doubled torus
$\torus^d\times\widetilde\torus^d$ which is the correspondence space
for the smooth Fourier--Mukai transform, wherein a full or factorized T-duality has a geometric interpretation as
an element of its automorphism group $\sfGL(2d,\zed)$.

\subsection{Topological T-duality for orbifolds}\label{sec:TopTorb}

Let $\sfG$ be a compact connected semisimple Lie group of rank~$r$,
and let $\Gamma\subset\sfG$ be a finite subgroup. The maximal torus
$\T = \sfU(1)^r \simeq \real^r/\zed^r$ of $\sfG$ carries a natural action of
$\real_i$ by translation along the $i$-th direction for $i=1,\dots,r$,
and we can apply a fibrewise T-duality to
the principal torus bundle $\sfG\to\sfG/\T$. Under this $\real$-action
every point of $\sfG$ has
isotropy subgroup $\zed$, and the action descends to the smooth orbifold
$\torus_\Gamma=\Gamma\backslash\sfG$. Then the quotient map
$p_i\colon \torus_\Gamma\to\torus_\Gamma/\real_i$ is the circle fibration
$\torus_\Gamma\to\torus_\Gamma/\sfU(1)_i$, where
$\T=\sfU(1)^{r-1}_{\hat\imath}\times \sfU(1)_i$, and by
Proposition~\ref{prop:Raeburn} the $C^*$-algebraic T-dual
\begin{gather}\label{eq:Tdualorbifold}
C(\Gamma\backslash\sfG)\rtimes_\rt \real_i \simeq
\C\T\big(\Gamma\backslash\sfG/\sfU(1)_i\times \widetilde\torus_i,\delta_i \big)
\end{gather}
of the orbifold $\torus_\Gamma$ is a continuous-trace algebra with
spectrum $\torus_\Gamma/\sfU(1)_i\times\widetilde\torus_i$ and Dixmier--Douady
class $\delta_i=c_1(p_i)\smile\zeta_i$.

In the rank one case, this T-duality is well-known (see, e.g.,~\cite{Bouwknegt2003}): Then
$\sfG=\sfSU(2)$ which we regard as the three-sphere $\Sphere^3$, and
for $\Gamma=\zed_n\subset\T=\sfU(1)$ the twisted torus is the lens
space $\mathbb{L}(n,1)$. The quotient map $p\colon \mathbb{L}(n,1)\to
\Sphere^2$ is a circle bundle whose Chern class $c_1(p)$ is equal to
$n$ times the standard generator of ${\rm H}^2\big(\Sphere^2,\zed\big)\simeq\zed$, and applying~\eqref{eq:Tdualorbifold} we
find that the
$C^*$-algebraic T-dual of $\mathbb{L}(n,1)$ is a continuous-trace
algebra whose spectrum is the trivial circle bundle
$\mathbb{L}(0,1)=\Sphere^2\times\widetilde\torus$ and whose Dixmier--Douady class $\delta$ is
$n$ times the standard generator of ${\rm
 H}^3\big(\Sphere^2\times\widetilde{\torus},\zed\big)\simeq \zed$.

Generally, the correspondence space construction is obtained by noting
that, since the iso\-tro\-py subgroup
for any point of the $\real_i$-action is $\zed_i\subset\real_i$, the group $\zed_i$
acts trivially on the algebra $C(\torus_\Gamma)$ and there are isomorphisms
\[
C(\torus_\Gamma)\rtimes_\rt\zed_i \simeq C(\torus_\Gamma)\otimes
C^*(\zed_i) \simeq C(\torus_\Gamma)\otimes C\big(\widetilde\torus_i\big)
\simeq C\big(\torus_\Gamma\times\widetilde\torus_i\big) .
\]
Let ${\rm pr}\colon \torus_\Gamma\times\widetilde\torus_i\to\torus_\Gamma$
be the projection to the first factor. Since ${\rm H}^2(\sfG,\zed)=0$,
K\"unneth's theorem implies
\[
\big(p_i\times\Id_{\widetilde\torus_i}\big)^*\big(c_1(p_i)\smile\zeta_i\big)=0 \in {\rm
 H}^3\big(\torus_\Gamma\times\widetilde\torus_i,\zed\big) ,
\]
and hence the algebra
$\C\T\big(\torus_\Gamma\times\widetilde\torus_i,(p_i\times\Id_{\widetilde\torus_i})^*\delta_i\big)$
is isomorphic to
$C\big(\torus_\Gamma\times\widetilde\torus_i\big)\otimes\CK$. Then there is the noncommutative correspondence
\[
\xymatrix{
 & C\big(\torus_\Gamma\times\widetilde\torus_i\big) & \\
C(\torus_\Gamma) \ar[ur]^{ \! \! \! \! [{\rm pr}^*]} & &
\C\T\big(\torus_\Gamma/\sfU(1)_i\times\widetilde\torus_i,\delta_i\big)
\ar[ul]_{ \ \ \
 [(p_i\times\Id_{\widetilde\torus_i})^*]} \\
 & C\big(\torus_\Gamma/\sfU(1)_i\big) \ar[ur]_{[j]}
 \ar[ul]^{[p^*]} &
}
\]
as a diagram in the category $\CCK\hspace{-1mm}\CCK$.

\section{Topological T-duality for almost abelian solvmanifolds}\label{sec:TdualityMostow}

A large class of twisted tori of interest as string compactifications
come in the form of fibrations over tori. These are the solvmanifolds
which are based on solvable groups $\sfG$ and generalize the
nilmanifolds discussed in Example~\ref{ex:nilmanifolds}. The fibrations
underlying these twisted tori are called {Mostow
 bundles}~\cite{Mostow1954}, and we are particularly interested in
the cases where the Mostow bundle is a torus bundle. A good source for
the material used in this section is~\cite{Bock2009} (see also~\cite{Console2016, OpreaTralle}).

\subsection{Mostow bundles}

Let $\sfG$ be
a connected and simply-connected solvable Lie group. Recall that its
nilradical $\sfN$ is the maximal connected nilpotent normal subgroup. It
has dimension $\dim\sfN\geq\frac12\dim\sfG$.

We first consider the case $\dim\sfN=\dim\sfG$. Then
$\sfN=\sfG$ and the group $\sfG$ is nilpotent. In this case, under the
conditions discussed in
Example~\ref{ex:nilmanifolds}, there exists a lattice
$\Lambda_\sfG\subset\sfG$ and the twisted torus $\torus_{\Lambda_\sfG}$ is a
nilmanifold. If $\sfG$ is abelian then $\torus_{\Lambda_\sfG}$ is a
torus. If $\sfG$ is non-abelian then there is a group extension
\[
1\longrightarrow [\sfG,\sfG]\longrightarrow\sfG\xrightarrow{ \ \pi \
}\sfG_{\rm ab}\longrightarrow 1
\]
of its commutator subgroup $[\sfG,\sfG]$,
and both $\Lambda_\sfG\cap[\sfG,\sfG]$ and $\pi(\Lambda_\sfG)$ are
lattices in the nilpotent Lie group $[\sfG,\sfG]$ and the
abelianization $\sfG_{\rm ab}:=[\sfG,\sfG]\backslash\sfG$ of $\sfG$, respectively~\cite{Corwin1990}. This exhibits
the twisted torus $\torus_{\Lambda_\sfG}=\Lambda_\sfG\backslash \sfG $ as a fibration over the torus
$\pi(\Lambda_\sfG)\backslash\sfG_{\rm ab}$ with
nilmanifold fibres,{\samepage
\[
\big(\Lambda_\sfG\cap[\sfG,\sfG]\big)\backslash[\sfG,\sfG] \longrightarrow \torus_{\Lambda_\sfG} \longrightarrow
\pi(\Lambda_\sfG)\backslash \sfG_{\rm ab} .
\]
If $[\sfG,\sfG]$ is an abelian Lie group then the twisted
torus is a torus bundle over a torus.}

Suppose now that the group $\sfG$ is not nilpotent. Then $\sfN\backslash\sfG$
is a non-trivial abelian Lie group. If $\sfG$ admits a lattice $\Lambda_\sfG$, then
$\Lambda_{\sfN}:=\Lambda_\sfG\cap\sfN$ is a lattice in $\sfN$ and
$\Lambda_\sfG \sfN=\sfN \Lambda_\sfG$ is a closed subgroup of $\sfG$, so
$\Lambda_\sfG \sfN\backslash\sfG$ is a torus. The twisted torus
$\torus_{\Lambda_\sfG}=\Lambda_\sfG\backslash\sfG$ is then a~fibration over this torus with fibre
the nilmanifold $\Lambda_{\sfN}\backslash\sfN=\Lambda_\sfG\backslash \Lambda_\sfG\,\sfN$. This
bundle is called the \emph{Mostow bundle}~\cite{Mostow1954}. We summarise these statements as
\begin{Theorem}[Mostow bundles]
\label{thm:Mostowbundle}
Let ${\Lambda_\sfG}$ be a lattice in a connected and simply-connected
solvable Lie group $\sfG$ and $\torus_{\Lambda_\sfG} =
\Lambda_\sfG\backslash\sfG$ the associated solvmanifold. Let $\sfN$ be the nilradical
 of $\sfG$.
Then $\Lambda_\sfG\,\sfN$ is a closed subgroup of $\sfG$,
$\Lambda_{\sfN}:=\Lambda_\sfG\cap\sfN$ is a lattice in
$\sfN$, and $\Lambda_\sfG \sfN\backslash\sfG$ is a torus.
It follows that the twisted torus
$\torus_{\Lambda_\sfG}$ is a fibration over this torus with
nilmanifold fibre:
\[
\Lambda_{\sfN}\backslash\sfN=\Lambda_\sfG\backslash\Lambda_\sfG \sfN
\longrightarrow
\torus_{\Lambda_\sfG}\longrightarrow \Lambda_\sfG \sfN\backslash\sfG .
\]
\end{Theorem}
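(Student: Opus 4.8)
The plan is to realise the asserted fibration as the coset-space bundle attached to the chain of closed subgroups $\Lambda_\sfG\subseteq\Lambda_\sfG\sfN\subseteq\sfG$, with all of the non-formal content packaged into Mostow's structure theorem for lattices in solvable Lie groups. First I would record two elementary facts about $\sfG$. Since $\sfG$ is solvable, its derived subgroup $[\sfG,\sfG]$ is connected, normal and nilpotent, hence contained in the nilradical, so $\sfN\supseteq[\sfG,\sfG]$ and the quotient $\sfG/\sfN$ is abelian. The nilradical $\sfN$ is a closed connected normal subgroup, so $\sfN\to\sfG\to\sfG/\sfN$ is a principal bundle; its homotopy exact sequence shows that $\pi_1(\sfG/\sfN)$ is a quotient of $\pi_1(\sfG)$ and hence trivial, so $\sfG/\sfN$ is a simply-connected abelian Lie group and therefore $\sfG/\sfN\cong\real^k$ with $k=\dim\sfG-\dim\sfN$. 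By standard facts about simply-connected solvable Lie groups, $\sfN$ is itself closed, simply-connected and nilpotent.

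Next I would feed in the key input: Mostow's theorem on lattices in connected solvable Lie groups (see \cite{Raghunathan1972,Bock2009}, after \cite{Mostow1954}) gives that $\Lambda_{\sfN}:=\Lambda_\sfG\cap\sfN$ is a lattice in $\sfN$ and that the image $\overline\Lambda:=\Lambda_\sfG\sfN/\sfN$ is a lattice in $\sfG/\sfN$. Since $\sfG/\sfN\cong\real^k$, the lattice $\overline\Lambda$ is isomorphic to $\zed^k$ and $\overline\Lambda\backslash(\sfG/\sfN)\cong\torus^k$ is a torus. Because $\sfN$ is normal, $\Lambda_\sfG\sfN$ is exactly the preimage $q^{-1}(\overline\Lambda)$ of the discrete (hence closed) subgroup $\overline\Lambda$ under the quotient homomorphism $q\colon\sfG\to\sfG/\sfN$; therefore $\Lambda_\sfG\sfN$ is a closed subgroup of $\sfG$, and $\Lambda_\sfG\sfN\backslash\sfG\cong\overline\Lambda\backslash(\sfG/\sfN)\cong\torus^k$.

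It then remains to assemble the fibration. For the chain $\Lambda_\sfG\subseteq\Lambda_\sfG\sfN\subseteq\sfG$ of closed subgroups, with $\Lambda_\sfG$ discrete, the canonical projection $\Lambda_\sfG\backslash\sfG\to\Lambda_\sfG\sfN\backslash\sfG$ is a locally trivial fibre bundle with fibre $\Lambda_\sfG\backslash\Lambda_\sfG\sfN$. To identify the fibre, consider the map $\sfN\to\Lambda_\sfG\backslash\Lambda_\sfG\sfN$, $n\mapsto\Lambda_\sfG n$; it is surjective because any element of $\Lambda_\sfG\sfN$ has the form $\lambda n$ with $\Lambda_\sfG\lambda n=\Lambda_\sfG n$, and $\Lambda_\sfG n=\Lambda_\sfG n'$ holds if and only if $n'n^{-1}\in\Lambda_\sfG\cap\sfN=\Lambda_{\sfN}$, so it descends to a homeomorphism $\Lambda_{\sfN}\backslash\sfN\xrightarrow{\ \cong\ }\Lambda_\sfG\backslash\Lambda_\sfG\sfN$. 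Since $\Lambda_{\sfN}$ is a lattice in the simply-connected nilpotent group $\sfN$, the fibre $\Lambda_{\sfN}\backslash\sfN$ is a nilmanifold in the sense of Example~\ref{ex:nilmanifolds}, and we arrive at the claimed Mostow bundle $\Lambda_{\sfN}\backslash\sfN\to\torus_{\Lambda_\sfG}\to\Lambda_\sfG\sfN\backslash\sfG$.

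The only genuinely substantive step---and the one I expect to be the main obstacle---is Mostow's theorem that $\Lambda_\sfG\cap\sfN$ is a lattice in $\sfN$, equivalently that $\Lambda_\sfG\sfN$ is closed in $\sfG$. This is precisely where solvability of $\sfG$ and cocompactness of $\Lambda_\sfG$ enter in an essential way (the analogous statement fails for general Lie groups), so it must be quoted rather than reproved; the remaining steps are routine manipulations of homogeneous spaces.
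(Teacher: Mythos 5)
Your proposal is correct and follows essentially the same route as the paper, which likewise treats this theorem as a summary of Mostow's structure theory: the paper simply cites Mostow's theorem for the fact that $\Lambda_\sfG\cap\sfN$ is a lattice in $\sfN$ and that $\Lambda_\sfG\sfN$ is closed, and then reads off the fibration from the chain of subgroups, exactly as you do. Your write-up is in fact more detailed than the paper's (e.g.\ the explicit identification $\sfG/\sfN\cong\real^k$ and the coset computation identifying the fibre with $\Lambda_\sfN\backslash\sfN$), and you correctly isolate the one non-formal input that must be quoted rather than reproved.
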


\begin{Remark}\label{rem:Mostow}
The structure group of the Mostow bundle is
$\Lambda_{\sfG_0}\backslash\Lambda_\sfG \sfN$, where $\Lambda_{\sfG_0}$ is
the largest subgroup of $\Lambda_{\sfG}$ which is normal in
$\Lambda_\sfG \sfN$ (cf.~\cite{Bock2009}).
In particular, if $\Lambda_{\sfG}=\Lambda_{\sfG_0}$ then the Mostow bundle is
a principal $\Lambda_{\sfG}\backslash\Lambda_\sfG \sfN$-bundle. In
this case there is a well-defined left $\Lambda_\sfG \sfN$-action on $\torus_{\Lambda_\sfG} =
\Lambda_\sfG\backslash\sfG$ and each point has isotropy subgroup ${\Lambda_\sfG}$,
so that the induced $\Lambda_{\sfG}\backslash\Lambda_\sfG \sfN$-action
is principal.
\end{Remark}

If the solvable Lie group $\sfG$ admits an abelian normal subgroup
$\V$, then $\Lambda_\sfG \V=\V \Lambda_\sfG$ is a~subgroup of $\sfG$; if $\Lambda_\sfG$
is normal in $\Lambda_\sfG \V$, then the Mostow bundle construction can be refined via an intermediate step
involving a principal torus bundle over a second solvmanifold.
Adapting~\cite[Theorem~3.6]{Bock2009} we have

\begin{Proposition}\label{prop:doublebundle}
Let $\sfG$ be a connected and simply-connected solvable Lie group and
$\Lambda_\sfG$ a lattice in $\sfG$. Let $\V$ be a closed normal
abelian Lie subgroup of $\sfG$ such that $\Lambda_\sfG$ is normal in $\V\,\Lambda_\sfG$.
If $\V^\zed:=\Lambda_\sfG \cap \V$ is a lattice in $\V$, then
$\Lambda_\sfG\backslash\Lambda_\sfG \V$ is a torus and
the solvmanifold $\torus_{\Lambda_\sfG}=\Lambda_\sfG\backslash\sfG$ is the total space of
the principal torus bundle
\begin{equation*}
\Lambda_\sfG\backslash\Lambda_\sfG
\V\longrightarrow \torus_{\Lambda_\sfG}
\longrightarrow \Lambda_\sfG \V\setminus \sfG ,
\end{equation*}
with base the solvmanifold
$\torus_{\Lambda_{\sfG^{\text{\tiny$\V$}}}}=\Lambda_{\sfG^{\text{\tiny$\V$}}} \backslash
\sfG^{\text{\tiny$\V$}}:=\big(\V^\Z\backslash\Lambda_{\sfG}\big)
 \big\backslash \big(\V\backslash\sfG\big) =
\Lambda_\sfG \V\setminus \sfG$.
There is moreover a~double fibration
\begin{equation}\label{doublefib}\begin{split}&
\xymatrix{\torus^n \ar[r]&
\torus_{\Lambda_\sfG} \ar[d] \\
\Lambda_{\sfN^{\text{\tiny$\V$}}}
\backslash\sfN^{\text{\tiny$\V$}} \ar[r] &
\torus_{\Lambda_{\sfG^{\text{\tiny$\V$}}}} \ar[d] \\
& \torus^m
}\end{split}
\end{equation}
where $n=\dim \V$,
$m=\dim\big(\sfN^{\text{\tiny$\V$}}\backslash\sfG^{\text{\tiny$\V$}}\big)$,
$\sfN^{\text{\tiny$\V$}}$ is the nilradical of
$\sfG^{\text{\tiny$\V$}}$ and
$\Lambda_{\sfN^{\text{\tiny$\V$}}}={\sfN^{\text{\tiny$\V$}}}\cap
\Lambda_{\sfG^{\text{\tiny$\V$}}}$ the associated lattice.
\end{Proposition}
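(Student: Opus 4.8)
The plan is to reduce the statement to two standard ingredients: the fact that for nested closed subgroups $H_1\subset H_2$ of a Lie group the map $H_1\backslash\sfG\to H_2\backslash\sfG$ is a locally trivial fibre bundle with fibre $H_1\backslash H_2$ (principal, with structure group $H_2/H_1$, when $H_1$ is normal in $H_2$), together with the Mostow fibration of Theorem~\ref{thm:Mostowbundle} applied to the quotient data $\big(\sfG^{\text{\tiny$\V$}},\Lambda_{\sfG^{\text{\tiny$\V$}}}\big)$ with $\sfG^{\text{\tiny$\V$}}=\V\backslash\sfG$.

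First I would set up the group-theoretic scaffolding. Since $\V$ is normal, $\V\Lambda_\sfG=\Lambda_\sfG\V$ is a subgroup of $\sfG$; the hypothesis that $\V^\Z=\Lambda_\sfG\cap\V$ is a lattice in $\V$ is exactly what is needed to deduce, by the classical lemma on discrete subgroups modulo closed normal subgroups (see \cite{Raghunathan1972,Bock2009}), that $\Lambda_\sfG\V$ is closed in $\sfG$ and that the image $\Lambda_{\sfG^{\text{\tiny$\V$}}}=\V^\Z\backslash\Lambda_\sfG$ of $\Lambda_\sfG$ under $\sfG\to\sfG^{\text{\tiny$\V$}}$ is discrete; it is automatically cocompact, being the continuous image of the compact space $\torus_{\Lambda_\sfG}$, so it is a lattice in the connected simply-connected solvable Lie group $\sfG^{\text{\tiny$\V$}}$ and $\torus_{\Lambda_{\sfG^{\text{\tiny$\V$}}}}=\Lambda_{\sfG^{\text{\tiny$\V$}}}\backslash\sfG^{\text{\tiny$\V$}}$ is a well-defined solvmanifold. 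Passing to the quotient by $\V$ first identifies $\Lambda_\sfG\V\backslash\sfG\cong\Lambda_{\sfG^{\text{\tiny$\V$}}}\backslash\sfG^{\text{\tiny$\V$}}=\torus_{\Lambda_{\sfG^{\text{\tiny$\V$}}}}$, which is the asserted base. For the fibre, recall that $\sfG$ being simply connected and solvable it is diffeomorphic to a Euclidean space, so the closed connected subgroup $\V$ is isomorphic to $\real^n$ with $n=\dim\V$; the second isomorphism theorem gives $\Lambda_\sfG\backslash\Lambda_\sfG\V\cong\V/\V^\Z$, a continuous bijection of a compact group onto a Hausdorff one and hence an isomorphism of topological groups, so $\Lambda_\sfG\backslash\Lambda_\sfG\V\cong\real^n/\Z^n=\torus^n$.

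Next I would assemble the bundle. Because $\Lambda_\sfG$ is normal in $\Lambda_\sfG\V$, left translation by $\Lambda_\sfG\V$ descends to a left action of $\Lambda_\sfG\backslash\Lambda_\sfG\V\cong\torus^n$ on $\torus_{\Lambda_\sfG}=\Lambda_\sfG\backslash\sfG$; one checks directly (again using normality of $\Lambda_\sfG$ in $\Lambda_\sfG\V$) that it is well defined, that it is free since an isotropy element must lie in $\Lambda_\sfG\cap\V=\V^\Z$, and that its orbit space is $\Lambda_\sfG\V\backslash\sfG$. Combined with the fibre-bundle fact quoted above, this exhibits
\[
\torus^n\cong\Lambda_\sfG\backslash\Lambda_\sfG\V\longrightarrow\torus_{\Lambda_\sfG}\longrightarrow\Lambda_\sfG\V\backslash\sfG=\torus_{\Lambda_{\sfG^{\text{\tiny$\V$}}}}
\]
as a principal $\torus^n$-bundle, which is the top row of \eqref{doublefib}. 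Finally, applying Theorem~\ref{thm:Mostowbundle} to the lattice $\Lambda_{\sfG^{\text{\tiny$\V$}}}$ in $\sfG^{\text{\tiny$\V$}}$, with nilradical $\sfN^{\text{\tiny$\V$}}$ and associated lattice $\Lambda_{\sfN^{\text{\tiny$\V$}}}=\sfN^{\text{\tiny$\V$}}\cap\Lambda_{\sfG^{\text{\tiny$\V$}}}$, yields the Mostow fibration $\Lambda_{\sfN^{\text{\tiny$\V$}}}\backslash\sfN^{\text{\tiny$\V$}}\to\torus_{\Lambda_{\sfG^{\text{\tiny$\V$}}}}\to\torus^m$ with $m=\dim\big(\sfN^{\text{\tiny$\V$}}\backslash\sfG^{\text{\tiny$\V$}}\big)$; stacking the two fibrations over $\torus_{\Lambda_{\sfG^{\text{\tiny$\V$}}}}$ produces the double fibration \eqref{doublefib}.

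The only genuinely delicate step is the first one, namely that $\Lambda_\sfG\V$ is closed and that $\V^\Z\backslash\Lambda_\sfG$ is discrete in $\sfG^{\text{\tiny$\V$}}$. This is precisely where the cocompactness hypothesis on $\V^\Z$ is indispensable — the image of a lattice in a quotient by a closed normal subgroup need not be discrete in general — and it is the content adapted from \cite[Theorem~3.6]{Bock2009}. Everything downstream (the second isomorphism theorem, the principal bundle structure, and the invocation of Theorem~\ref{thm:Mostowbundle}) is then routine.
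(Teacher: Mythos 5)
Your proposal is correct and follows essentially the same route as the paper's proof: both hinge on the standard lemma that the image of $\Lambda_\sfG$ in $\V\backslash\sfG$ is a lattice (whence $\Lambda_\sfG\V$ is closed), the Steenrod-type fact that $\Lambda_\sfG\backslash\sfG\to\Lambda_\sfG\V\backslash\sfG$ is a principal $\Lambda_\sfG\backslash\Lambda_\sfG\V$-bundle, the identification $\Lambda_\sfG\backslash\Lambda_\sfG\V\cong\V^\Z\backslash\V\cong\torus^n$, and an application of Theorem~\ref{thm:Mostowbundle} to $\big(\sfG^{\text{\tiny$\V$}},\Lambda_{\sfG^{\text{\tiny$\V$}}}\big)$. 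The only differences are cosmetic: you spell out the freeness of the induced $\torus^n$-action and the simple connectivity of $\V$, and you cite Raghunathan/Bock where the paper invokes Corwin--Greenleaf for the lattice-image lemma.
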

\begin{proof}
 Let ${\rm p}\colon \sfG\to \V\backslash\sfG$ be the canonical
 projection. Since $\V$ is normal in $\sfG$, and $\Lambda_\sfG$ and
 $\Lambda_\sfG \cap \V$ are lattices in~$\sfG$ and~$\V$, respectively,
 by~\cite[Lemma~5.1.4(a)]{Corwin1990} it follows that
$ {\rm p} (\Lambda_\sfG)$ is a~lattice in~$\V\backslash\sfG$. Hence
${\rm p}^{-1}({\rm p} (\Lambda_\sfG))=\V \Lambda_\sfG=\Lambda_\sfG \V$
is closed in $\sfG$, and $\pi\colon \sfG\to
\Lambda_\sfG \V\backslash\sfG$ is a bundle. By~\cite[Section~7.4]{Steenrod} (adapted to the smooth case), since
$\Lambda_\sfG$ is a~closed normal subgroup of~$\Lambda_\sfG \V$, it
follows that~$\Lambda_\sfG\backslash\sfG\to
\Lambda_\sfG \V\backslash\sfG$ is a
principal $\Lambda_\sfG\backslash\Lambda_\sfG \V$-bundle (or in
other words, Remark~\ref{rem:Mostow} holds as well under the present hypotheses).
The fiber is a torus because~$\V$ is abelian,
$\Lambda_\sfG\backslash\Lambda_\sfG \V=\V^\zed\backslash\V=\torus^n$,
with $n=\dim\V$. Moreover, because $\V$ is normal in $\sfG$, there
is a~canonical action of the group~$\V^\Z\backslash\Lambda_{\sfG}$ on
the connected and simply-connected solvable Lie group
$\V\backslash\sfG$ (given by $\big(\V^\zed \lambda\big) (\V
g)=\V (\lambda g)$ for $\lambda\in\Lambda_\sfG$ and $g\in\sfG$), so that
$\Lambda_{\sfG^{\text{\tiny$\V$}}} \backslash
\sfG^{\text{\tiny$\V$}}:=\big(\V^\Z\backslash\Lambda_{\sfG}\big)
 \big\backslash \big(\V\backslash\sfG\big)$ is a solvmanifold. It
is then easily proven that $\big(\V^\Z\backslash\Lambda_{\sfG}\big)
 \big\backslash \big(\V\backslash\sfG\big)=\Lambda_\sfG\V\backslash
\sfG$.
The double fibration~\eqref{doublefib} follows immediately from
Theorem~\ref{thm:Mostowbundle} applied to
$\torus_{\Lambda_{\sfG^{\text{\tiny$\V$}}}}= \Lambda_{\sfG^{\text{\tiny$\V$}}} \backslash
\sfG^{\text{\tiny$\V$}}$.
\end{proof}

\begin{Remark}An equivalent form for the double fibration \eqref{doublefib} is given by
\begin{equation*}
\xymatrix{
\torus^n \ar[r] &
\torus_{\Lambda_\sfG} \ar[d] \\
\Lambda_{\sfN^{\text{\tiny$\V$}}
}\backslash\sfN^{\text{\tiny$\V$}} \ar[r] & \torus_{\Lambda_\sfG} /\V \ar[d] \\
& \torus^m
}
\end{equation*}
which is obtained by observing that $\Lambda_\sfG\V\backslash
\sfG=\Lambda_\sfG\backslash \sfG\big{/}\V$ since $\V$
is normal in $\sfG$. Notice also that
$\sfG/\V=\V\backslash\sfG$ and $\Lambda_\sfG/\V^\zed=\V^\zed\backslash
\Lambda_\sfG$ (because $\V^\zed=\Lambda_\sfG\cap \V$ is normal in
$\Lambda_\sfG$), and moreover
$
\Lambda_{\sfG^{\text{\tiny$\V$}}} \backslash
\sfG^{\text{\tiny$\V$}}=\big(\V^\Z\backslash\Lambda_{\sfG}\big)
\big\backslash (\V\backslash\sfG)=
\big(\Lambda_{\sfG}/\V^\zed\big)
\big\backslash (\sfG/\V)$.
\end{Remark}

\subsection{Almost abelian solvmanifolds}\label{sec:abeliansolv}

In contrast to the case of nilpotent groups, there is no
simple criterion for the existence of a~lattice in a~general
connected and simply-connected solvable Lie group, as is required to define a corresponding twisted
torus. To formulate such a criterion, we specialise to \emph{almost
abelian solvable groups}: these are the solvable Lie groups $\sfG$ of
dimension $d$
whose nilradical $\sfN$ has codimension one and is abelian:
\[
\sfN \simeq \real^{d-1} .
\]
Then $\sfG$ has the
structure of a semi-direct product
\[
\sfG = \sfN\rtimes_\varphi\real
\]
for a continuous one-parameter left group action
$\varphi\colon \real\to\Aut(\sfN)$; concretely, $\varphi$ is given by the
adjoint action of the one-dimensional subgroup $\sfH=\real$ on $\sfN$
in the group $\sfG$ (cf.~Section~\ref{sec:semidirect}). This exhibits
$\sfG$ as a nontrivial group extension
\[
1\longrightarrow \sfN \longrightarrow \sfG \longrightarrow \R \longrightarrow 1 .
\]

We can regard the one-parameter group action $\varphi$ as a matrix
$\varphi_x\in\sfGL(d-1,\real)$ for each $x\in\real$ acting on the
vector space $\sfN$, which we identify with $\real^{d-1}$ via a choice
of basis $\vec e_1,\ldots,\vec e_{d-1}$. Since
$\varphi_{0}=\unit_{d-1}$ and $\varphi_x$ is always non-singular, it
follows from the continuity of $\varphi$ and the determinant that
$\det\varphi_x>0$ for all $x\in\real$. Then $\sfG$ admits a lattice
$\Lambda_\sfG$ if and
only if there exists $x_0\in\real^\times$ such that $\varphi_{x_0}$ is
conjugate to an integer matrix $\ttM\in\sfSL(d-1,\zed)$:
\begin{gather}\label{eq:monodromydef}
\Sigma^{-1} \varphi_{x_0} \Sigma=\ttM
\end{gather}
for some
$\Sigma\in\sfGL(d-1,\real)$.
In this case the twisted torus
$\torus_{\Lambda_\sfG}=\Lambda_{\sfG}\backslash \sfG$ is called an {\emph{almost abelian solvmanifold}}.
The condition~\eqref{eq:monodromydef} strongly restricts the homomorphisms
$\varphi\colon \real\to\Aut(\sfN)$; in particular, it requires that the
characteristic polynomial of $\varphi_{x_0}$ has integer
coefficients. In this case the lattice (in the standard basis $\vec
e_1,\ldots,\vec e_{d-1}$) is given by
\[
\Lambda_\sfG = \Sigma\cdot\zed^{d-1}\rtimes_{\varphi|_{x_0\,\zed}}
x_0 \zed ,
\]
which correspondingly sits as a nontrivial group extension
\[
1\longrightarrow \Sigma \cdot \zed^{d-1} \longrightarrow \Lambda_\sfG
\longrightarrow x_0 \zed \longrightarrow 1 .
\]
Then $\Lambda_{\sfN}\setminus\sfN\simeq\torus^{d-1}$, and the corresponding Mostow bundle realises the twisted torus
$\torus_{\Lambda_\sfG}$ as a torus bundle over a circle
$\Lambda_\sfG \sfN\backslash\sfG \simeq \torus$, whose monodromy
is specified by the matrix $\ttM$ in the mapping class group $\sfSL(d-1,\zed)$ of
orientation-preserving
automorphisms up to homotopy of the torus fibres~$\torus^{d-1}$.

For an almost abelian solvmanifold we can make the twisted torus construction more concrete
by choosing the global
coordinates $(z,x)\in\real^{d-1}\times\real$ on the group
manifold (associated with the basis $\vec e_1,\ldots,\vec e_{d-1}$).
The group multiplication of the semi-direct product
$\sfG=\real^{d-1}\rtimes_\varphi\real$ is
\begin{gather}\label{eq:grouplaw}
(z,x) (z',x') = (z + \varphi_x\cdot z',x+x') ,
\end{gather}
where we used $\varphi_x \varphi_{x'} = \varphi_{x+x'}$, and the
inverse of a group element is
\[
(z,x)^{-1} = (-\varphi_{-x}\cdot z,-x) ,
\]
where we used $\varphi_x^{-1}=\varphi_{-x}$.
The twisted torus is defined as the quotient $\torus_{\Lambda_\sfG}=\Lambda_\sfG
{\setminus} \sfG$ which is generated by the equivalence relation $(z,x)\sim
(\Sigma\cdot\gamma,x_0\,\alpha)\,(z,x)$ for all $(\gamma,\alpha)\in\zed^{d-1}\times\zed$. The global structure
of $\torus_{\Lambda_{\sfG}}$ is generated by the simultaneous local coordinate
identifications under the action of the elements
$(\Sigma\cdot\gamma,x_0 \alpha) $ of $\Lambda_\sfG$ given by
\begin{gather}
(z,x) \longmapsto (
 z+\Sigma\cdot\gamma,x) , \nonumber \\
(z,x) \longmapsto (
\Sigma {\tt M}^\alpha\,\Sigma^{-1}\cdot z,x+x_0\,\alpha) .\label{eq:Mostowtorus}
\end{gather}
These identifications explicitly exhibit the twisted torus as a torus bundle over a circle, with
local fiber coordinates $z\in \torus^{d-1}$ and base coordinate $x\in \torus$,
 whose monodromy is specified by the matrix ${\tt M}\in
 \sfSL(d-1,\zed)$, and whose periods are given respectively by
 $\Sigma\in\sfGL(d-1,\real)$ and $x_0\in\real^\times$.

\subsection{$C^*$-algebra bundles}\label{sec:C*bundles}

Mostow bundles and their $C^*$-algebraic T-duals can be grouped together
under the general heading of `$C^*$-algebra bundles', which encompasses
the notions of $C_0(X)$-algebras and $C^*$-bundles, as we now
explain; see Section~8.1 and Appendix~C of~\cite{Williams2007} for
further details.
Let~$X$ be a locally compact Hausdorff space. There are two equivalent
notions for the $C^*$-algebra analogue of a fibre bundle over $X$.

A \emph{$C_0(X)$-algebra} is a $C^*$-algebra $\alg$
equipped with a nondegenerate injection $\iota$ of $C_0(X)$
into the centre of its multiplier algebra, called the \emph{structure
 map}. For $\mathtt{f}\in C_0(X)$ and
$a\in\alg$, we abbreviate $\iota(\mathtt{f})a$ by $\mathtt{f}\cdot a$. This endows
$\alg$ with a $C_0(X)$-bimodule
structure.

Given a family $\balg=(\balg_x)_{x\in X}$ of $C^*$-algebras, a
\emph{section} of $\balg$ is
a map $s\colon X\to\balg$ such that $s(x)\in\balg_x$ for all $x\in X$; we
denote the space of sections of $\balg$ which vanish at infinity by~$\Gamma_0(\balg)$. The family $\balg$ is then called a~\emph{$C^*$-bundle} over $X$ with fibres $\balg_x$ if the following
conditions are satisfied:
\begin{itemize}\itemsep=0pt
\item
$\Gamma_0(\balg)$ is a $C^*$-algebra under pointwise operations and the
 supremum norm;
\item $\balg_x=\{s(x)\, |\, s\in\Gamma_0(\balg)\}$ for each
 $x\in X$;
\item $\Gamma_0(\balg)$ is closed under multiplication by
 $C_0(X)$; and
\item For each $s\in\Gamma_0(\balg)$, the function $x\mapsto \|s(x)\|$
is upper semi-continuous, i.e., the set
$\{x\in X \,|\, \|s(x)\|< \epsilon\}$ is open in $X$ for all $\epsilon>0$.
\end{itemize}
In this
paper we will only be concerned with $C^*$-bundles that have non-zero fibres.

If $\balg$ is a $C^*$-bundle over $X$, then its section algebra
$\Gamma_0(\balg)$ is a $C_0(X)$-algebra: its structure map $\iota$ from
$C_0(X)$ is defined by $\iota(\mathtt{f})s=\mathtt{f} s$. Conversely, if $\alg$ is a
$C_0(X)$-algebra, then the fibre~$\alg_x$ of $\alg$ over $x\in X$ is $\alg_x:=\alg/\CI_x$, where
$\CI_x=\{\mathtt{f}\cdot a\, |\, \mathtt{f}\in C_0(X) , \,
\mathtt{f}(x)=0 , \, a\in\alg\}$ is identified as the ideal in
$\alg$ of sections vanishing at $x$. If
$a\in\alg$, we write $a(x)=a+\CI_x$ for its image in~$\alg_x$. The
function $x\mapsto\|a(x)\|$ is upper semi-continuous
and vanishes at infinity with
\[
\|a\| = \sup_{x\in X} \big\|a(x)\big\|
\]
for all $a\in\alg$. The
elements $a\in \alg$ can in this way be viewed as sections of a
$C^*$-bundle $(\alg_x)_{x\in X}$. We will sometimes use the
notation $\coprod_{x\in X} \alg_x$ for $\alg$ when we wish to emphasise
its structure as a $C^*$-bundle over $X$ with fibre $C^*$-algebras
$\alg_x$. These definitions do not require
local triviality of the bundle nor the fibres of the bundle to be
isomorphic to one another. $C^*$-algebra bundles over $X$ are objects of a
category whose morphisms are \emph{fibrewise} $*$-homomorphisms,
i.e., $C_0(X)$-linear morphisms $\psi\colon \alg\to\balg$: $\psi(\mathtt{f}\cdot
a)=\mathtt{f}\cdot\psi(a)$ for all $\mathtt{f}\in C_0(X)$ and $a\in\alg$; then $\psi$
induces $*$-homomorphisms $\psi_x\colon \alg_x\to\balg_x$ such that
$\psi_x\big(a(x)\big)=\psi(a)(x)$ for all $a\in\alg$.

\begin{Example}[trivial $C^*$-algebra bundles]
If $\Dcal$ is any $C^*$-algebra, then $\alg=C_0(X,\Dcal)\simeq
C_0(X)\otimes\Dcal$ is naturally a $C_0(X)$-algebra with structure map
\[
\big(\iota(\mathtt{f})a\big)(x) := \mathtt{f}(x)\,a(x)
\]
for $\mathtt{f}\in C_0(X)$, $a\in\alg$ and $x\in X$. In this case each fibre
$\alg_x$ is canonically identified with~$\Dcal$ and elements of~$\alg$
are obviously identified with sections.
\end{Example}

\begin{Example}[continuous maps]\label{ex:ctsmaps}
Let~$X$ and~$Y$ be locally compact spaces and $\sigma\colon Y\to X$ a~continuous surjective map. Then $C_0(Y)$ is a $C_0(X)$-algebra with structure map
$\iota(\mathtt{f})\mathtt{g}:=(\mathtt{f}\circ\sigma)\,\mathtt{g}$, for $\mathtt{f}\in C_0(X)$ and $\mathtt{g}\in C_0(Y)$,
and fibers $C_0(Y)_x\simeq C_0\big(\sigma^{-1}(x)\big)$.
\end{Example}

In this paper we are particularly interested in crossed
products of $C^*$-algebra bundles. Let~$\alg$ be a $C_0(X)$-algebra,
and denote by $\Aut_X(\alg)$ the group of {fibrewise
automorphisms} of~$\alg$. A~\emph{fibrewise action} of a locally compact group $\sfG$ on
$\alg$ is then a group homomorphism $\alpha\colon \sfG\to\Aut_X(\alg)$. This
implies that $\alpha$ induces an action~$\alpha^x$ on each fiber
$\alg_x$ for $x\in X$, and in
this case we say that the dynamical system $(\alg,\sfG,\alpha)$ is
\emph{$C_0(X)$-linear}.

\begin{Theorem}\label{thm:crossedfibres}
Let $X$ be a locally compact Hausdorff space, and let
$(\alg,\sfG,\alpha)$ be a $C_0(X)$-linear $C^*$-dynamical system. Then
the crossed product $\alg\rtimes_\alpha\sfG$ is again a
$C_0(X)$-algebra with fibres
\[
(\alg\rtimes_\alpha\sfG)_x\simeq\alg_x\rtimes_{\alpha^x}\sfG ,
\]
where
\[
\alpha_\gamma^x\big(a(x)\big) = \alpha_\gamma(a)(x)
\]
for each $x\in X$, $\gamma\in\sfG$ and $a\in\alg$.
\end{Theorem}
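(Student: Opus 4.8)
The plan is to build the $C_0(X)$-algebra structure on the crossed product directly from the one on $\alg$, and then to identify the fibres by recognising $(\alg\rtimes_\alpha\sfG)_x$ as a quotient of the crossed product by the crossed product of an invariant ideal. First I would define the structure map $\bar\iota\colon C_0(X)\to M(\alg\rtimes_\alpha\sfG)$ on the dense $*$-subalgebra $C_{\rm c}(\sfG,\alg)$ by $(\bar\iota(\mathtt{f})f)(\gamma):=\mathtt{f}\cdot f(\gamma)$. That $\bar\iota(\mathtt{f})$ is a left multiplier of the convolution algebra is immediate from the $C_0(X)$-bimodule axioms on $\alg$; that it is in fact a \emph{central} double centralizer, so that $\bar\iota(\mathtt{f})\in ZM(\alg\rtimes_\alpha\sfG)$, is precisely where the $C_0(X)$-linearity of the dynamical system enters, via $\alpha_\gamma(\mathtt{f}\cdot a)=\mathtt{f}\cdot\alpha_\gamma(a)$ together with the centrality of $\iota(\mathtt{f})$ in $M(\alg)$. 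Nondegeneracy of $\bar\iota$ follows by feeding an approximate identity $\{\mathtt{e}_i\}$ for $C_0(X)$ into the formula and noting that $\bar\iota(\mathtt{e}_i)\star f\to f$ in the crossed product norm, which is dominated by the $L^1$-norm of $f$ over its compact support, where the approximate identity acts uniformly. Injectivity of $\bar\iota$ is inherited from that of $\iota$. This already exhibits $\alg\rtimes_\alpha\sfG$ as a $C_0(X)$-algebra.

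For the fibres, write $\CI_x\subset\alg$ for the ideal of sections vanishing at $x$, so that $\alg_x=\alg/\CI_x$; the identity $\alpha_\gamma(\mathtt{f}\cdot a)=\mathtt{f}\cdot\alpha_\gamma(a)$ shows $\CI_x$ is $\sfG$-invariant, hence $\alpha$ descends to an action $\alpha^x$ on $\alg_x$ which by construction satisfies $\alpha^x_\gamma(a(x))=\alpha_\gamma(a)(x)$, exactly the stated formula. The fibre of the crossed product over $x$ is by definition $(\alg\rtimes_\alpha\sfG)/\widetilde{\CI}_x$, where $\widetilde{\CI}_x$ is the ideal generated by $\bar\iota(C_0(X\setminus\{x\}))\cdot(\alg\rtimes_\alpha\sfG)$. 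I would identify $\widetilde{\CI}_x$ with the crossed-product ideal $\CI_x\rtimes_\alpha\sfG$: the inclusion $\widetilde{\CI}_x\subseteq\CI_x\rtimes_\alpha\sfG$ is clear since $(\bar\iota(\mathtt{f})f)(\gamma)=\mathtt{f}\cdot f(\gamma)\in\CI_x$ whenever $\mathtt{f}(x)=0$, and the reverse inclusion again uses an approximate identity for $C_0(X\setminus\{x\})$, which approximates any element of $C_{\rm c}(\sfG,\CI_x)$ uniformly on its compact support, hence in crossed-product norm.

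Finally, the $\sfG$-equivariant quotient map $q_x\colon\alg\to\alg_x$ induces, by functoriality of the full crossed product, a $*$-homomorphism $q_x\rtimes\sfG\colon\alg\rtimes_\alpha\sfG\to\alg_x\rtimes_{\alpha^x}\sfG$ which on $C_{\rm c}(\sfG,\alg)$ is $f\mapsto(\gamma\mapsto f(\gamma)(x))$; its image is closed and contains the dense subalgebra $C_{\rm c}(\sfG,\alg_x)$, so it is surjective, and by the exactness of the full crossed-product functor with respect to invariant ideals its kernel is precisely $\CI_x\rtimes_\alpha\sfG=\widetilde{\CI}_x$. Passing to quotients then gives the desired isomorphism $(\alg\rtimes_\alpha\sfG)_x\simeq\alg_x\rtimes_{\alpha^x}\sfG$, compatible with the fibre evaluations. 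The one genuinely nontrivial input here is this last exactness statement, namely that $\CI_x\rtimes_\alpha\sfG$ embeds as an ideal in $\alg\rtimes_\alpha\sfG$ and is exactly the kernel of $q_x\rtimes\sfG$; this is a standard fact for full crossed products (it can fail for reduced ones), and I would cite it from~\cite{Williams2007} rather than reprove it. Everything else is routine bookkeeping with the $C_0(X)$-module axioms and approximate identities, with the fibrewise hypothesis used twice and essentially: once to make $\bar\iota(\mathtt{f})$ central rather than merely a left multiplier, and once to make $\CI_x$ a $\sfG$-invariant ideal.
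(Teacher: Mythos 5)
Your proof is correct and follows essentially the same route as the paper, which simply defines the structure map by the formula $(\mathtt{f}\cdot f)(\gamma)=\mathtt{f}\cdot\bigl(f(\gamma)\bigr)$ and delegates the rest to \cite[Theorem~8.4]{Williams2007}; your argument fills in exactly the details of that reference (centrality of $\bar\iota(\mathtt{f})$ from $C_0(X)$-linearity, nondegeneracy via approximate identities, and the identification of fibres through the exact sequence $0\to\CI_x\rtimes_\alpha\sfG\to\alg\rtimes_\alpha\sfG\to\alg_x\rtimes_{\alpha^x}\sfG\to 0$ for the full crossed product). You correctly isolate the two places where the fibrewise hypothesis is used and the one nontrivial imported fact (exactness with respect to invariant ideals), so nothing is missing.
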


\begin{proof}The structure map
of $\alg\rtimes_\alpha\sfG$ is given by precomposing the structure map
of $\alg$ with the natural injection of the center of the multiplier
algebra of $\alg$ into the center of the multiplier algebra of
$\alg\rtimes_\alpha\sfG$; it satisfies
$(\mathtt{f}\cdot f)(\gamma)=\mathtt{f}\cdot\big(f(\gamma)\big)$ for all
$\mathtt{f}\in C_0(X)$, $f\in C_{\rm c}(\sfG,\alg)$ and
$\gamma\in\sfG$. See~\cite[Theorem~8.4]{Williams2007} for further
details.
\end{proof}

\begin{Example}[transformation groups]
\label{ex:transfgroups}
Let $(X,\sfG)$ be a second countable transformation group whose quotient $X/\sfG$ is a
Hausdorff space. By
Example~\ref{ex:ctsmaps},
$C_0(X)$ is a $C_0(X/\sfG)$-algebra whose fiber over $\sfG\cdot x$ is
isomorphic to
$C_0(\sfG/\sfG_x)$, where $\sfG_x=\{\gamma\in\sfG\,|\,\gamma\cdot x=x\}$ is the stabilizer subgroup at $x\in
X$. Then the crossed product $C_0(X)\rtimes_\alpha\sfG$ is the
section algebra of a $C^*$-algebra bundle over $X/\sfG$ whose fiber
over $\sfG\cdot x$ is isomorphic to
$C^*(\sfG_x)\otimes\CK\big(\Leb^2(\sfG/\sfG_x)\big)$~\cite{Williams1989}. In
the special case where
$\sfG=\real$ and $\sfG_x=\zed$ for all $x\in X$, this is contained in
the statement of Proposition~\ref{prop:Raeburn}.
\end{Example}

\begin{Example}[principal torus bundles]\label{ex:NCprincipalT}
Let $E\to X$ be a principal $\torus^r$-bundle. By
Example~\ref{ex:transfgroups}, $C_0(E)$ is a $C_0(X)$-algebra
with fibers $C_0(E)_x\simeq
C(\torus^r)$, and by \eqref{eq:XsfHiso} there is a
stable isomorphism $C_0(E)\rtimes_\rt\torus^r\simeq
C_0(X)\otimes\CK\big(\Leb^2(\torus^r)\big)$. More generally, a
\emph{noncommutative principal $\torus^r$-bundle} on $X$ is a
$C_0(X)$-linear $C^*$-dynamical system $(\alg,\torus^r,\alpha)$ with
an isomorphism
\[
\alg\rtimes_\alpha\torus^r\simeq C_0(X,\CK)
\]
of $C^*$-algebra bundles over $X$. For further details and a
classification of noncommutative principal torus bundles,
see~\cite{Echterhoff2009,Hannabuss2010}.
\end{Example}

\begin{Example}[noncommutative correspondences]
The noncommutative correspondence $\Ccal =
C(\torus_{\Lambda_\sfG})\rtimes_\rt\Z^n$ from
Proposition~\ref{prop:NCcorr} is a noncommutative principal
$\torus^n$-bundle on $X=\torus_{\Lambda_\sfG}$ in the sense of
Example~\ref{ex:NCprincipalT}: The $C^*$-algebra $\Ccal$ is naturally equipped with the
dual action of $\torus^n=\widehat{\Z^n}$, and the Takai duality
theorem implies that there is an isomorphism $\Ccal\rtimes_{\widehat{\rt}}\torus^n\simeq
C(\torus_{\Lambda_\sfG})\otimes\CK\big(\ell^2(\Z^n)\big)$ of
$C^*$-algebra bundles over the twisted torus $\torus_{\Lambda_\sfG}$.
\end{Example}

\begin{Remark}\label{rem:MoritaRKK}
There is a natural notion of Morita equivalence of $C^*$-algebra bundles over~$X$, similar to the notion of
equivariant Morita equivalence from Theorem~\ref{equivMorita}, which
uses the $C_0(X)$-bimodule structures: a~\emph{$C_0(X)$-linear Morita equivalence} between two
$C_0(X)$-algebras is a Morita equivalence which is compatible with the
$C^*$-bundle structures over~$X$. More generally, there is a category
$\mathcal{R}\CCK\hspace{-1mm}\CCK_X$ of $C^*$-algebra bundles over $X$ whose morphisms are
elements of Kasparov's groups $\mathcal{R}{\rm KK}(X;\alg,\balg)$, see, e.g., \cite{Echterhoff2009}: the cycles are the usual cycles
$(\hil,\phi,T)$ for Kasparov's bivariant K-theory ${\rm
 KK}(\alg,\balg)$ (cf.\ Section~\ref{sec:TdualityKK}) with the
additional requirement that $\phi\colon \alg\to\End_\balg(\hil)$ is
$C_0(X)$-linear. There is an obvious faithful functor
$\mathcal{R}\CCK\hspace{-1mm}\CCK_X \to \CCK\hspace{-1mm}\CCK$ which forgets the
$C_0(X)$-algebra structures. Isomorphic $C^*$-bundles in the category
$\mathcal{R}\CCK\hspace{-1mm}\CCK_X$ are precisely the $\mathcal{R}{\rm KK}$-equivalent
$C^*$-bundles. If $\alg$ and $\balg$ are isomorphic in
$\mathcal{R}\CCK\hspace{-1mm}\CCK_X$, i.e., there exists an invertible class
$\alpha\in \mathcal{R}{\rm KK}(X;\alg,\balg)$, then they are also
isomorphic as
$C^*$-algebras in the category $\CCK\hspace{-1mm}\CCK$.
\end{Remark}

\subsection[$\real^n$-actions on Mostow bundles]{$\boldsymbol{\real^n}$-actions on Mostow bundles}\label{sec:RMostow}

We can now apply the results of Section~\ref{sec:TopTtorus} to the
class of twisted tori given in
Section~\ref{sec:abeliansolv}.
The Mostow fibration of any almost abelian solvmanifold identifies
$\torus_{\Lambda_\sfG}$ as a torus bundle over a~circle, hence the
algebra of functions $C(\torus_{\Lambda_\sfG})$ is a~$C(\torus)$-algebra. In other words, $C(\torus_{\Lambda_\sfG})$ is an
object of the category $\mathcal{R}\CCK\hspace{-1mm}\CCK_\torus$, and we are
interested in the T-duality isomorphisms of~$C(\torus_{\Lambda_\sfG})$
in this category. In particular, given a fibrewise right action of the abelian
Lie group~$\R^n$ on~$\torus_{\Lambda_\sfG}$, it follows from
Theorem~\ref{thm:crossedfibres} that the $C^*$-algebraic T-dual
$C(\torus_{\Lambda_\sfG})\rtimes_\rt\R^n$ is also a $C(\torus)$-algebra, and by~\cite[Theorem~3.5]{Echterhoff2009} the $C^*$-bundles
$C(\torus_{\Lambda_\sfG})$ and
$C(\torus_{\Lambda_\sfG})\rtimes_\rt\R^n$ are isomorphic as $C^*$-algebras in the
category $\mathcal{R}\CCK\hspace{-1mm}\CCK_\torus$.

In order to have sensible definitions of
T-duality, we need to identify the
homologically non-trivial one-cycles of the twisted torus
$\torus_{\Lambda_\sfG}$, which are determined
in~\cite[Proposition~4.7]{Bock2009}. Write
\[
\ttM = (m_{ij})
\]
for the
integer matrix elements $m_{ij}\in\Z$ of the monodromy matrix. Since
$\sfG=\real^{d-1}\rtimes_\varphi\real$ is simply-connected, the
fundamental group of the twisted torus is
$\pi_1(\torus_{\Lambda_\sfG})\simeq\Lambda_\sfG$ whose abelianisation
$\Lambda_\sfG/[\Lambda_\sfG,\Lambda_\sfG]$ gives
the first homology group via the presentation
\begin{gather}\label{eq:H1torus}
{\rm H}_1(\torus_{\Lambda_\sfG},\zed) = \zed \oplus \bigg\langle
\hat e_1,\dots,\hat e_{d-1} \, \Big| \, \sum_{j=1}^{d-1} m_{ji} \hat
e_j = \hat e_i
\ \mbox{for} \
i=1,\dots,d-1 \bigg\rangle,
\end{gather}
where the first factor of $\zed$ corresponds to the base circle of the
torus fibration and the generators $\hat e_1,\dots,\hat e_{d-1}$
correspond to the torus fibres; they are given by
\begin{gather}\label{eq:hatei}
\hat e_i:=\sum_{k=1}^{d-1} \Sigma_{ki} \vec e_k,
\end{gather}
where
$\vec e_1,\ldots, \vec e_{d-1}$ is the standard basis of $\zed^{d-1}$ giving the
group law~\eqref{eq:grouplaw}.

We can then apply the structure theorem for finitely-generated
$\zed$-modules by appealing to some classical matrix algebra.
From the presentation \eqref{eq:H1torus}, ${\rm
 H}_1(\torus_{\Lambda_\sfG},\zed) = \zed \oplus {\rm{coker}}(\varphi_{x_0}-{\rm{id}}_{\Z^{d-1}})$
where $(\varphi_{x_0}-{\rm{id}}_{\Z^{d-1}})\colon \zed^{d-1}\to\zed^{d-1}$ in the
basis $\hat e_1,\ldots,\hat e_{d-1}$ is given by the integer relation matrix
\begin{gather}\label{eq:relationmatrix}
{\tt A}:=\ttM-\unit_{d-1}~.
\end{gather}
Let $r$ be the rank of $\tt A$. This matrix can be brought into its Smith normal form~$\tt D$
by finding invertible integer matrices ${\tt L}, {\tt
 R}\in\sfGL(d-1,\zed)$ such that
\[
{\tt D} = {\tt L} {\tt A} {\tt R}
\]
is diagonal with entries $m_i\in\zed$ for
$i=1,\dots,d-1$. The
integers~$m_i$ are the \emph{elementary divisors} of
${\tt A}$. They have the properties that $m_i$ divides
$m_{i+1}$, for $0<i<d-1$, and in particular $m_i=0$ for $i>r$; they can
be computed explicitly (up to sign) as
\[
m_i = \frac{d_i({\tt A})}{d_{i-1}({\tt A})},
\]
where the \emph{$i$-th determinant divisor} $d_i({\tt A})$ is the greatest common
divisor of all $i{\times} i$ minors of the relation matrix ${\tt A}$,
with $d_0({\tt A}):=1$. The matrices ${\tt L}, {\tt
 R}\in\sfGL(d-1,\zed)$ are found by reducing the matrix ${\tt A}$ to
its Smith normal form ${\tt D}$ through a sequence of elementary row
and column operations over~$\zed$, see, e.g.,~\cite{Havas1997}.

Given the Smith normal form, we set
\begin{gather}\label{eq:tildeei}
\tilde e_i:=\sum_{j=1}^{d-1}\,\big({\tt L}^{-1}\big)_{ji}\,\hat e_j
\end{gather}
and observe that
the image of $\tt A$, which is generated over $\zed$ by the vectors
\[
\sum_{j=1}^{d-1} {\tt A}_{ji} \hat e_j=\sum_{k,l=1}^{d-1} \big({\tt
 R}^{-1}\big)_{ki} {\tt
 D}_{lk} \tilde e_k ,
\]
is equivalently generated by the
vectors $m_k \tilde e_k$ with $k=1,\ldots, r$.
Hence
\begin{align*}
{\rm{coker}}(\varphi_{x_0}-{\rm{id}}_{\Z^{d-1}})&=\big\langle \hat e_1, \ldots , \hat
e_{d-1}\big\rangle\Big/\bigg\langle \sum\limits_{j=1}^{d-1} {\tt A}_{j1} \hat e_j,\ldots,
 \sum\limits_{j=1}^{d-1} {\tt A}_{j\:\!d-1} \hat e_j\bigg\rangle\\
&=\big\langle \tilde e_1,
\ldots , \tilde e_{d-1}\big\rangle\big/\big\langle m_1 \tilde e_1,\ldots, m_r \tilde
e_r\big\rangle
\end{align*}
and
\begin{gather}\label{eq:H1gendecomp}
{\rm H}_1(\torus_{\Lambda_\sfG},\zed) \simeq \zed \oplus
\zed^{d-1-r} \oplus \bigoplus_{i=1}^r \zed_{m_i} .
\end{gather}

We are exclusively
interested in the natural $\R^n$-actions on $\torus_{\Lambda_\sfG}$ which descend
from actions of abelian subgroups $\R^n\subset\sfG$, acting on $\sfG$
by right multiplication. They can be
organised into three classes associated with the different types of
summands in the $\Z$-module presentation of the homology group~\eqref{eq:H1gendecomp}, and we only retain those which are fiberwise
actions on the Mostow bundle.
The first summand~$\Z$ in~\eqref{eq:H1gendecomp} corresponds to the subgroup
\[
\R_x = \big\{(0,\xi)\in\sfG\big\}
\]
acting on $\sfG$ by right multiplication:
\[
(z,x) (0,\xi) = (z,x+\xi)
\]
for all $(z,x)\in\sfG$ and $\xi\in\R$. Clearly this does not descend
to a fiberwise
action on the twisted torus $\torus_{\Lambda_\sfG}$, and the crossed
product $C(\torus_{\Lambda_\sfG})\rtimes_\rt\R_x$ is no longer a
$C(\torus)$-algebra. Thus an $\R_x$-action takes us out of the
category $\mathcal{R}\CCK\hspace{-1mm}\CCK_\torus$, and we will henceforth
discard $\R^n$-actions where $\R^n$ contains the subgroup~$\R_x$.

For the remaining types of summands in \eqref{eq:H1gendecomp}, we can
give explicit descriptions of the $C^*$-algebraic T-duals of an almost
abelian solvmanifold. We consider both classes in turn. As the only solvmanifolds in one and two dimensions are tori, which are
already treated by our analysis from
Section~\ref{sec:toruscrossedprod}, we assume $d\geq3$ for the
remainder of this paper.

\subsection[$\real_y$-actions: Circle bundles with $H$-flux]{$\boldsymbol{\real_y}$-actions: Circle bundles with $\boldsymbol{H}$-flux}\label{sec:Rygen}

Let us consider the second summand $\Z^{d-1-r}$ in
\eqref{eq:H1gendecomp}, which corresponds to the lattice $\Lambda_\sfG\cap
\ker\big(\varphi_{x_0}-{\rm id}_{\R^{d-1}}\big)$ in $\ker\big(\varphi_{x_0}-\Id_{\R^{d-1}}\big)$.
In terms of the generators
\[
e'_i:=\sum_{j=1}^{d-1} {\tt R}_{ji} \hat e_j
\]
of
$\Lambda_\sfG\cap \sfN$, the sublattice $\Lambda_\sfG\cap
\ker\big(\varphi_{x_0}-\Id_{\R^{d-1}}\big)$ is generated by the
vectors $e'_{r+1},e'_{r+2},\ldots,e'_{d-1}$ in the kernel of the
relation matrix~\eqref{eq:relationmatrix}. We
 begin with some
elementary observations. Firstly, the subgroups $\ker\big(\varphi_{x_0}-\Id_{\R^{d-1}}\big)$ and
$\Lambda_\sfG$ commute in $\sfG$:
$(-v,0) (0,x_0) (v,0)=(0,x_0)\in \Lambda_\sfG$ for all $v\in \ker\big(\varphi_{x_0}-\Id_{\R^{d-1}}\big)$.
Secondly, $\ker\big(\varphi_{x_0}-\Id_{\R^{d-1}}\big)$ is a closed abelian
normal subgroup of~$\sfG$:
$(z,x) (v,0) (z,x)^{-1}=(\varphi_x(v),0)\in
\ker\big(\varphi_{x_0}-\Id_{\R^{d-1}}\big)$ since
$\varphi_{x_0}(\varphi_x(v))=\varphi_{x}(\varphi_{x_0}(v))=\varphi_x(v)$.

In the following we consider a subgroup $\V\simeq\real^n\subset
\ker\big(\varphi_{x_0}-\Id_{\R^{d-1}}\big)$ which is normal in~$\sfG$
such that $\V^\zed:=\V\cap\Lambda_\sfG$ is a lattice in~$\V$.
This is the case, for example, if~$\V$ is the span of a~subset of the generators
 $e'_{r+1},e'_{r+2},\ldots, e'_{d-1}$. As an immediate consequence of
 Lemma~\ref{lem:semidirectquotient} we then have

\begin{Proposition}\label{lem:quotientgroup}
The quotient group $\sfG^{\text{\tiny$\V$}}:=\sfG/\V$ is a $(d{-}n)$-dimensional almost
abelian solvable Lie group
$$
\sfG^{\text{\tiny$\V$}}\simeq\R^{d-n-1}
\rtimes_{\varphi^{\text{\tiny$\V$}}}\R ,
$$
where $\varphi^{\text{\tiny{$\V$}}}\colon \R\to\sfGL(d-n-1,\R)$ is
defined by $\varphi^{\text{\tiny{$\V$}}}_x[z]
=(\unit_{d-1}-{\rm pr}_{\text{\tiny{$\V$}}})\,\varphi_x(z)$ for
all $x\in\R$ and $z\in \real^{d-1}$, with ${\rm pr}_{\text{\tiny{$\V$}}}$ the projection
of $\R^{d-1}$ to $\V$ and $[z]=(\unit_{d-1}-{\rm
 pr}_{\text{\tiny{$\V$}}})(z)\in \real^{d-1}/\V$.
\end{Proposition}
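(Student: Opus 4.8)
The plan is to obtain this directly from Lemma~\ref{lem:semidirectquotient}. First I would record the two structural facts about $\V$ that the hypotheses supply: being a connected subgroup of the vector group $\sfN=\real^{d-1}$ it is a linear subspace of dimension $n$; and being normal in $\sfG=\sfN\rtimes_\varphi\real$ it is invariant under every $\varphi_x$, since conjugating $(v,0)$ by $(0,x)$ in $\sfG$ yields $(\varphi_x(v),0)$. Applying Lemma~\ref{lem:semidirectquotient} with $\sfN=\real^{d-1}$, $\sfH=\real$ and this normal subgroup $\V\subset\sfN$ then gives an isomorphism $\sfG/\V\simeq(\real^{d-1}/\V)\rtimes_{\varphi^\V}\real$, where $\varphi^\V$ is the action of $\real$ that $\varphi$ induces on $\real^{d-1}/\V$; the $\varphi_x$-invariance of $\V$ is exactly what makes this induced action well defined.

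Next I would make the quotient concrete. Since $\mathrm{pr}_\V$ is a projection of $\real^{d-1}$ onto $\V$, the assignment $z\mapsto[z]:=(\unit_{d-1}-\mathrm{pr}_\V)(z)$ identifies $\real^{d-1}/\V$ with the complement $\mathrm{im}(\unit_{d-1}-\mathrm{pr}_\V)\simeq\real^{d-1-n}$, and under this identification the induced action carries $[z]$ to $[\varphi_x(z)]=(\unit_{d-1}-\mathrm{pr}_\V)\varphi_x(z)$. The only computation that is needed is the representative-independence of the right-hand side: if $z-z'\in\V$ then $\varphi_x(z)-\varphi_x(z')=\varphi_x(z-z')\in\varphi_x(\V)=\V=\ker(\unit_{d-1}-\mathrm{pr}_\V)$. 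This reproduces the stated formula for $\varphi^\V\colon\real\to\sfGL(d-1-n,\real)$, which is a continuous one-parameter subgroup because $z\mapsto[z]$ is a continuous homomorphism intertwining $\varphi_x$ with $\varphi^\V_x$ and $\varphi$ is continuous, and $\varphi^\V_x$ is invertible since $\varphi_x$ descends to an automorphism of $\real^{d-1}/\V$.

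Finally I would read off the stated properties: $\sfG^\V=\real^{d-1-n}\rtimes_{\varphi^\V}\real$ has dimension $(d-1-n)+1=d-n$, it is solvable as a quotient of the solvable group $\sfG$, and it contains the codimension-one abelian normal subgroup $\real^{d-1-n}$, which is its nilradical, so that $\sfG^\V$ is an almost abelian solvable Lie group. (Strictly, $\real^{d-1-n}$ is the nilradical exactly when $\varphi^\V$ is non-nilpotent, as holds in all cases relevant here; in the remaining degenerate situations $\sfG^\V$ is instead nilpotent or abelian.) The lattice hypothesis on $\V^\zed$ plays no role in this particular statement. I expect no real obstacle: the entire content is the unwinding of Lemma~\ref{lem:semidirectquotient} together with the elementary invariance $\varphi_x(\V)=\V$, which is precisely where normality of $\V$ in all of $\sfG$ — not merely in $\sfN$ — is essential.
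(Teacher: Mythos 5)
Your proposal is correct and follows exactly the route the paper takes: the paper offers no separate proof, simply declaring the Proposition "an immediate consequence of Lemma~\ref{lem:semidirectquotient}", and your argument is precisely the unwinding of that Lemma together with the observation that normality of $\V$ in $\sfG$ forces $\varphi_x(\V)=\V$, which makes the induced action well defined. Your extra remarks (the identification of $\real^{d-1}/\V$ with ${\rm im}(\unit_{d-1}-{\rm pr}_\V)$, the irrelevance of the lattice hypothesis here, and the caveat about the nilradical in the degenerate abelian case) are all accurate and somewhat more careful than the paper itself.
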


We are particularly interested in the induced action of $\V$ on the twisted torus $\torus_{\Lambda_\sfG}$.

\begin{Proposition}\label{lem:quotienttorus}
Let $\V$, as above, be normal in $\sfG$ and let $\Lambda_\sfG$ be
normal in $\Lambda_\sfG\,\V$. Then the quotient map
$p_{\text{\tiny{$\V$}}}\colon \torus_{\Lambda_\sfG}\to\torus_{\Lambda_\sfG}/\V$
is a principal torus bundle of rank~$n=\dim \V$ over an almost abelian
solvmanifold $\torus_{\Lambda_{\sfG^{\text{\tiny$\V$}}}}$ of
dimension $d{-}n$. Its Chern class
$c_1(p_{\text{\tiny{$\V$}}})\in\rmH^2(\torus_{\Lambda_{\sfG^{\text{\tiny$\V$}}}},\Z)$
can be computed by Chern--Weil theory from the curvature of the
connection $\kappa_{\text{\tiny{$\V$}}}\in\Omega^1(\torus_{\Lambda_\sfG},\V)$ given by
\[
\kappa_{\text{\tiny{$\V$}}} = - ({\rm pr}_\V \varphi_{-x})\cdot \dd z
\]
in the notation of Proposition~{\rm \ref{lem:quotientgroup}}.
\end{Proposition}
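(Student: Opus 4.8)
The first assertion follows by assembling the structural results already established. Under the standing hypotheses of this subsection, $\V\simeq\real^n$ is a closed abelian normal subgroup of $\sfG$ contained in $\ker(\varphi_{x_0}-\unit_{d-1})$, $\V^\zed=\V\cap\Lambda_\sfG$ is a lattice in $\V$, and $\Lambda_\sfG$ is normal in $\Lambda_\sfG\,\V$; hence all hypotheses of Proposition~\ref{prop:doublebundle} hold, and $p_{\text{\tiny$\V$}}\colon\torus_{\Lambda_\sfG}\to\Lambda_\sfG\,\V\backslash\sfG$ is a principal torus bundle with fibre $\Lambda_\sfG\backslash\Lambda_\sfG\,\V=\V^\zed\backslash\V\simeq\torus^n$, which has rank $n=\dim\V$. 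Since $\V$ is normal in $\sfG$ we have $\Lambda_\sfG\V\backslash\sfG=\Lambda_\sfG\backslash\sfG/\V=\torus_{\Lambda_\sfG}/\V$, and this space equals $\Lambda_{\sfG^{\text{\tiny$\V$}}}\backslash\sfG^{\text{\tiny$\V$}}=\torus_{\Lambda_{\sfG^{\text{\tiny$\V$}}}}$, which by Proposition~\ref{lem:quotientgroup} is an almost abelian solvmanifold of dimension $d-n$; this is the asserted bundle structure.

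For the Chern class I would work on the universal cover $\sfG$ in the global coordinates $(z,x)\in\real^{d-1}\times\real$ with group law \eqref{eq:grouplaw}, writing $\varphi_x=\e^{xA}$ for its infinitesimal generator $A=\tfrac{\dd}{\dd x}\varphi_x\big|_{x=0}\in\mathfrak{gl}(d-1,\real)$; normality of $\V$ gives $\varphi_x(\V)=\V$ for all $x$. The first step is to check that $\kappa_{\text{\tiny$\V$}}=-({\rm pr}_\V\varphi_{-x})\cdot\dd z$ descends to a $\V$-valued $1$-form on $\torus_{\Lambda_\sfG}$: it is manifestly unchanged by the identifications $z\mapsto z+\Sigma\cdot\gamma$ of~\eqref{eq:Mostowtorus}, while under $(z,x)\mapsto(\Sigma{\tt M}^\alpha\Sigma^{-1}\cdot z,\,x+x_0\alpha)=(\varphi_{x_0\alpha}\cdot z,\,x+x_0\alpha)$ one has $\dd z\mapsto\varphi_{x_0\alpha}\cdot\dd z$ and $\varphi_{-x}\mapsto\varphi_{-x-x_0\alpha}$, so the relation $\varphi_{-x-x_0\alpha}\,\varphi_{x_0\alpha}=\varphi_{-x}$ leaves $\kappa_{\text{\tiny$\V$}}$ invariant. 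Next, the vertical subspace of $T_{(z,x)}\sfG$ along the $\V$-orbits is $\varphi_x(\V)\times\{0\}=\V\times\{0\}$, and $\kappa_{\text{\tiny$\V$}}(\varphi_x v,0)=-{\rm pr}_\V(\varphi_{-x}\varphi_x v)=-v$ for $v\in\V$, so $\kappa_{\text{\tiny$\V$}}$ restricts on each fibre to the Maurer--Cartan form up to the standard sign; one then verifies that $\kappa_{\text{\tiny$\V$}}$ is invariant under the structure group and hence a principal connection.

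Its curvature is then computed to be
\[
F_{\text{\tiny$\V$}}=\dd\kappa_{\text{\tiny$\V$}}={\rm pr}_\V\big(A\,\varphi_{-x}\big)\,\dd x\wedge\dd z ,
\]
a closed $\V$-valued $2$-form which is horizontal and $\V$-invariant, hence descends to a closed $2$-form on the base $\torus_{\Lambda_{\sfG^{\text{\tiny$\V$}}}}$. Finally the Chern--Weil theorem for principal torus bundles identifies $c_1(p_{\text{\tiny$\V$}})\in\rmH^2(\torus_{\Lambda_{\sfG^{\text{\tiny$\V$}}}},\zed)$ — with the integral structure fixed by the period lattice $\V^\zed$ — with the de~Rham class of $F_{\text{\tiny$\V$}}$ up to the usual normalizing factor; this is the description in the statement.

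The step I expect to be the main obstacle is the verification that $\kappa_{\text{\tiny$\V$}}$ is a bona fide principal connection and, equivalently, that $F_{\text{\tiny$\V$}}$ is a basic form. The subtlety is that the naive right $\V$-action on $\torus_{\Lambda_\sfG}$ has, a priori, $x$-dependent isotropy subgroup $\varphi_{-x}(\V^\zed)$ rather than the constant lattice $\V^\zed$; reconciling this with the principal $\torus^n$-structure of Proposition~\ref{prop:doublebundle} forces $\varphi_x$ to restrict to the identity on $\V$ (so that $A\,\V=0$), and it is exactly this fact — the place where the interplay between $\V$, the subspace $\ker(\varphi_{x_0}-\unit_{d-1})$, and the lattice $\Lambda_\sfG$ must be used — that underlies both the invariance of $\kappa_{\text{\tiny$\V$}}$ and the horizontality of $F_{\text{\tiny$\V$}}$. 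A further, purely bookkeeping, point is matching the normalization between the period lattice $\V^\zed$ and the model torus $\torus^n=\real^n/\zed^n$ entering the definition of $c_1$.
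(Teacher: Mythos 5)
Your argument is correct and is essentially the paper's proof: the bundle structure is obtained from Proposition~\ref{prop:doublebundle}, and $\kappa_{\text{\tiny$\V$}}$ is recognised as minus the $\V$-component of the left-invariant Maurer--Cartan form $\varphi_{-x}\cdot\dd z$, which descends to $\torus_{\Lambda_\sfG}$ because the lattice acts by left translations. The paper's own proof is in fact terser than yours (it stops at naming the Maurer--Cartan form and asserting the result), so your explicit checks of lattice invariance, of the restriction to the fibres, and of the point that one needs $\varphi_x|_{\V}=\Id$ --- which does hold automatically in the one-dimensional case $\V=\R_{y_0}$ that is actually used in Theorem~\ref{prop:RyTduality} --- only supply details that the paper leaves implicit.
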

\begin{proof}
The first statement follows from Proposition~\ref{prop:doublebundle}.
For the Chern class, we note that the left-invariant Maurer--Cartan
one-forms on the Lie group $\sfG$ are given by $\dd x$ and
$P\in\Omega^1\big(\sfG,\R^{d-1}\big)$ where
\[
P = \varphi_{-x}\cdot \dd z ,
\]
and this descends to the twisted torus $\torus_{\Lambda_\sfG}$.
The desired principal $\torus^n$-connection on $\torus_{\Lambda_\sfG}$
is then given by
\[
\kappa_\V=-{\rm pr}_\V P = -({\rm pr}_\V \varphi_{-x})\cdot \dd z
\]
and the result follows.
\end{proof}

By virtue of the fibration
$p_\V\colon \torus_{\Lambda_\sfG}\to\torus_{\Lambda_{\sfG^{\text{\tiny$\V$}}}}$, the
algebra of functions $C(\torus_{\Lambda_\sfG})$ is also a
$C(\torus_{\Lambda_{\sfG^{\text{\tiny$\V$}}}})$-algebra. We are particularly interested in the case $n=1$, whereby we can
explicitly apply our framework of topological T-duality. Combining
Propositions~\ref{lem:quotienttorus} and~\ref{prop:Raeburn}, we immediately arrive at
\begin{Theorem}\label{prop:RyTduality}
Let $y_0\in \ker\big(\varphi_{x_0}-\Id_{\R^{d-1}}\big)$ and let
\[
\R_{y_0}:=\R (y_0,0)
\]
be the corresponding one-dimensional subgroup. Suppose that $\R_{y_0}$ is normal
in $\sfG$ and $\Lambda_\sfG$ is normal in $\Lambda_\sfG \R_{y_0}$
$($this is the case, for example, if $(y_0,0)$ is in the center of $\sfG)$.
Let $\sfG^{y_0}=\sfG/\R_{y_0}$ be the almost abelian
solvable Lie group constructed by Proposition~{\rm \ref{lem:quotientgroup}}, and
$p_{y_0}\colon \torus_{\Lambda_{\sfG}}\to\torus_{\Lambda_{\sfG^{y_0}}}$ the
principal circle bundle
constructed by Proposition~{\rm \ref{lem:quotienttorus}}.
Then the $C^*$-algebraic T-dual
\[
C(\torus_{\Lambda_\sfG})\rtimes_\rt\R_{y_0} \simeq \C\T\big(\torus_{\Lambda_{\sfG^{y_0}}}\times\torus_{y_0},\delta_{y_0}\big)
\]
is a continuous-trace algebra with spectrum $\torus_{\Lambda_{\sfG^{y_0}}}\times\torus_{y_0}$
and Dixmier--Douady class
\[
\delta_{y_0}=c_1(p_{y_0})\smile \zeta_{y_0} ,
\]
where $\zeta_{y_0}$ is the standard
generator of $\rmH^1(\torus_{y_0},\Z)\simeq\Z$ and the Chern--Weil
representative of
$c_1(p_{y_0})\in\rmH^2(\torus_{\Lambda_{\sfG^{y_0}}},\Z)$
is the curvature of the connection
$\kappa_{y_0}\in\Omega^1(\torus_{\Lambda_{\sfG}})$ on this circle
bundle given by
\[
\kappa_{y_0} = - ({\rm pr}_{y_0} \varphi_{-x})\cdot\dd z .
\]
\end{Theorem}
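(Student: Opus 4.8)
The plan is to reduce the statement to a direct application of Proposition~\ref{prop:Raeburn}, after checking that the $\R_{y_0}$-action on $\torus_{\Lambda_\sfG}$ satisfies the hypotheses of that proposition and after identifying its orbit space via Propositions~\ref{lem:quotientgroup} and~\ref{lem:quotienttorus}.

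First I would verify that the right $\R_{y_0}$-action on $\sfG$ descends to a fibrewise right action on $\torus_{\Lambda_\sfG}=\Lambda_\sfG\backslash\sfG$ every point of which has isotropy subgroup isomorphic to $\Z$. Descent is automatic since left and right translations on $\sfG$ commute. For the isotropy, the group law~\eqref{eq:grouplaw} gives $\Lambda_\sfG(z,x)\cdot(t\,y_0,0)=\Lambda_\sfG\big(z+t\,\varphi_x(y_0),x\big)$, and normality of $\R_{y_0}$ in $\sfG$ forces $\varphi_x(y_0)=y_0$ for all $x\in\R$: indeed $(z,x)(y_0,0)(z,x)^{-1}=(\varphi_x(y_0),0)$ must lie in $\R_{y_0}$, so $\varphi_x(y_0)=c(x)\,y_0$ for a positive continuous multiplicative function $c$, hence $c$ is exponential, and $c(x_0)=1$ forces $c\equiv1$. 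Consequently the isotropy at $\Lambda_\sfG(z,x)$ equals $\{t\in\R\mid t\,y_0\in\Sigma\cdot\Z^{d-1}\}$, independent of the point, and is a nontrivial discrete subgroup of $\R$, nontriviality being precisely the condition that $\R_{y_0}^\Z:=\R_{y_0}\cap\Lambda_\sfG$ is a lattice in $\R_{y_0}$ (which holds in the setting of Section~\ref{sec:Rygen}, e.g.\ when $y_0\in\Sigma\cdot\Z^{d-1}$). Hence this isotropy subgroup is isomorphic to $\Z$, and the fibre circle of the induced action is $\torus_{y_0}:=\R_{y_0}/\R_{y_0}^\Z$.

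Next I would identify the orbit space $T:=\torus_{\Lambda_\sfG}/\R_{y_0}$. By Proposition~\ref{lem:quotienttorus} applied with $\V=\R_{y_0}$ (so $n=1$), the quotient map $p_{y_0}\colon\torus_{\Lambda_\sfG}\to\torus_{\Lambda_\sfG}/\R_{y_0}$ is a principal circle bundle over the almost abelian solvmanifold $\torus_{\Lambda_{\sfG^{y_0}}}$, where $\sfG^{y_0}=\sfG/\R_{y_0}$ is the almost abelian solvable Lie group of Proposition~\ref{lem:quotientgroup}; in particular $T=\torus_{\Lambda_{\sfG^{y_0}}}$ is a compact manifold, hence Hausdorff. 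Now Proposition~\ref{prop:Raeburn} applies directly and yields the isomorphism $C(\torus_{\Lambda_\sfG})\rtimes_\rt\R_{y_0}\simeq\C\T\big(\torus_{\Lambda_{\sfG^{y_0}}}\times\torus_{y_0},\delta_{y_0}\big)$ onto a continuous-trace algebra with spectrum $\torus_{\Lambda_{\sfG^{y_0}}}\times\torus_{y_0}$ and Dixmier--Douady class $\delta_{y_0}=c_1(p_{y_0})\smile\zeta_{y_0}$ (equivalently $\zeta_{y_0}\smile c_1(p_{y_0})$, by graded commutativity of the cup product in degrees $2$ and $1$), with $\zeta_{y_0}$ the standard generator of $\rmH^1(\torus_{y_0},\Z)\simeq\Z$.

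Finally, the claim about the Chern--Weil representative of $c_1(p_{y_0})\in\rmH^2(\torus_{\Lambda_{\sfG^{y_0}}},\Z)$ is exactly the content of Proposition~\ref{lem:quotienttorus}, specialised to $\V=\R_{y_0}$ and ${\rm pr}_\V={\rm pr}_{y_0}$: the principal circle connection one-form is $\kappa_{y_0}=-({\rm pr}_{y_0}\,\varphi_{-x})\cdot\dd z$ and its curvature represents $c_1(p_{y_0})$. The only point demanding genuine care is the isotropy computation carried out above; once the normality hypotheses of the theorem are invoked there, the remainder is merely bookkeeping of the quotient identifications together with a verbatim appeal to the cited propositions, so no substantial obstacle arises.
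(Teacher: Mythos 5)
Your proposal is correct and takes essentially the same route as the paper, which derives this theorem simply by combining Propositions~\ref{lem:quotienttorus} and~\ref{prop:Raeburn}. Your explicit check that every orbit of the $\R_{y_0}$-action has isotropy $\Z$ (normality forcing $\varphi_x(y_0)=y_0$, so the isotropy is $\{t\mid t\,y_0\in\Sigma\cdot\Z^{d-1}\}\simeq\Z$ once $\R_{y_0}\cap\Lambda_\sfG$ is a lattice) is a worthwhile detail that the paper leaves implicit.
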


Thus in the case of an action of $\R$ due to a normal subgroup of
$\sfG$ that is in $\ker\big(\varphi_{x_0}-\Id_{\R^{d-1}}\big)$ in the setting
of Theorem~\ref{prop:RyTduality}, which we collectively refer to as $\R_y$-actions, the
T-dual of an almost abelian solvmanifold $\torus_{\Lambda_\sfG}$ is
the Hausdorff space $X=\torus_{\Lambda_{\sfG^{y_0}}}\times
\torus_{y_0}$ with a three-form `$H$-flux' whose cohomology class is
represented by $[H_{y_0}] = c_1(p_{y_0})\smile \zeta_{y_0}$.
The associated
correspondence space construction proceeds analogously to
Section~\ref{sec:TopTorb}, which we can give explicitly as
\begin{Proposition}\label{prop:Mostowcorr}
The topological T-duality of Theorem~{\rm \ref{prop:RyTduality}} is
implemented by the noncommutative
correspondence
\[
\xymatrix{
 & C\big(\torus_{\Lambda_\sfG}\times\widetilde\torus_{y_0}\big) & \\
C\big(\torus_{\Lambda_\sfG}\big) \ar[ur]^{ \! \! \! \! [{\rm pr}^*]} & &
\C\T\big(\torus_{\Lambda_{\sfG^{y_0}}}\times\torus_{y_0},\delta_{y_0}\big)
\ar[ul]_{ \ \ \
 [(p_{y_0}\times\Id_{\widetilde\torus_{y_0}})^*]} \\
 & C\big(\torus_{\Lambda_{\sfG^{y_0}}}\big) \ar[ur]_{[j]}
 \ar[ul]^{[p_{y_0}^*]} &
}
\]
as a diagram in the category $\mathcal{R}\CCK\hspace{-1mm}\CCK_{\torus}$.
\end{Proposition}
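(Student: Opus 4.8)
The plan is to mimic the correspondence space construction for orbifolds in Section~\ref{sec:TopTorb}, with the orbifold $\torus_\Gamma$ there replaced by the solvmanifold $\torus_{\Lambda_\sfG}$ and $\sfU(1)_i$ by the fibre circle $\torus_{y_0}$ of the principal bundle $p_{y_0}$ from Proposition~\ref{lem:quotienttorus}. First I would identify the correspondence algebra. Restricting the fibrewise $\R_{y_0}$-action on $\torus_{\Lambda_\sfG}$ to the lattice $\Z_{y_0}\subset\R_{y_0}=\R(y_0,0)$, and noting that since $p_{y_0}\colon\torus_{\Lambda_\sfG}\to\torus_{\Lambda_{\sfG^{y_0}}}$ is a principal $\torus_{y_0}=\R_{y_0}/\Z_{y_0}$-bundle every point of $\torus_{\Lambda_\sfG}$ has isotropy subgroup exactly $\Z_{y_0}$, the $\Z_{y_0}$-action on $C(\torus_{\Lambda_\sfG})$ is trivial. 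Hence, using $\alg\rtimes\sfG\simeq\alg\otimes C^*(\sfG)$ for trivial actions and the Fourier isomorphism $C^*(\Z_{y_0})\simeq C(\widetilde\torus_{y_0})$,
\[
\Ccal:=C(\torus_{\Lambda_\sfG})\rtimes_{\rt|_{\Z_{y_0}}}\Z_{y_0}\ \simeq\ C(\torus_{\Lambda_\sfG})\otimes C(\widetilde\torus_{y_0})\ \simeq\ C\big(\torus_{\Lambda_\sfG}\times\widetilde\torus_{y_0}\big).
\]
Because $\R_{y_0}$ lies in the fibre direction $\sfN$, the bundle map $p_{y_0}$ covers the identity on the base circle $\torus$, so all of these isomorphisms are $C(\torus)$-linear and $\Ccal$ is an object of $\mathcal{R}\CCK\hspace{-1mm}\CCK_\torus$.

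Next I would build the arrows and check commutativity of the diagram in $\mathcal{R}\CCK\hspace{-1mm}\CCK_\torus$. The left arrow $[{\rm pr}^*]$ is the class of the canonical monomorphism $j\colon C(\torus_{\Lambda_\sfG})\hookrightarrow\Ccal$ of Proposition~\ref{prop:NCcorr}, which under the identification above is pullback along the projection ${\rm pr}\colon\torus_{\Lambda_\sfG}\times\widetilde\torus_{y_0}\to\torus_{\Lambda_\sfG}$. For the right arrow, the continuous surjection $p_{y_0}\times\Id_{\widetilde\torus_{y_0}}\colon\torus_{\Lambda_\sfG}\times\widetilde\torus_{y_0}\to\torus_{\Lambda_{\sfG^{y_0}}}\times\torus_{y_0}$ (writing $\torus_{y_0}$ for the dual circle $\widetilde\torus_{y_0}$, as in Theorem~\ref{prop:RyTduality}) induces a $*$-homomorphism
\[
\C\T\big(\torus_{\Lambda_{\sfG^{y_0}}}\times\torus_{y_0},\delta_{y_0}\big)\ \longrightarrow\ \C\T\big(\torus_{\Lambda_\sfG}\times\widetilde\torus_{y_0},(p_{y_0}\times\Id)^*\delta_{y_0}\big).
\]
As in Section~\ref{sec:TopTorb}, the vanishing $p_{y_0}^*c_1(p_{y_0})=0$ — the pullback of a circle bundle along its own projection carries the diagonal section, hence is trivial — forces $(p_{y_0}\times\Id)^*\delta_{y_0}=\big(p_{y_0}^*c_1(p_{y_0})\big)\smile\zeta_{y_0}=0$ by naturality of cup products, so the target algebra is $C(\torus_{\Lambda_\sfG}\times\widetilde\torus_{y_0})\otimes\CK\simeq\Ccal\otimes\CK$; composing with the stable isomorphism $\Ccal\otimes\CK\ \mbf\sim_{\text{\tiny M}}\ \Ccal$ produces the arrow $[\psi]$ into the apex $\Ccal$, and $\psi$ is K-oriented since it is induced by a bundle projection. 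The lower vertex $C(\torus_{\Lambda_{\sfG^{y_0}}})$ maps into $C(\torus_{\Lambda_\sfG})$ by $p_{y_0}^*$ and into $\C\T(\torus_{\Lambda_{\sfG^{y_0}}}\times\torus_{y_0},\delta_{y_0})$ by the structure map composed with pullback along $\torus_{\Lambda_{\sfG^{y_0}}}\times\torus_{y_0}\to\torus_{\Lambda_{\sfG^{y_0}}}$; both composites to $\Ccal$ equal pullback along $\torus_{\Lambda_\sfG}\times\widetilde\torus_{y_0}\to\torus_{\Lambda_{\sfG^{y_0}}}$, so the diagram commutes, and every map covers the identity on $\torus$.

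It then remains to check that this is a noncommutative correspondence in the sense of Definition~\ref{def:NCcorr}, i.e.\ that $[{\rm pr}^*]\otimes_\Ccal\psi!$ is a KK-equivalence between $C(\torus_{\Lambda_\sfG})$ and its $C^*$-algebraic T-dual $C(\torus_{\Lambda_\sfG})\rtimes_\rt\R_{y_0}$, which by Theorem~\ref{prop:RyTduality} is precisely $\C\T(\torus_{\Lambda_{\sfG^{y_0}}}\times\torus_{y_0},\delta_{y_0})$. Here I would run the argument of Proposition~\ref{prop:NCcorr}: the inclusion $\Z_{y_0}\hookrightarrow\R_{y_0}$ together with Green's theorem in the form~\eqref{scheme} places $\Ccal$ between $C(\torus_{\Lambda_\sfG})$ and $C(\torus_{\Lambda_\sfG})\rtimes_\rt\R_{y_0}$, the Gysin element $\psi!$ of the projection implements the Connes--Thom map of Theorem~\ref{thm:equivalences}(a), and tracking the composition product shows $[{\rm pr}^*]\otimes_\Ccal\psi!$ is the invertible Connes--Thom class. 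The main obstacle is precisely this last step: one must verify that the homomorphisms ${\rm pr}^*$ and $\psi$ built from the continuous-trace description of Theorem~\ref{prop:RyTduality} represent the same KK-classes (up to the degree shift) as the maps in the general proof of Proposition~\ref{prop:NCcorr}, so that the Gysin composite is genuinely the Connes--Thom equivalence rather than merely an isomorphism on K-theory. This reduces to naturality of the Connes--Thom isomorphism and of the wrong-way maps under the stabilisation $\Ccal\otimes\CK\ \mbf\sim_{\text{\tiny M}}\ \Ccal$ (cf.~\cite{Brodzki2007,Fack1981}), which together with the vanishing of $(p_{y_0}\times\Id)^*\delta_{y_0}$ established above closes the argument.
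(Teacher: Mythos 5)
Your proposal is correct and follows essentially the same route as the paper: identify the correspondence algebra as $C(\torus_{\Lambda_\sfG})\rtimes_\rt\Z_{y_0}\simeq C\big(\torus_{\Lambda_\sfG}\times\widetilde\torus_{y_0}\big)$ via the trivial $\Z_{y_0}$-action, then kill the pulled-back Dixmier--Douady class using $p_{y_0}^*c_1(p_{y_0})=0$ so that the continuous-trace algebra pulls back to $C\big(\torus_{\Lambda_\sfG}\times\widetilde\torus_{y_0}\big)\otimes\CK$. The only cosmetic difference is that you argue the vanishing of $p_{y_0}^*c_1(p_{y_0})$ via the tautological section of the pulled-back circle bundle, whereas the paper notes that the Chern--Weil representative pulls back to the exact form $\dd\kappa_{y_0}$; your closing remarks on the Gysin composite being the Connes--Thom class are a reasonable (and slightly more explicit) elaboration of what the paper leaves to Proposition~\ref{prop:NCcorr}.
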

\begin{proof}
Since the subgroup
\[
\Z_{y_0}:=\R_{y_0}\cap\Lambda_\sfG
\]
acts trivially on the algebra of functions $C(\torus_{\Lambda_\sfG})$, there is an isomorphism
\[
C(\torus_{\Lambda_\sfG})\rtimes_\rt\zed_{y_0} \simeq
C\big(\torus_{\Lambda_{\sfG}}\times\widetilde\torus_{y_0}\big) ,
\]
where $\widetilde\torus_{y_0}$ is the circle dual to $\torus_{y_0}\simeq\R_{y_0}/\Z_{y_0}$.
Proposition~\ref{lem:quotienttorus} shows that the Chern--Weil
representative of
$c_1(p_{y_0})\in\rmH^2\big(\torus_{\Lambda_{\sfG^{y_0}}},\Z\big)$ pulls back to the exact
two-form $\dd\kappa_{y_0}\in\Omega^2(\torus_{\Lambda_{\sfG}})$ under
the bundle projection $p_{y_0}:\torus_{\Lambda_\sfG}\to
\torus_{\Lambda_\sfG^{y_0}}$. Hence
$p_{y_0}^*c_1(p_{y_0})=0$ which implies
\[
\big(p_{y_0}\times\Id_{\widetilde\torus_{y_0}}\big)^*\big(c_1(p_{y_0})\smile\zeta_{y_0}\big)=0 \in {\rm
 H}^3\big(\torus_{\Lambda_\sfG}\times\widetilde\torus_{y_0},\zed\big) .
\]
Thus the algebra $\C\T(\torus_{\Lambda_\sfG^{y_0}}\times\torus_{y_0},\delta_{y_0})$ pulls back to an algebra
isomorphic to
$C\big(\torus_{\Lambda_\sfG}\times\widetilde\torus_{y_0}\big)\otimes\CK$ by
$p_{y_0}\times\Id_{\widetilde\torus_{y_0}}$, and the result follows.
\end{proof}

\subsection[$\real_z$-actions: noncommutative torus bundles]{$\boldsymbol{\real_z}$-actions: noncommutative torus bundles}\label{sec:Rzgen}

Let us now come to the torsion summands $\Z_{m_i}$ in the
$\Z$-module decomposition \eqref{eq:H1gendecomp}. Pick a~non-trivial
elementary divisor $m_i>1$ for some $i\in\{1,\dots,r\}$. The
corresponding homology generator $\tilde e_i$ is constructed as a~$\Z$-linear combination~\eqref{eq:tildeei} of the generators~\eqref{eq:hatei}. It defines a~fixed
element~$z_0$ in the image of $\varphi_{x_0}-\rm{id}_{\R^{d-1}}$ and a~corresponding one-dimensional subgroup of~$\sfG$ given by
\[
\R_{z_0} := \R (z_0,0) .
\]
We further assume that
\[
\Z_{ z_0} := \R_{ z_0}\cap\Lambda_\sfG
\]
is a lattice in $\R_{z_0}$.
The choice of $z_0\in\R^{d-1}$ is not unique, and
any change of basis of $\Z^{d-1}$, represented by a matrix
$B\in\sfGL(d-1,\Z)$, defines an equally good element $B\cdot
z_0\in\R^{d-1}$ as long as $B\cdot
z_0\in{\rm im}\big(\varphi_{x_0}-\Id_{\R^{d-1}}\big)$. We write $\proj_{z_0}$ for the
linear projection of $\R^{d-1}$ to $\R_{z_0}$, and denote by $\langle
z_0,z\rangle\in\R$ the component of $z\in\R^{d-1}$ in $\R_{z_0}$,
i.e.,~$\proj_{z_0}\cdot z = \langle z_0,z\rangle \, z_0$.

In the basis $\hat e_i$, the lattice of $\sfG$ is given by
\[
\Lambda_\sfG=\Z^{d-1}\rtimes_{\hat\varphi|_{x_0\,\Z}} x_0 \Z ,
\]
where $\hat\varphi_x:=\Sigma^{-1}\,\varphi_x\,\Sigma$ for all
$x\in\R$. Then the fibres of the underlying Mostow bundle are `square' tori
$\torus^{d-1}$ with unit periodicities $ z\sim z+\vec
e_i$ for $i=1,\dots,d-1$, where as
before $\vec e_i$ denotes the standard basis of
$\R^{d-1}$. The action of the
subgroup $\R_{ z_0}$ on
elements $(z,x)\in\sfG$ by right multiplication is given by
\begin{gather}\label{eq:Rz0action}
(z,x) (\zeta\,z_0,0) = \big(z+\zeta \Sigma^{-1} \varphi_x\cdot
z_0,x\big)
\end{gather}
for $\zeta\in\R$, where $\Sigma^{-1} \varphi_x\cdot z_0$ lies in the
image of the relation matrix ${\tt A}=\ttM-\unit_{d-1}$.

Our principal tool to compute the $C^*$-algebraic
T-dual for such an action of $\R$ in the image of $\varphi_{x_0}-\Id_{\R^{d-1}}$, which we collectively refer to as
$\R_z$-actions, will be Green's theorem in the form~\eqref{scheme}:
\begin{gather}\label{Mostowscheme}
C(\torus_{\Lambda_\sfG})\rtimes_\rt \real_{ z_0} \ \mbf\sim_{\text{\tiny M}} \
C_0(\sfG/\real_{ z_0})\rtimes_\lt \Lambda_\sfG .
\end{gather}
By appealing to
Theorem~\ref{thm:crossedfibres}, we may apply \eqref{Mostowscheme}
fibrewise. For fixed $x\in\R$, the fibre $\sfG_x$ of the semi-direct
product $\sfG=\R^{d-1}\rtimes_{\hat\varphi}\R$ is the subgroup~$\R^{d-1}$, and the corresponding fibre of
the solvmanifold $\torus_{\Lambda_\sfG}$ over $x\in\R/x_0\,\Z$ is the
torus $\torus^{d-1}=\R^{d-1}/\Z^{d-1}$. The Morita
equivalence~\eqref{Mostowscheme} is $C(\torus)$-linear and the fibres of the corresponding
T-dual $C(\torus)$-algebra are given by the fibrewise Morita
equivalence
\begin{gather}\label{eq:Mostowschemefiber}
\big(C(\torus_{\Lambda_\sfG})\rtimes_\rt\real_{ z_0}\big)_x \simeq
C\big(\R^{d-1}/\Z^{d-1}\big)\rtimes_{\rt^x}\real_{ z_0} \
\mbf\sim_{\text{\tiny M}} \
C_0\big(\real^{d-1}/\R_{ z_0}\big)\rtimes_{\lt^x}\Z^{d-1} .
\end{gather}
The action of the subgroup $\Z^{d-1}\subset\Lambda_\sfG$ on the
coset space $\R^{d-1}/\R_{ z_0}$ is induced by left multiplication in the
group $\sfG$.
After a basis transformation, we can decompose the discrete group $\Z^{d-2}$ into a direct sum
\[
\Z^{d-1} \simeq \Z_v^{d-2}\oplus\Z_{ z_0} ,
\]
where $\Z_v^{d-1} = (\unit_{d-1}-\proj_{z_0})\cdot\Z^{d-1}$.

Let
\[
\mathbb{F}_* := \big\{x\in\R \, \big| \, \big\langle
z_0,\Sigma^{-1} \varphi_x\cdot z_0\big\rangle=0\big\} .
\]
\begin{Lemma}\label{lem:ellfibercom}
Over any $x\in \mathbb{F}_*$, the fiber
$\big(C(\torus_{\Lambda_\sfG})\rtimes_\rt\real_{ z_0}\big)_x$
is Morita equivalent to the
commutative $C^*$-algebra \smash{$C\big(\torus^{d-1}\big)$}.
\end{Lemma}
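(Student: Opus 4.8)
The plan is to read the fibre off from the fibrewise Morita equivalence \eqref{eq:Mostowschemefiber}. Over the point $x$ the right $\R_{z_0}$-action on the fibre torus $\R^{d-1}/\Z^{d-1}$ is translation along the line $\R w_x$, where $w_x:=\Sigma^{-1}\varphi_x\cdot z_0$ (see \eqref{eq:Rz0action}); applying Green's theorem fibrewise as in \eqref{eq:Mostowschemefiber} identifies $\big(C(\torus_{\Lambda_\sfG})\rtimes_\rt\R_{z_0}\big)_x$ up to Morita equivalence with $C_0(\R^{d-1}/\R w_x)\rtimes_{\lt^x}\Z^{d-1}$, where $\Z^{d-1}$ acts by the translations induced by left multiplication in $\sfG$, i.e.\ through the composite $\Z^{d-1}\hookrightarrow\R^{d-1}\to\R^{d-1}/\R w_x$. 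The whole statement then reduces to analysing this translation action, and in particular to showing that it is a proper, cocompact action of a rank-$(d{-}2)$ lattice together with a trivially acting rank-one subgroup.

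The key geometric point I would establish is that, for $x\in\mathbb{F}_*$, the direction $w_x$ is proportional to a lattice vector, equivalently $\R w_x\cap\Z^{d-1}\neq0$, equivalently the linear flow generated by $\R_{z_0}$ on the fibre torus has closed orbits. First, since $\varphi_x$ commutes with $\varphi_{x_0}$, the real image ${\rm im}({\tt A})_{\R}$ of the relation matrix ${\tt A}=\ttM-\unit_{d-1}$ is invariant under the one-parameter group, and $z_0\in{\rm im}({\tt A})_{\R}$; hence $w_x\in{\rm im}({\tt A})_{\R}$ for every $x$. The defining condition $x\in\mathbb{F}_*$ says exactly that $\proj_{z_0}w_x=0$, i.e.\ that $w_x$ also lies in the hyperplane $\ker\proj_{z_0}=(\unit_{d-1}-\proj_{z_0})\R^{d-1}$, which is the real span of the sublattice $\Z_v^{d-2}$. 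Both ${\rm im}({\tt A})_{\R}$ and $\ker\proj_{z_0}$ are \emph{rational} subspaces — the first because ${\tt A}$ is integral, the second because $\proj_{z_0}$ restricts to a projection $\Z^{d-1}\to\Z_{z_0}$ — so $w_x$ lies in the rational subspace ${\rm im}({\tt A})_{\R}\cap\ker\proj_{z_0}$. Because $z_0\in{\rm im}({\tt A})_{\R}$ while $\proj_{z_0}z_0=z_0\neq0$, this intersection has codimension one in ${\rm im}({\tt A})_{\R}$, hence dimension $r-1$; when it is one-dimensional — in particular whenever $r=2$ — it is a rational line, so $\R w_x=\R v'$ for some $v'\in\Z^{d-1}\setminus\{0\}$.

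With $\R w_x$ rational the rest is a splitting argument. Put $\Z_0:=\Z^{d-1}\cap\R w_x$; this is a rank-one direct summand of $\Z^{d-1}$, so fix a complement $\Z'\cong\Z^{d-2}$ with $\Z^{d-1}=\Z'\oplus\Z_0$. Then $\Z_0$ acts trivially on $\R^{d-1}/\R w_x$, while the image of $\Z'$ in $\R^{d-1}/\R w_x\cong\R^{d-2}$ is a full-rank lattice — its real span together with $\R w_x$ exhausts $\R^{d-1}$ — acting freely, properly and cocompactly. Using that crossed products by a trivially acting group are tensor products, this gives
\[
C_0(\R^{d-1}/\R w_x)\rtimes_{\lt^x}\Z^{d-1}\;\simeq\;\big(C_0(\R^{d-1}/\R w_x)\rtimes_{\lt^x}\Z'\big)\otimes C^*(\Z_0) ,
\]
and by Example~\ref{ex:Moritatori} the first factor is Morita equivalent to $C(\torus^{d-2})$, while $C^*(\Z_0)\simeq C(\torus)$ by the Fourier transform. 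Combining with the Morita equivalence at the start yields $\big(C(\torus_{\Lambda_\sfG})\rtimes_\rt\R_{z_0}\big)_x\;\mbf\sim_{\text{\tiny M}}\;C(\torus^{d-2})\otimes C(\torus)=C(\torus^{d-1})$, as claimed.

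The step I expect to be the main obstacle is the rationality of $w_x$ proved in the second paragraph. For $d=3$ the hyperplane $\ker\proj_{z_0}$ is itself a rational line (and the relevant monodromies have $r=2$), so $w_x\in\ker\proj_{z_0}$ already forces $w_x$ to be a rational direction and there is nothing further to check. In higher dimensions one must argue that the rational subspace ${\rm im}({\tt A})_{\R}\cap\ker\proj_{z_0}$ meets the orbit $\{w_x\}_x$ only in rational directions — it suffices, for instance, that it be one-dimensional — which is precisely where the detailed description of $z_0$ coming from the Smith normal form and the torsion summand $\Z_{m_i}$ has to be exploited. Everything else is a routine unwinding of \eqref{eq:Mostowschemefiber}, Green's theorem and Example~\ref{ex:Moritatori}.
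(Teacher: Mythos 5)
Your proposal follows the same route as the paper's proof: the fibrewise form of Green's theorem in \eqref{eq:Mostowschemefiber}, a splitting of the translation action of $\Z^{d-1}$ on the coset space into a trivially acting rank-one summand and a complement acting cocompactly, then Example~\ref{ex:Moritatori} together with $C^*(\Z)\simeq C(\torus)$. The only organizational difference is where you cut: you split $\Z^{d-1}=\Z'\oplus\Z_0$ along $\Z_0=\Z^{d-1}\cap\R w_x$, obtaining $C(\torus^{d-2})\otimes C^*(\Z_0)$ in one stroke, whereas the paper parameterizes $\R^{d-1}/\R w_x$ as $\R_{z_0}\times\R^{d-3}$, splits the lattice as $\Z_v^{d-2}\oplus\Z_{z_0}$, and extracts the trivially acting copy of $\Z$ from inside the $\Z_v^{d-2}$-action on $\R^{d-3}$, arriving at $C(\torus)\otimes C^*(\Z)\otimes C(\torus^{d-3})$. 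These are equivalent, and your version is arguably cleaner.

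The substantive issue is the one you flag yourself: the rationality of the direction $w_x=\Sigma^{-1}\varphi_x\cdot z_0$ for $x\in\F_*$. This is not a cosmetic point --- if $\R w_x\cap\Z^{d-1}=0$ then the image of $\Z^{d-1}$ in $\R^{d-1}/\R w_x$ is not discrete, there is no trivially acting rank-one summand, and the fibre is a higher-dimensional analogue of an irrational rotation algebra rather than $C\big(\torus^{d-1}\big)$; so the lemma genuinely needs this input. Your argument via the rational subspace ${\rm im}({\tt A})_\R\cap\ker\proj_{z_0}$ of dimension $r-1$ closes the gap when $r=2$ (hence for every three-dimensional example in Section~\ref{sec:3dexamples}), but for $r>2$ a vector lying in a rational subspace of dimension at least two need not span a rational line, and for a monodromy assembled from, say, two elliptic blocks rotating at incommensurate rates there is no apparent reason why $w_x$ should be a rational direction at a generic zero of $\big\langle z_0,\Sigma^{-1}\varphi_x\cdot z_0\big\rangle$. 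So as a proof of the lemma in full generality the proposal is incomplete. It is only fair to add that the paper's own proof makes exactly the same tacit assumption --- the step $C_0\big(\R^{d-3}\big)\rtimes_{\lt^x}\Z_v^{d-2}\ \mbf\sim_{\text{\tiny M}}\ C^*(\Z)\otimes C\big(\torus^{d-3}\big)$ requires the kernel of $\Z_v^{d-2}\to\R^{d-3}$ to have rank one and the image to be a lattice, which is precisely the rationality of $w_x$ --- so your analysis makes visible a hypothesis the paper leaves implicit rather than missing an idea the paper supplies.
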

\begin{proof}
If $x\in\mathbb{F}_*$, then $w_0:=\Sigma^{-1}\,\varphi_x\cdot z_0$ only
shifts the corresponding component of $ z$ in
$(\unit_{d-1}-\proj_{ z_0})\cdot\R^{d-1}$ in the $\R_{z}$-action
\eqref{eq:Rz0action}. In this case any
element $(z,x)\in\sfG$ may be factorized as
$$
(z,x) = \big(\proj_{ z_0}\cdot
z+(\unit_{d-1}-\proj_{z_0}-\proj_{w_0})\cdot z\,,\,x\big) \,
\big(\langle w_0,z\rangle\, z_0 \,,\,0\big) \ .
$$
Hence the coset space $\R^{d-1}/\R_{ z_0}\simeq\R^{d-2}$ can be
parameterized by $\proj_{ z_0}\cdot
z+\big(\unit_{d-1}-\proj_{z_0}-\proj_{w_0}\big)\cdot z$ with $z\in\R^{d-1}$,
which we decompose correspondingly into a direct product
$\R_{z_0}\times\R^{d-3}$ (with the second factor absent for $d=3$). By \eqref{eq:Mostowtorus} the discrete group
$\Z_v^{d-2}$ acts trivially on the line $\R_{z_0}$ and by translations
on $\R^{d-3}$,
while~$\Z_{ z}$ acts by translations on $\R_{z_0}$ and trivially
on $\R^{d-3}$. Then the crossed product on the right-hand side of~\eqref{eq:Mostowschemefiber} may be unravelled to get
\begin{align*}
C_0\big(\real^{d-1}/\R_{ z_0}\big)\rtimes_{\lt^x}\Z^{d-1}
&\simeq C_0\big(\R_{
 z_0}\times\R^{d-3}\big)\rtimes_{\lt^x}\big(\Z_v^{d-2}\times\Z_{ z_0}\big) \\
&\simeq
\big[\big(C_0(\R_{z_0})\otimes C_0\big(\R^{d-3}\big)\big)\rtimes_{\Id\otimes\lt^x}\Z^{d-2}_v\big]\rtimes_{\beta^x}\Z_{z_0} \\
& \ \mbf\sim_{\text{\tiny M}} \ \big(C_0(\R_{ z_0})\otimes C^*(\Z)\otimes C\big(\torus^{d-3}\big)\big)\rtimes_{\lt^x\otimes\Id\otimes\Id}\Z_{ z_0}
 \\
&\simeq \big(C_0(\R_{ z_0})\rtimes_{\lt^x}\Z_{ z_0}\big)\otimes
C^*(\Z)\otimes C\big(\torus^{d-3}\big)
\nonumber\\
& \ \mbf\sim_{\text{\tiny M}} \ C(\torus)\otimes C^*(\Z)\otimes C\big(\torus^{d-3}\big) \simeq C\big(\torus^{d-1}\big) .
\end{align*}
In the second line we applied
Theorem~\ref{cpsp3}. In the third line we used
Example~\ref{ex:Moritatori} together with the fact that the
homomorphism $\sigma_{\Z_{ z_0}}$ from \eqref{sigmaI} is trivial since the
groups $\Z_{ z_0}$ and $\Z_v^{d-2}$ are discrete, and so the action of $\Z_{ z_0}$ on the crossed
product is induced by left multiplication on $\R_{ z_0}$, the trivial
action on $\R^{d-3}$, and the
trivial action on $\Z_v^{d-2}$. In the fifth line we used
Example~\ref{ex:Moritatori} again.
\end{proof}

The central result of this paper is
\begin{Theorem}
\label{thm:NCtorusbundlegen}
The $C^*$-algebraic T-dual of any almost abelian
solvmanifold $\torus_{\Lambda_\sfG}$ with respect to an $\R_z$-action
is Morita equivalent to a $C^*$-algebra bundle
of noncommutative tori over the circle $\torus$:
\begin{gather*}
C(\torus_{\Lambda_\sfG})\rtimes_\rt\real_{ z_0}
 \ \mbf\sim_{\text{\tiny M}} \ \mcoprod
\torus^{d-1}_{\vec\theta_{z_0}(x)} ,
\end{gather*}
where the noncommutativity parameters $\vec\theta_{z_0}(x)\in\R^{d-2}$ are given
by
\begin{gather}\label{eq:thetaz0}
\vec\theta_{z_0}(x) = \begin{cases}
 0 & \mbox{for} \ x \in \F_* , \\
 \dfrac{(\unit_{d-1}-\proj_{ z_0}) \Sigma^{-1} \varphi_{x_0 x}\cdot
 z_0}{\big\langle{ z_0},\Sigma^{-1} \varphi_{x_0 x}\cdot z_0\big\rangle}
 & \mbox{for} \ x\in\R\setminus\F_* . \end{cases}
\end{gather}
\end{Theorem}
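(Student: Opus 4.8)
\emph{Proof strategy.} The plan is to compute the $C^*$-algebraic T-dual fibrewise over the base circle $\torus=\real/\Z$ and then reassemble the fibres into a $C^*$-algebra bundle. By Theorem~\ref{thm:crossedfibres} the crossed product $C(\torus_{\Lambda_\sfG})\rtimes_\rt\real_{z_0}$ is a $C(\torus)$-algebra, and the Green--Morita equivalence \eqref{Mostowscheme} is $C(\torus)$-linear since both the right $\real_{z_0}$-action and the left $\Lambda_\sfG$-action on $\sfG$ preserve the projection $\sfG\to\torus$. By \eqref{eq:Mostowschemefiber} the fibre over $x$ is then Morita equivalent to $C_0(\real^{d-1}/\real\,w_x)\rtimes_{\lt^x}\Z^{d-1}$, where $w_x:=\Sigma^{-1}\varphi_{x_0 x}\cdot z_0$ is the direction along which $\real_{z_0}$ moves the torus fibre (see \eqref{eq:Rz0action}) and $\Z^{d-1}$ acts by the translations \eqref{eq:Mostowtorus}. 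It therefore suffices to identify this algebra up to Morita equivalence for each $x$, which splits into two cases. For $x\in\F_*$ nothing new is needed: Lemma~\ref{lem:ellfibercom} already gives that the fibre is Morita equivalent to $C(\torus^{d-1})=\torus^{d-1}_0$, in agreement with the first case of \eqref{eq:thetaz0}.

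For $x\in\real\setminus\F_*$, i.e.\ $\langle z_0,w_x\rangle\neq0$, I would use the hypothesis that $\Z_{z_0}=\real\,z_0\cap\Lambda_\sfG$ is a lattice in $\real_{z_0}$ to take $z_0$ primitive and split $\Z^{d-1}=L\oplus\Z_{z_0}$ for a suitably chosen complementary sublattice $L\simeq\Z^{d-2}$; the condition $\langle z_0,w_x\rangle\neq0$ is exactly what makes the image $\bar L$ of $L$ in $\real^{d-1}/\real\,w_x\simeq\real^{d-2}$ a full-rank, hence genuine, lattice, with quotient a $(d{-}2)$-torus $\torus^{d-2}:=(\real^{d-1}/\real\,w_x)/\bar L$. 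Applying Theorem~\ref{cpsp3} to the decomposition $\Z^{d-1}=L\oplus\Z_{z_0}$ rewrites the fibre as the iterated crossed product $\bigl(C_0(\real^{d-1}/\real\,w_x)\rtimes_{\lt^x}L\bigr)\rtimes_{\beta^x}\Z_{z_0}$. The inner crossed product is Morita equivalent to $C(\torus^{d-2})$ by Example~\ref{ex:Moritatori}; this equivalence is $\Z_{z_0}$-equivariant because the residual $\Z_{z_0}$-action is induced functorially from a translation of $\real^{d-1}/\real\,w_x$ that descends to a translation of $\torus^{d-2}$; and so Theorem~\ref{equivMorita} yields a Morita equivalence with $C(\torus^{d-2})\rtimes_{\tau}\Z_{z_0}$, where $\tau$ is the rotation of $\torus^{d-2}$ by the image of $z_0$. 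By Example~\ref{ex:NCtorid} this crossed product is a noncommutative $(d{-}1)$-torus $\torus^{d-1}_{\vec\theta}$, and the last task is to evaluate $\vec\theta\in\real^{d-2}$: the class of $z_0$ in $\real^{d-1}/\real\,w_x$, transported to $\real^{d-2}$ and reduced modulo $\bar L$, works out---after substituting $w_x=\Sigma^{-1}\varphi_{x_0 x}\cdot z_0$ and using $\langle z_0,z_0\rangle=1$---to $\pm$ the vector in the second case of \eqref{eq:thetaz0}, the sign being immaterial since $\torus^{d-1}_{-\vec\theta}\simeq\torus^{d-1}_{\vec\theta}$ by reversing the extra circle generator of Example~\ref{ex:NCtorid}. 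Finally, since each of these fibrewise identifications comes from the single $C(\torus)$-linear equivalence \eqref{Mostowscheme} followed by $C(\torus)$-linear forms of Theorems~\ref{cpsp3} and~\ref{equivMorita}, they assemble into the asserted Morita equivalence $C(\torus_{\Lambda_\sfG})\rtimes_\rt\real_{z_0}\ \mbf\sim_{\text{\tiny M}}\ \mcoprod\torus^{d-1}_{\vec\theta_{z_0}(x)}$ of $C^*$-algebra bundles over $\torus$.

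The step I expect to be the main obstacle is the equivariance needed to apply Theorem~\ref{equivMorita}: one must actually equip the imprimitivity bimodule of Example~\ref{ex:Moritatori} with a $\Z_{z_0}$-action satisfying the compatibility conditions of Theorem~\ref{equivMorita} and check that the induced action on $C(\torus^{d-2})$ is the claimed rotation. Two further delicate points accompany it: the basis bookkeeping needed to pin the rotation vector down to the exact normalisation in \eqref{eq:thetaz0}---including the degenerate subcases in which $w_x$ is a rational direction, where the formula returns a parameter for which $\torus^{d-1}_{\vec\theta}$ is already Morita equivalent to a commutative algebra---and the verification that the bimodule data vary continuously in $x$, so that the fibrewise equivalences genuinely glue into an equivalence of bundles over $\torus$.
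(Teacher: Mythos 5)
Your proposal follows essentially the same route as the paper's proof: fibrewise application of the Green--Morita equivalence \eqref{Mostowscheme} via Theorem~\ref{thm:crossedfibres}, Lemma~\ref{lem:ellfibercom} for $x\in\F_*$, the splitting $\Z^{d-1}\simeq\Z_v^{d-2}\oplus\Z_{z_0}$ with Theorem~\ref{cpsp3}, the $\Z_{z_0}$-equivariant form of Example~\ref{ex:Moritatori} fed into Theorem~\ref{equivMorita}, and identification of the resulting crossed product as $\torus^{d-1}_{\vec\theta}$ via Example~\ref{ex:NCtorid}, with $\vec\theta$ read off from the induced translation of the quotient torus. The equivariance point you flag as the main obstacle is exactly the step the paper resolves by exhibiting the explicit action $U_{\gamma_v}(\xi)(v)=\xi(v+\gamma_v)$ on the imprimitivity bimodule, and your parameter evaluation matches \eqref{eq:Zcosetactionz} up to the immaterial sign.
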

\begin{proof}
That the fibres over $x\in\F_*$ are just ordinary tori $\torus^{d-1}$
is established by Lemma~\ref{lem:ellfibercom}, so we may assume that $x\in\R\setminus\F_*$. Then a simple
calculation shows that any element $(z,x)\in\sfG$ can be factorized as
\[
(z,x) = (v,x) \left(\frac{\proj_{ z_0}\cdot z}{\big\langle z_0,\Sigma^{-1} \varphi_x\cdot z_0\big\rangle} , 0\right) ,
\]
where
\begin{gather}\label{eq:Mostowcoset}
v:=\big(\unit_{d-1}-\proj_{ z_0}\big)\cdot z - \frac{\langle{ z_0},
 z\rangle}{\big\langle{ z_0},\Sigma^{-1} \varphi_x\cdot z_0\big\rangle}
\big(\unit_{d-1}-\proj_{ z_0}\big) \Sigma^{-1} \varphi_x\cdot z_0 .
\end{gather}
Thus the coset space $\R_{x,v}^{d-2}:=\R^{d-1}/\R_{ z_0}$ may be
parameterized by the coordinates $v\in\R^{d-2}$ over any
$x\in\R\setminus\F_*$, and we explicitly retain the fibre index in the
notation for convenience.

We now need to unravel the crossed product
on the right-hand side of~\eqref{eq:Mostowschemefiber}.
From~\eqref{eq:Mostowtorus} and~\eqref{eq:Mostowcoset} it follows that the action of elements
$(\gamma_v,\gamma_{ z_0})\in\Z^{d-1}\simeq\Z_v^{d-2}\oplus\Z_{
 z_0}$ on the coset is given by
\begin{gather}\label{eq:Zcosetactionv}
(\gamma_v,0,0)\cdot (v,x) = (v+\gamma_v,x) , \\
(0,\gamma_{ z_0},0)\cdot (v,x) = \left(v-\frac{
 \gamma_{ z_0}}{\big\langle{ z_0},\Sigma^{-1} \varphi_x\cdot z_0\big\rangle}
\big(\unit_{d-1}-\proj_{ z_0}\big) \Sigma^{-1} \varphi_x\cdot
 z_0 , x\right) .
\label{eq:Zcosetactionz}\end{gather}
Applying
Theorem~\ref{cpsp3} as in the proof of Lemma~\ref{lem:ellfibercom} gives
\begin{gather*}
C\big(\R^{d-1}/\Z^{d-1}\big)\rtimes_{\rt^x}\real_{ z_0}
\ \mbf\sim_{\text{\tiny M}} \
\big(C_0\big(\real^{d-2}_{x,v}\big)\rtimes_{\lt^x}\Z^{d-2}_{v}\big) \rtimes_{\beta^x} \Z_{ z_0} ,
\end{gather*}
where the action of $\Z^{d-2}_v$ on the coset $\R^{d-2}_{x,v}$ is
given by~\eqref{eq:Zcosetactionv}, while the action of $\Z_{ z_0}$ on the crossed product
$C_0\big(\real^{d-2}_{x,v}\big)\rtimes_{\lt^x}\Z^{d-2}_{v}$ is induced by the
action on $\real^{d-2}_{x,v}$ given in~\eqref{eq:Zcosetactionz} and the trivial action on~$\Z^{d-2}_{v}$.

Next, Example~\ref{ex:Moritatori} yields
\begin{gather}\label{eq:NMoritav}
C_0\big(\real_{x,v}^{d-2}\big)\rtimes_{\lt^x} \Z_{v}^{d-2} \ \mbf\sim_{\text{\tiny M}} \
C\big(\torus^{d-2}\big) .
\end{gather}
Since the homomorphism
$\sigma_{\Z_{ z_0}}\colon \Z_{ z_0}\to\real^+$ is trivial, it is not difficult to see
that the Morita equivalence bimodule $\Mcal$ implementing this
equivalence is $\Z_{ z_0}$-equivariant (in the sense of Theorem~\ref{equivMorita}):
the $\Z_{ z_0}$-action
$U\colon \Z_{ z_0}\rightarrow \Aut(\Mcal)$ is given by $U_{\gamma_v}(\xi)(v)=\xi(v+\gamma_v)$, for all $\gamma_v\in
\Z_{ z_0}$, $\xi\in \Mcal$ and $v\in \real_{x,v}^{d-2} $.
Applying Theorem \ref{equivMorita} we conclude that the
Morita equivalence \eqref{eq:NMoritav} induces the Morita equivalence
\begin{gather}\label{eq:fibreNCMostow}
\big(C_0\big(\real^{d-2}_{x,v}\big)\rtimes_{\lt^x}\Z^{d-2}_{v}\big) \rtimes_{\beta^x}
\Z_{ z_0} \ \mbf\sim_{\text{\tiny M}} \
C\big(\torus^{d-2}\big)\rtimes_{\lt^x}\Z_{ z_0} ,
\end{gather}
where the action
of the group $\Z_{ z_0}$ on $C\big(\torus^{d-2}\big)$ is the pullback of the action on
$\torus^{d-2}=\real_{x,v}^{d-2}/\Z_{v}^{d-2}$ induced
from~\eqref{eq:Zcosetactionz}.

After rescaling $x$ by $x_0$
to give it unit period, this shows that the
corresponding algebra of functions on the fiber at $\e^{2\pi\ii x}$ is that
of a noncommutative $d{-}1$-torus $\torus_{\vec\theta_{
 z_0}(x)}^{d-1}$ with noncommutativity parameter
$\vec\theta_{z_0}(x)$ given by~\eqref{eq:thetaz0}, see Example~\ref{ex:NCtorid}.
Thus the $C^*$-algebraic T-dual
$C(\torus_{\Lambda_\sfG})\rtimes_\rt\real_{ z_0}$ is a $C^*$-algebra bundle
of noncommutative tori $\A_{\vec\theta_{
 z_0}(x)}=\torus^{d-1}_{\vec\theta_{ z_0}(x)}$ over the
circle $\torus=\big\{\e^{2\pi\ii x}\, | \, x\in \real/\Z\big\}$.
\end{proof}

What becomes of the non-trivial monodromy \eqref{eq:Mostowtorus} of
the $\torus^{d-1}$ fibers parameterized by~$z$ in the original Mostow
bundle? To answer this question, we write the monodromy matrix
$\ttM=(m_{ij})\in\sfSL(d-1,\Z)$ in the block form
\begin{gather}\label{eq:ttMblock}
\ttM = \left(\begin{matrix} \ttM_{|d-2} & \vec m \\
\vec m' & m_{d-1 \, d-1}
\end{matrix}\right) ,
\end{gather}
where $\ttM_{|d-2}=(m_{ij})_{1\leq i,j\leq d-2}$, while $\vec m=(m_{i \,
 d-1})_{i=1,\dots,d-2}$ and $\vec m' = (m_{d-1 \,
 i})_{i=1,\dots,d-2}$ are respectively column and row vectors in $\Z^{d-2}$. Below
we denote the usual Euclidean inner product on $\R^{d-2}$ by
$\langle\,\cdot\,,\,\cdot\,\rangle$.
\begin{Proposition}\label{prop:monodromyprop}
The noncommutativity parameter
$\vec\theta_{\vec e_{d-1}}(x)$ from~\eqref{eq:thetaz0} varies with a
change of coset representative $x\in\R/\Z$ according to its
$\sfSL(d-1,\Z)$ orbit under the action of the monodromy matrix
\eqref{eq:ttMblock} by $(d{-}2)$-dimensional linear fractional transformations
\begin{gather}\label{eq:thetaMoritagen}
\vec\theta_{\vec e_{d-1}}(x+1) = \ttM\big[\vec\theta_{\vec
 e_{d-1}}(x)\big] :=
\frac{\ttM_{|d-2}\cdot\vec\theta_{\vec e_{d-1}}(x) + \vec m}
{\big\langle \vec m',\vec\theta_{\vec e_{d-1}}(x)\big\rangle + m_{d-1 \,
 d-1}} .
\end{gather}
Under a change of basis of $\Z^{d-1}$ given
by a matrix $B\in\sfGL(d-1,\Z)$, the corresponding noncommutativity
parameter varies according to
\[
\vec\theta_{B\cdot\vec e_{d-1}}(x+1) = \big(B^{\rm
 t} \ttM B\big)\big[\vec\theta_{B\cdot\vec e_{d-1}}(x)\big] .
\]
\end{Proposition}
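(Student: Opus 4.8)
The plan is to reduce the statement to a single linear-algebra identity: the $\real^{d-1}$-vector whose projectivization yields $\vec\theta_{\vec e_{d-1}}(x)$ is carried to its value at $x+1$ simply by left multiplication by the monodromy matrix $\ttM$. Concretely, writing $z_0:=\vec e_{d-1}$ and
\[
W(x):=\Sigma^{-1}\,\varphi_{x_0\,x}\cdot z_0 ,
\]
split $W(x)=\big(w(x),s(x)\big)$ into its first $d-2$ components $w(x)\in\real^{d-2}$ and its last component $s(x)=\langle z_0,W(x)\rangle$; then \eqref{eq:thetaz0} says $\vec\theta_{z_0}(x)=w(x)/s(x)$ whenever $x\notin\F_*$ (i.e.\ $s(x)\neq0$). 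So it suffices to understand $W(x+1)$ in terms of $W(x)$.

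The key step is the identity $W(x+1)=\ttM\cdot W(x)$. Since $\varphi$ is a one-parameter subgroup of $\sfGL(d-1,\real)$, we have $\varphi_{x_0(x+1)}=\varphi_{x_0 x}\,\varphi_{x_0}$, and \eqref{eq:monodromydef} gives $\varphi_{x_0}=\Sigma\,\ttM\,\Sigma^{-1}$; hence, in the notation $\hat\varphi_t=\Sigma^{-1}\varphi_t\Sigma$ of Section~\ref{sec:Rzgen}, one reads off $W(x)=\hat\varphi_{x_0 x}\,\Sigma^{-1}z_0$ and $W(x+1)=\hat\varphi_{x_0 x}\,\ttM\,\Sigma^{-1}z_0$. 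The crucial point is that the family $\{\hat\varphi_t\}_{t\in\real}$ is abelian — it is the conjugate of the abelian one-parameter group $\{\varphi_t\}$ — so $\hat\varphi_{x_0 x}$ commutes with $\ttM=\hat\varphi_{x_0}$, and therefore $W(x+1)=\ttM\,\hat\varphi_{x_0 x}\Sigma^{-1}z_0=\ttM\cdot W(x)$.

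From here the first assertion is immediate: inserting the block form \eqref{eq:ttMblock} gives $\ttM\cdot W(x)=\big(\ttM_{|d-2}w(x)+s(x)\,\vec m\,,\,\langle\vec m',w(x)\rangle+s(x)\,m_{d-1\,d-1}\big)$, so $\vec\theta_{z_0}(x+1)$ is the first $d-2$ components of this vector divided by its last component; dividing numerator and denominator by $s(x)$ and substituting $\vec\theta_{z_0}(x)=w(x)/s(x)$ produces precisely \eqref{eq:thetaMoritagen}. For the exceptional fibres $x\in\F_*$, where $\vec\theta_{z_0}(x)=0$ by Lemma~\ref{lem:ellfibercom}, the formula follows either as a limiting case of the above or by noting that the $C^*$-algebra bundle is determined by its restriction to the dense open set $\real\setminus\F_*$. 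The second assertion is obtained by running the same argument with $z_0$ replaced by $B\cdot\vec e_{d-1}$ and rewriting the block decomposition \eqref{eq:ttMblock} and the pairings relative to the basis $B\cdot\vec e_1,\dots,B\cdot\vec e_{d-1}$: the monodromy then acts on the noncommutativity parameters through $B^{\rm t}\,\ttM\,B$, the transpose being the transformation law $\Theta\mapsto B^{\rm t}\Theta B$ of the bivector defining $\torus^{d-1}_{\vec\theta}$ under a change of basis of $\Z^{d-1}$ (cf.\ Example~\ref{ex:NCTdMorita}).

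The genuinely easy part is the identity $W(x+1)=\ttM W(x)$; the main points requiring care are bookkeeping rather than substance, namely (i) checking that the projectivization $\vec\theta_{z_0}(x)=w(x)/s(x)$ is the very parameter that labels the fibre algebra $C(\torus^{d-2})\rtimes_{\lt^x}\Z_{z_0}$ produced in the proof of Theorem~\ref{thm:NCtorusbundlegen}, and (ii) confirming in the basis-change statement that the correct matrix is $B^{\rm t}\ttM B$ rather than $B^{-1}\ttM B$ — this is precisely the place where the duality between the torus fibres and their $C^*$-algebraic T-dual must be tracked carefully through the identifications.
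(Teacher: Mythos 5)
Your proof is correct and follows essentially the same route as the paper's: the paper's entire argument consists of the identity $\Sigma^{-1}\,\varphi_{x+x_0}=\ttM\,\Sigma^{-1}\,\varphi_x$ (your $W(x+1)=\ttM\cdot W(x)$, derived equivalently via commutativity of the one-parameter group $\hat\varphi_t$) followed by the component computation it declares "straightforward". You merely spell out in more detail the block calculation, the exceptional fibres in $\F_*$, and the basis-change bookkeeping leading to $B^{\rm t}\,\ttM\,B$.
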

\begin{proof}
The key feature stems from the definition \eqref{eq:monodromydef} of the monodromy matrix
$\ttM$:
\[
\Sigma^{-1}\,\varphi_{x+x_0} = \Sigma^{-1} \varphi_{x_0} \varphi_x =
\ttM \Sigma^{-1} \varphi_x .
\]
The statements then follow from straightforward calculations in components.
\end{proof}

\begin{Remark}
\label{rem:MoritaKK}
The right-hand side of \eqref{eq:thetaMoritagen} is an example of a
linear fractional transformation in higher dimensions, known from
complex analysis, see, e.g.,~\cite{Cowen2000}. We can show that it defines a Morita equivalence of noncommutative $d{-}1$-tori from
Example~\ref{ex:NCTdMorita}. For this, we introduce the skew-symmetric
matrix $\Theta$ corresponding to the vector $\vec\theta\in\R^{d-2}$ as
in Example~\ref{ex:NCtorid}:
\[
\Theta = \left(
\begin{matrix}
\mbf 0_{d-2} & \vec\theta \\ -\vec\theta\,^{\rm t} & 0
\end{matrix}
\right) .
\]
We denote by $\mbf g_{\ttM}$ the element of ${\sf SO}(d-1,d-1;\Z)$
corresponding to the monodromy matrix $\ttM\in\sfSL(d-1,\Z)$:
\[
\mbf g_{\ttM} = \left(
\begin{matrix}
\ttM & \mbf 0_{d-1} \\ \mbf 0_{d-1} & \big(\ttM^{\rm t}\big)^{-1}
\end{matrix}
\right)
\in {\sf SO}(d-1,d-1;\Z) .
\]
Introduce matrices $\mbf T_{d-1}$ of order $2$ with determinant $-1$
by
\[
\mbf T_{d-1} = \left(
\begin{matrix}
\unit_{d-1}-{\tt E}_{d-1} & {\tt E}_{d-1} \\ {\tt E}_{d-1} &
\unit_{d-1}-{\tt E}_{d-1}
\end{matrix}
\right) \in {\sf O}(d-1,d-1;\Z) ,
\]
where ${\tt E}_{d-1}$ is the matrix unit whose only
non-zero element is $({\tt E}_{d-1})_{d-1\,d-1} = 1$.
Finally, define the element $\mbf M\in{\sf SO}(d-1,d-1;\Z)$ by
\[
\mbf M = \mbf T_{d-1} \mbf g_{\ttM} \mbf T_{d-1} .
\]
Using the adjugate formula
for matrix inverses, a straightforward if tedious calculation then shows that the
corresponding ${\sf SO}(d-1,d-1;\Z)$ orbit of $\Theta$ reproduces the
$(d{-}2)$-dimensional linear fractional transformation of
\eqref{eq:thetaMoritagen}:
\[
\mbf M[\Theta] = \left(
\begin{matrix}
\mbf 0_{d-2} & \ttM\big[\vec\theta\, \big] \\
-\ttM\big[\vec\theta\, \big]^{\rm t} & 0
\end{matrix}
\right) \qquad \mbox{with} \quad \ttM\big[\vec\theta\, \big] = \frac{\ttM_{|d-2}\cdot\vec\theta + \vec m}
{\big\langle \vec m',\vec\theta\, \big\rangle + m_{d-1 \,
 d-1}} .
\]
It follows that, while the fibre noncommutative
tori of Theorem~\ref{thm:NCtorusbundlegen} are not generally identical under a change of
representative $x\in\R/\Z$, by Example~\ref{ex:NCTdMorita}
they are always Morita
equivalent:
\[
\torus^{d-1}_{\vec\theta_{\vec e_{d-1}}(x+1)} \ \mbf\sim_{\text{\tiny
 M}} \ \torus^{d-1}_{\vec\theta_{\vec e_{d-1}(x)}} .
\]
Recalling that our formulation of topological T-duality takes place in the additive category~${\CCK\hspace{-1mm}\CCK}$ from
Section~\ref{sec:TdualityKK}, this has a natural interpretation: The
non-trivial monodromy in the automorphism group of the~\smash{$\torus^{d-1}$} fibres of the
twisted torus $\torus_{\Lambda_\sfG}$ is
manifested as a (generally non-trivial) isomorphism in the Morita automorphism
group of the fibre noncommutative tori in~${\CCK\hspace{-1mm}\CCK}$.
\end{Remark}

\section{Three-dimensional solvmanifolds and their T-duals}\label{sec:3dexamples}

The goal of this final section is to give some explicit examples in
low dimensions of the general formalism we have developed in
Section~\ref{sec:TdualityMostow}, recovering some previously known
results in the literature from a new perspective, as well as providing several new examples.
Note that tori~$\torus^d$ in any
dimension $d$ are covered by our general framework of
Section~\ref{sec:RMostow} for~$\R_y$-actions: when
\mbox{$\varphi_x=\unit_{d-1}$} for all~$x\in\R$, the torus bundles
$\torus^d\to\torus^{d-n}$ of Proposition~\ref{lem:quotienttorus} are
\mbox{trivial}. Theorem~\ref{prop:RyTduality} then shows that any
$C^*$-algebraic T-dual $C\big(\torus^d\big)\rtimes_\rt\R_{y_0}$ is isomorphic
to $\C\T\big(\torus^{d-1}\times\torus_{y_0},0\big)\simeq
C\big(\torus^{d-1}\times\torus_{y_0}\big)\otimes\CK$. Below we
apply our formalism to the well-known Mostow fibrations of
three-dimensional solvmanifolds.

\subsection[Mostow bundles and $\sfSL(2,\R)$ conjugacy classes]{Mostow bundles and $\boldsymbol{\sfSL(2,\R)}$ conjugacy classes}

In three dimensions, solvmanifolds are completely
classified, see, e.g.,~\cite{Bock2009}. In particular,
by~\cite[Proposition~5.1]{Bock2009} all solvmanifolds in
this dimension which are based on connected and simply-connected Lie groups
$\sfG$ are almost abelian. If $\sfG=\R^2\rtimes_\varphi\R$ is
abelian, i.e., $\varphi_x=\unit_2$ for all $x\in\R$, then $\torus_{\Lambda_\sfG}\simeq\torus^3$ is a torus, while
the remaining cases correspond to non-trivial one-parameter group actions
\[
\varphi\colon \ \R\longrightarrow\sfSL(2,\real) .
\]
Elements of $\sfSL(2,\R)$ are classified up to conjugacy by trace, so there are three classes, determined by which of the three types
of conjugacy classes of $\sfSL(2,\real)$ that
the image of the homomorphism~$\varphi$ lands in: parabolic
($|\Tr\varphi_x|=2$ for all $x\in\R$), elliptic ($|\Tr\varphi_x|<2$
for all $x\not=k\,\pi/2$ with $k\in \mathbb{Z}$),
or hyperbolic ($|\Tr\varphi_x|>2$ for all
$x\not=0$), see, e.g.,~\cite{Hull2005}.

The existence of cocompact discrete subgroups $\Lambda_\sfG$ of $\sfG$ is highly
restrictive on the allowed homomorphisms $\varphi$; a necessary condition is that there exists
$x_0\in\R^\times$ such that $\Tr\varphi_{x_0}\in\Z$. The monodromy matrices
\begin{gather}\label{eq:monodromy3d}
\ttM=\Sigma^{-1}\,\varphi_{x_0} \Sigma=\left(\begin{matrix} a & b \\
 c & d \end{matrix}\right) \in \sfSL(2,\Z)
\end{gather}
are integer matrices that live in the corresponding conjugacy classes
of $\sfSL(2,\Z)$. As automorphisms of the torus fibres of the Mostow bundle, in the mapping
class group $\sfSL(2,\Z)$, they act on $\torus^2$ in the following
way. Let $\big(\lambda,\lambda^{-1}\big)$ be the eigenvalues of $\ttM$, which
are the roots of the characteristic polynomial $t^2-(\Tr\ttM) t+1$. Then
there are three possibilities:
\begin{itemize}\itemsep=0pt
\item{$\underline{\rm Parabolic}$:} In this case
 $\Tr\ttM=\pm\,2$ and $\lambda=\lambda^{-1}=\pm\,1$. Then $\ttM$ has
 an integral eigenvector corresponding to a closed curve on
 $\torus^2$ which is invariant under the
 associated automorphism, and homeomorphisms of this type are Dehn twists of
 $\torus^2$.
\item{$\underline{\rm Elliptic}$:} In this case $\Tr\ttM=0,\pm 1$,
 and the eigenvalues $\big(\lambda,\lambda^{-1}\big)$ are complex of
 modulus~$1$. Then the associated automorphism has finite order
 equal to $2$, $3$, $4$ or $6$, and corresponds to a periodic
 homeomorphism of~$\torus^2$.
\item{$\underline{\rm Hyperbolic}$:} In this case
 $|\lambda|>1>\big|\lambda^{-1}\big|$. The associated automorphism has no
 invariant closed curve on $\torus^2$; it `stretches' the
 eigenspace corresponding to $\lambda$ and `contracts' the eigenspace
 corresponding to~$\lambda^{-1}$. These are called Anasov homeomorphisms of~$\torus^2$.
\end{itemize}

In this section we shall apply Theorem~\ref{thm:NCtorusbundlegen} with
$z_0=\vec e_2=(0,1)$, which for convenience we combine with
Proposition~\ref{prop:monodromyprop} and Remark~\ref{rem:MoritaKK} to reformulate it as
\begin{Theorem}\label{thm:NCtorusbundle3d}
The $C^*$-algebraic T-dual of a three-dimensional
solvmanifold $\torus_{\Lambda_\sfG}$ with respect to an $\R_z$-action
with $z_0=(0,1)$
is Morita equivalent to a $C^*$-algebra bundle
of noncommutative two-tori over the circle $\torus$:
\begin{gather*}
C(\torus_{\Lambda_\sfG})\rtimes_\rt\real_{(0,1)}
\ \mbf\sim_{\text{\tiny M}} \ \mcoprod \,
\torus^2_{\theta(x)} ,
\end{gather*}
where the noncommutativity parameters $\theta(x)\in\R$ are given by
\begin{gather*}
\theta(x) = \begin{cases}
 0 & \mbox{for} \ x \in \F_* , \\
\dfrac{\big(\Sigma^{-1} \varphi_{x_0 x}\big)_{12}}{\big(\Sigma^{-1} \varphi_{x_0 x}\big)_{22}}
 & \mbox{for} \ x\in\R\setminus\F_* , \end{cases}
\end{gather*}
where $\big(\Sigma^{-1} \varphi_{x_0 x}\big)_{ij}$ for $i,j\in\{1,2\}$ denote
the matrix elements of $\Sigma^{-1} \varphi_{x_0 x}$. The
noncommutativity parameter $\theta(x)$ varies under a
change of coset representative $x\in\R/\Z$
according to its $\sfSL(2,\Z)$ orbit under the corresponding monodromy
matrix~\eqref{eq:monodromy3d}:
\begin{gather}\label{eq:thetaMorita3d}
\theta(x+1) = \ttM\big[\theta(x)\big] =
\frac{a \theta(x)+b}{c \theta(x)+d} .
\end{gather}
By Example~{\rm \ref{ex:NCT2Morita}} the corresponding fibre noncommutative
tori are Morita equivalent:
\[
\torus^2_{\theta(x+1)} \ \mbf\sim_{\text{\tiny
 M}} \ \torus^2_{\theta(x)},
\]
and so isomorphic in the category ${\CCK\hspace{-1mm}\CCK}$.
\end{Theorem}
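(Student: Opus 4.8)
The plan is to obtain Theorem~\ref{thm:NCtorusbundle3d} by specialising the three already-established general results Theorem~\ref{thm:NCtorusbundlegen}, Proposition~\ref{prop:monodromyprop} and Remark~\ref{rem:MoritaKK} to the case $d=3$ with the distinguished $\R_z$-action generated by $z_0=\vec e_2=(0,1)\in\R^2$, and to check that the general formulas collapse to the ones displayed.

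First I would apply Theorem~\ref{thm:NCtorusbundlegen} verbatim. A three-dimensional almost abelian solvmanifold has covering group $\sfG=\R^2\rtimes_\varphi\R$, and under the standing hypotheses of Section~\ref{sec:Rzgen} (that $(0,1)$ lies in $\mathrm{im}\big(\varphi_{x_0}-\Id_{\R^2}\big)$ and that $\Z_{(0,1)}:=\R_{(0,1)}\cap\Lambda_\sfG$ is a lattice in $\R_{(0,1)}$) the theorem gives $C(\torus_{\Lambda_\sfG})\rtimes_\rt\real_{(0,1)}\ \mbf\sim_{\text{\tiny M}}\ \mcoprod\torus^{2}_{\vec\theta_{(0,1)}(x)}$ with $\vec\theta_{(0,1)}(x)\in\R^{d-2}=\R$. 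I would then unwind \eqref{eq:thetaz0}: for $d=3$ the complementary projector $\unit_{2}-\proj_{z_0}$ extracts the first coordinate of a vector in $\R^2$ while $\langle z_0,\,\cdot\,\rangle$ extracts the second, so the numerator $(\unit_2-\proj_{z_0})\Sigma^{-1}\varphi_{x_0 x}\cdot z_0$ is $(\Sigma^{-1}\varphi_{x_0 x})_{12}$, the denominator $\langle z_0,\Sigma^{-1}\varphi_{x_0 x}\cdot z_0\rangle$ is $(\Sigma^{-1}\varphi_{x_0 x})_{22}$, and $\F_*$ is exactly the locus where the latter vanishes. This yields the displayed piecewise formula for $\theta(x)$.

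Next I would specialise Proposition~\ref{prop:monodromyprop}. In the block form \eqref{eq:ttMblock} with $d=3$ the block $\ttM_{|d-2}$ is the scalar $a=m_{11}$, the column $\vec m$ is $b=m_{12}$, the row $\vec m'$ is $c=m_{21}$ and $m_{d-1\,d-1}=d=m_{22}$; since the Euclidean pairing on $\R^{d-2}=\R$ is ordinary multiplication, the $(d{-}2)$-dimensional linear fractional transformation \eqref{eq:thetaMoritagen} becomes $\theta(x+1)=\tfrac{a\theta(x)+b}{c\theta(x)+d}=\ttM[\theta(x)]$ with $\ttM\in\sfSL(2,\Z)$ the monodromy matrix \eqref{eq:monodromy3d}, which is precisely \eqref{eq:thetaMorita3d}. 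Since $\sfSL(2,\Z)\subset\sfGL(2,\Z)$ acts by fractional linear transformations and $\theta(x),\theta(x+1)$ lie in one and the same orbit, Example~\ref{ex:NCT2Morita} gives the Morita equivalence $\torus^2_{\theta(x+1)}\ \mbf\sim_{\text{\tiny M}}\ \torus^2_{\theta(x)}$ of the fibre rotation algebras; and since Morita equivalence implies KK-equivalence, hence isomorphism in $\CCK\hspace{-1mm}\CCK$ (Section~\ref{sec:TdualityKK}), these fibres are isomorphic in $\CCK\hspace{-1mm}\CCK$, which completes the argument.

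Because Theorem~\ref{thm:NCtorusbundlegen} and Proposition~\ref{prop:monodromyprop} already do the real work in full generality, I do not expect a genuinely hard step; the only thing requiring care is the bookkeeping — confirming precisely which coordinate the complementary projector $\unit_2-\proj_{z_0}$ and the bracket $\langle z_0,\,\cdot\,\rangle$ single out when $z_0=(0,1)$, and keeping the rescaling $x\mapsto x_0 x$ that normalises the base circle to unit period consistent between the definition of $\F_*$ and the argument of $\varphi$ in the formula for $\theta(x)$.
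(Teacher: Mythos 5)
Your proposal is correct and is essentially identical to the paper's own treatment: the paper states Theorem~\ref{thm:NCtorusbundle3d} explicitly as the combination of Theorem~\ref{thm:NCtorusbundlegen}, Proposition~\ref{prop:monodromyprop} and Remark~\ref{rem:MoritaKK} specialised to $d=3$ and $z_0=\vec e_2=(0,1)$, with exactly the coordinate bookkeeping you describe. Your attention to the rescaling $x\mapsto x_0\,x$ in the definition of $\F_*$ versus the argument of $\varphi$ is the right (and only) point of care.
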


The $C^*$-algebraic T-dual in these instances has more structure in general than the
original algebra of functions $C(\torus_{\Lambda_\sfG})$, as a consequence of
\begin{Proposition}\label{prop:starproduct3d}
The noncommutative torus bundle $\coprod_{x\in\R/\Z}
\torus_{\theta(x)}^2$ is a $C\big(\torus^2\big)$-algebra.
\end{Proposition}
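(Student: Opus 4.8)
The plan is to realise $\mcoprod\torus^2_{\theta(x)}$ explicitly as a crossed product of a commutative $C(\torus^2)$-algebra by a fibrewise action, and then to read off the $C(\torus^2)$-algebra structure directly from Theorem~\ref{thm:crossedfibres}. By Example~\ref{ex:NCtori} each fibre noncommutative two-torus is a crossed product $\torus^2_{\theta(x)}=C(\torus)\rtimes_{\tau^{\theta(x)}}\Z$; from the proof of Theorem~\ref{thm:NCtorusbundle3d} (via the fibrewise Green equivalence~\eqref{eq:fibreNCMostow} and the parametrisation~\eqref{eq:Mostowcoset} used in Lemma~\ref{lem:ellfibercom}) the circle carrying this $C(\torus)$ is the transverse circle $\torus_v$ of that proof, and the generator of $\Z=\Z_{z_0}$ acts by the rotation by $\theta(x)$ recorded in~\eqref{eq:Zcosetactionz}, which is trivial precisely over $\F_*$. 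The first step is to assemble these fibrewise data over the base circle $\torus=\R/\Z$ of the Mostow bundle. Since the homomorphism $\sigma_{\Z_{z_0}}$ of~\eqref{sigmaI} is trivial (all the groups in play being discrete), the rotations glue to a single action $\alpha$ of $\Z_{z_0}$, and the transverse circles glue — using the freeness and properness of the $\Z$-actions on $\R_v\times\R_x$ exactly as in Example~\ref{ex:Moritatori} — to the function algebra of the total space of a flat $\torus_v$-bundle over $\torus$; that this total space is the two-torus, as the notation $C(\torus^2)$ asserts, is itself part of what must be checked and uses $\det\ttM=1$. This yields a presentation $\mcoprod\torus^2_{\theta(x)}\cong C(\torus^2)\rtimes_\alpha\Z_{z_0}$ in $\mathcal{R}\CCK\hspace{-1mm}\CCK_\torus$, compatibly with the bundle structures over $\torus$.

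With such a presentation in hand, the proposition follows from Theorem~\ref{thm:crossedfibres} provided one verifies that $\alpha$ is $C(\torus^2)$-linear, i.e. a fibrewise action over all of $\torus^2$: then $C(\torus^2)\rtimes_\alpha\Z_{z_0}$ is a $C(\torus^2)$-algebra, with fibre over a point $(x,v)$ obtained by evaluating $\alpha$ pointwise, and these fibres are, over a dense set of base points, the expected deformations of $C(\torus_{z_2})$. Linearity over the base circle $\torus$ itself is immediate because the $\R_{z_0}$-action preserves the Mostow fibres; the content is linearity in the transverse $\torus_v$-direction, which is where the non-trivial monodromy~\eqref{eq:thetaMorita3d} has to be reconciled with the gluing at $x=0\sim x=1$. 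Here I would use that, by Remark~\ref{rem:MoritaKK}, the monodromy acts on the fibres by Morita equivalences rather than honest $*$-isomorphisms, so the transverse circle direction only survives as a genuine $C(\torus^2)$-variable after passing to the Morita-equivalent picture of~\eqref{eq:NMoritav}; the $\Z_{z_0}$-equivariance of the Morita equivalence bimodule established in the proof of Theorem~\ref{thm:NCtorusbundlegen}, together with the $C_0(X)$-linear refinement of Morita equivalence in Remark~\ref{rem:MoritaRKK}, is precisely what allows the base and transverse circle directions to be tracked simultaneously through the chain of equivalences.

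I expect this last point to be the main obstacle: arranging the transverse circle as a legitimate $C(\torus^2)$-algebra variable, and not merely a Morita-equivalence parameter, amounts to checking that the resulting structure map $C(\torus^2)\to Z\big(M\big(\mcoprod\torus^2_{\theta(x)}\big)\big)$ is both nondegenerate and genuinely central, which requires careful bookkeeping of the base and transverse circles through the iterated crossed products and Morita equivalences in the proof of Theorem~\ref{thm:NCtorusbundlegen}. The remaining items — upper semicontinuity of the norm function $(x,v)\mapsto\|a(x,v)\|$, continuity of the resulting field of $C^*$-algebras, and non-vanishing of the fibres — are routine once the crossed-product presentation is established.
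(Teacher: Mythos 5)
Your strategy hinges on presenting $\mcoprod\torus^2_{\theta(x)}$ as a crossed product $C\big(\torus^2\big)\rtimes_\alpha\Z_{z_0}$ and then invoking Theorem~\ref{thm:crossedfibres} with $X=\torus^2$. That theorem requires the action $\alpha$ to be $C\big(\torus^2\big)$-linear, i.e., to act by \emph{fibrewise} automorphisms over $\torus^2$, which means the generator of $\Z_{z_0}$ must fix every point of the $(x,v)$-torus. It does not: by \eqref{eq:Zcosetactionz} it rotates the transverse circle by $\theta(x)$, which is irrational for generic $x$ in all three conjugacy classes. So the dynamical system is only $C(\torus)$-linear (over the base coordinate $x$), and Theorem~\ref{thm:crossedfibres} can only ever return the $C(\torus)$-algebra structure you already have from Theorem~\ref{thm:NCtorusbundlegen}. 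This is not a bookkeeping issue that more care with the Morita equivalences will repair: no change of presentation makes an irrational rotation act trivially on the circle it rotates. Your appeal to Remark~\ref{rem:MoritaKK} and to the $\Z_{z_0}$-equivariance of the bimodule in \eqref{eq:fibreNCMostow} is a non sequitur here --- those statements govern the gluing of the fibres over $x\sim x+1$ (the monodromy), not fibrewise-ness in the transverse direction --- and your closing paragraph concedes that the decisive verification is still outstanding. As it stands the proposal identifies the right obstacle but does not overcome it.

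The paper's proof avoids this route entirely. It writes the bundle as a strict deformation of $C\big(\torus\times\torus^2\big)$ via the explicit star-product \eqref{eq:parabolicstar} on fibrewise Fourier modes, and then simply observes that the functions independent of one of the two fibre coordinates ($v_1$ or $v_2$) close under $\star_\theta$ with the star-product collapsing to the pointwise product, yielding a $C^*$-algebra monomorphism $C\big(\torus^2\big)\hookrightarrow\mcoprod\torus^2_{\theta(x)}$. In other words, the $C\big(\torus^2\big)$ is exhibited directly as a commutative subalgebra of the deformed algebra --- it is \emph{not} obtained as the base of a fibrewise crossed product, and no $C\big(\torus^2\big)$-linearity of any group action is ever asserted. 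If you want to phrase this in your crossed-product language, the correct object is the canonical embedding $j\colon C(\torus\times\torus_v)\to C(\torus\times\torus_v)\rtimes\Z_{z_0}$ of the coefficient algebra (as in the proof of Proposition~\ref{prop:NCcorr}), or dually $C(\torus)\otimes C^*(\Z_{z_0})$ generated by the group elements; either gives the commutative copy of $C\big(\torus^2\big)$ without any claim that the action is fibrewise over it.
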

\begin{proof}
We can also describe these noncommutative torus bundles via (strict)
deformation quantization: Performing an identical calculation to that of
Example~\ref{ex:NCtori}, we may describe the fibrewise crossed product~\eqref{eq:fibreNCMostow} (with $d=3$) explicitly in terms of a
star-product, and hence the $C^*$-algebra bundle of Theorem~\ref{thm:NCtorusbundle3d}
as a deformation of the algebra of functions
$C\big(\torus\times\torus^2\big)$ on the trivial torus bundle over $\torus$ by regarding the convolution algebra $C_{\rm
 c}(\torus\times\torus\times\zed)$ as the space of functions
$C\big(\torus\times\torus^2\big)$ equipped with the star-product
\begin{gather}
(f\star_{\theta} g)(x,v_1,v_2) = \sum_{(p_1,p_2)\in\zed^2}
\e^{2\pi\ii (p_1 v_1+p_2 v_2)} \nn \\
\hphantom{(f\star_{\theta} g)(x,v_1,v_2) = \sum_{(p_1,p_2)\in\zed^2} }{} \times \sum_{(q_1,q_2)\in\zed^2}
f_{q_1,q_2}(x) g_{p_1-q_1,p_2-q_2}(x) \e^{2\pi\ii
 \theta(x) (p_1-q_1) q_2},\label{eq:parabolicstar}
\end{gather}
where we used the fibrewise Fourier transformation
\[
f(x,v_1,v_2)=\sum_{(p_1,p_2)\in\zed^2} f_{p_1,p_2}(x) \e^{2\pi\ii(p_1 v_1+
 p_2\,v_2)} ,
\]
with functions $f_{p_1,p_2}\colon \torus\to\complex$.
In this formulation of the
$C^*$-algebraic T-dual, it is immediately evident that there is an injection
$C\big(\torus^2\big)\hookrightarrow \coprod_{x\in\R/\Z}\,
\torus_{\theta(x)}^2$ as the space of functions which are independent of
the coordinate~$v_1$ (or~$v_2$); for such functions, the sums in~\eqref{eq:parabolicstar} truncate to $p_1=q_1=0$ (or $p_2=q_2=0$) and
the noncommutative star-product reduces to the commutative pointwise
product of functions in $C\big(\torus^2\big)$. This defines $C^*$-algebra
monomorphisms making $\coprod_{x\in\R/\Z}\,
\torus_{\theta(x)}^2$ into a $C\big(\torus^2\big)$-algebra.
\end{proof}

The remainder of this paper is devoted to providing illustrations of
Theorem~\ref{thm:NCtorusbundle3d}, through explicit calculations in
each of the three conjugacy classes of~$\sfSL(2,\R)$. The features
differ for each conjugacy class so we consider them individually in
turn.

\subsection{Parabolic torus bundles}\label{sec:parabolicbundles}

We start with the best
studied example in the literature, which is based on the nilpotent Heisenberg
group. In string theory it is T-dual to the three-torus $\torus^3$ with $H$-flux by
the standard Buscher rules (see, e.g.,~\cite{Hull2009}), and in
topological T-duality it gives the basic example of a noncommutative
principal torus
bundle~\cite{Echterhoff2009,Hannabuss2010, Mathai2004}. Here we shall give a new algebraic
perspective on
both these T-duals by applying our formalism of topological T-duality
directly to the Heisenberg nilmanifold. This class has several
features that make it special among the three-dimensional
solvmanifolds, which we explain in detail.

\subsubsection*{Heisenberg nilmanifolds}

The three-dimensional Heisenberg group $\Heis(3)$ is the nilpotent
Lie group whose Mostow bundle
structure is based on the semi-direct product
\[
\Heis(3) = \real^2\rtimes_\varphi\real \qquad \mbox{with} \quad
\varphi_x = \left( \begin{matrix}
1 & x \\
0 & 1
\end{matrix} \right) .
\]
Since $|\Tr\varphi_x|=2$, the matrix $\varphi_x$ parameterizes a
parabolic conjugacy class of $\sfSL(2,\real)$.
Here we denote the coordinates on the group manifold of $\sfN=\real^2$ by
$(y,z)$. The group multiplication on~$\Heis(3)$ is then given by
\[
(x,y,z) (x',y',z' ) = (x+x',y+y'+x z', z+z' ) ,
\]
and the inverse of a group element is
\[
(x,y,z)^{-1}= (-x,x z-y,-z) .
\]

In this case it is clear that for
\[
x_0=m
\]
with $m\in\Z^\times$, the matrix $\varphi_{x_0}$ is integer-valued and hence
may be taken as monodromy matrix~${\tt M}_m$ of infinite order with $\Sigma=\unit_2$:
\begin{gather}\label{eq:parmon}
\ttM_m = \varphi_m = \left( \begin{matrix}
1 & m \\
0 & 1
\end{matrix} \right) .
\end{gather}
The \emph{Heisenberg nilmanifold} $\torus_{\Heis_m(3;\Z)}$ is the compact
space obtained as the quotient of~$\Heis(3)$ with respect to the
lattice given by the discrete Heisenberg group
\[
\Heis_m(3;\Z):=\big\{(m\,\alpha,\beta,\gamma)\in\Heis(3)\, \big|\,
\alpha,\beta,\gamma\in\zed\big\} .
\]
The equivalence relation
is given by the left action of $\Heis_m(3;\Z)$, which
leads to the local coordinate identifications under the action of the
generators of $\Heis_m(3;\Z)$ given by
\begin{gather}
(x,y,z)\longmapsto (x+m,y+m\,z,z) , \nonumber \\
(x,y,z)\longmapsto (x,y+1,z) , \nonumber \\
(x,y,z)\longmapsto (x,y,z+1) .\label{eq:H3integeraction}
\end{gather}
Geometrically, this exhibits the Heisenberg nilmanifold as a
non-trivial principal $\torus^2$-bundle
$\torus_{\Heis_m(3;\Z)}\rightarrow \torus$.
Using the algorithm described in Section~\ref{sec:RMostow}, the
relation matrix
\[
{\tt A}_m = \ttM_m-\unit_2
\]
has rank $r=1$ with elementary divisors
$m_1=m$ and $m_2=0$, and the first homology group of~$\torus_{\Heis_m(3;\Z)}$ can thus be presented as the $\Z$-module
\begin{gather*}
{\rm H}_1(\torus_{\Heis_m(3;\Z)},\Z) \simeq \Z
\oplus(\Z\oplus\Z_m) ,
\end{gather*}
with respective free generators denoted $e_x$ and $e_y$, and torsion
generator $e_z$ of order~$m$, $m e_z=0$.

Dually, the topology of the nilmanifold is captured by its algebra of
functions, which can be computed as the subalgebra of invariant
functions on the
Heisenberg group $\Heis(3)$ with respect to the left action of the
lattice $\Heis_m(3;\Z)$:
\[
C\big(\torus_{\Heis_m(3;\Z)}\big) =
C_0\big(\Heis(3)\big)^{\Heis_m(3;\Z)} .
\]
Harmonic analysis on the nilmanifold can be used to determine the
Fourier decomposition of any function $\mathtt{f}\in
C(\torus_{\Heis_m(3;\Z)})$. Invariance under the generators
\eqref{eq:H3integeraction} of $\Heis_m(3;\Z)$ forces the expansion to take the form
\begin{gather}\label{eq:nilfoldexpansion}
\mathtt{f}(x,y,z) = \sum_{k\in\Z_m} \sum_{p,q\in\Z} \mathtt{f}_{p,k}(x+m q)
\e^{2\pi\ii p y +2\pi\ii (k+m p q) z} ,
\end{gather}
for functions $\mathtt{f}_{p,k}\colon \real\to\complex$ vanishing at
infinity. From this expression we can easily see the
$C(\torus)$-algebra structure of $C(\torus_{\Heis_m(3;\Z)})$: Restricting \eqref{eq:nilfoldexpansion} to $y=z=0$
determines a function on the circle $\torus=\big\{\big(\e^{2\pi\ii x/m},1,1\big)\in\torus_{\Heis_m(3;\Z)}\big\}$ and defines a $C^*$-algebra monomorphism
$C(\torus)\hookrightarrow C(\torus_{\Heis_m(3;\Z)})$. On the other hand, the evaluation
of \eqref{eq:nilfoldexpansion} at $z=0$ yields a function on the
two-torus $\torus^2=\big\{\big(\e^{2\pi\ii x/m},\e^{2\pi\ii
 y},1\big)\in\torus_{\Heis_m(3;\Z)}\big\}$ and defines a
$C^*$-algebra monomorphism $C\big(\torus^2\big)\hookrightarrow
C(\torus_{\Heis_m(3;\Z)})$ making $C(\torus_{\Heis_m(3;\Z)})$ into a
 $C\big(\torus^2\big)$-algebra in this case.

Following the prescriptions of Sections~\ref{sec:Rygen} and~\ref{sec:Rzgen}, we shall now study the
known T-duals of the Heisenberg nilmanifold
through the formalism of topological T-duality discussed
in Section~\ref{sec:TopTtorus}.

\subsubsection*{$\boldsymbol{\real_y}$-action: $\boldsymbol{\torus^3}$ with $\boldsymbol{H}$-flux}

The one-parameter subgroup
\[
\ker({\tt A}_m) = \real_{(1,0)}:=\big\{(0,\lambda,0) \in \Heis(3) \big\}
\]
is the center of $\Heis(3)$. The quotient group $\Heis(3)/\R_{(1,0)}$ can be parameterized by equivalence
classes $\big[(x,0,z)\big]$, and the
multiplication law
\[
\big[(x,0,z)\big]\,\big[(x',0,z')\big] = \big[(x+x',0,z+z')\big]
\]
is that of the abelian Lie group $\Heis(3)/\R_{(1,0)}\simeq\R^2$. It follows that the quotient
$\torus_{\Heis_m(3;\Z)}/\real_{(1,0)}$ is the two-dimensional torus
$\torus^2$ with coordinates $\big(\e^{2\pi\ii x/m},\e^{2\pi\ii z}\big)$. The corresponding principal circle bundle
$p_{(1,0)}\colon \torus_{\Heis_m(3;\Z)}\to{ \torus^2}$, with fibre coordinate $\e^{2\pi\ii y}$, is the standard realization of
the Heisenberg nilmanifold as a $\torus$-fibration over $\torus^2$ of
degree~$m$: the connection
\[
\kappa_{(1,0)} = -\dd y + x \dd z
\]
of Theorem~\ref{prop:RyTduality} has
curvature
\begin{gather*}
\dd\kappa_{(1,0)} = \dd x\wedge \dd z ,
\end{gather*}
and so the first Chern class $c_1(p_{(1,0)})$ is $m$ times the standard generator of
${\rm H}^2\big(\torus^2,\Z\big)\simeq\Z$.
By Theorem~\ref{prop:RyTduality} it follows that the $C^*$-algebraic
T-dual
\[
C\big(\torus_{\Heis_m(3;\Z)}\big) \rtimes_\rt \real_{(1,0)} \simeq \C\T\big(\torus^3,m\big)
\]
is a continuous-trace algebra with spectrum~$\torus^3$ and
Dixmier--Douady class equal to~$m$ times the standard generator of ${\rm
 H}^3\big(\torus^3,\Z\big)\simeq\Z$. Thus in this case we recover the
standard T-duality between the three-torus $\torus^3$ with $H$-flux
and the Heisenberg nilmanifold viewed as a circle bundle.
The correspondence space construction is given by
Proposition~\ref{prop:Mostowcorr} with
$\torus_{\Lambda_{\sfG^{(1,0)}}}=\torus^2$.

\subsubsection*{$\boldsymbol{\real_z}$-action: Noncommutative principal torus bundles}\label{sec:nilNCtorus}

We shall now recover the known noncommutative principal torus bundle
which is T-dual to $\torus_{\Heis_m(3;\Z)}$ from a new
perspective, by directly working with the algebraic description of the
nilmanifold. For this, we follow our prescription for $\R_z$-actions
from Section~\ref{sec:Rzgen} and consider the one-parameter subgroup
$\real_{(0,1)}$ of $\Heis(3)$ given by
\[
\real_{(0,1)} = \big\{(0,0,\zeta)\in\Heis(3) \big\} ,
\]
whose right action by multiplication generates
\[
(x,y,z) (0,0,\zeta) = (x,y+x \zeta,z+\zeta) .
\]
The quotient of this action to $\torus_{\Heis_m(3;\Z)}$ fixes
every point of the form $\big(\e^{2\pi\ii n/m},\e^{2\pi\ii
 y},\e^{2\pi\ii z}\big)\in\torus^2\subset \torus_{\Heis_m(3;\Z)}$ for
$n\in\Z$ with isotropy
subgroup
\[
\Z_{(0,1)} = \real_{(0,1)}\cap\Heis_m(3;\Z) .
\]
In this case $\F_*=\varnothing$, and Theorem~\ref{thm:NCtorusbundle3d} thus identifies the $C^*$-algebraic
T-dual through the Morita equivalence{\samepage
\[
C\big(\torus_{\Heis_m(3;\Z)}\big)\rtimes_\rt\real_{(0,1)} \ \mbf\sim_{\text{\tiny M}} \
\mcoprod \torus_{m\,x}^2
\]
of {$C(\torus)$-algebras}.}

We can
explicitly check the anticipated monodromy \eqref{eq:thetaMorita3d}:
Under a change of coset representative $x\in\real/\Z$
of the fibres of the noncommutative torus bundle
$\coprod_{x\in\R/\Z} \torus^2_{\theta_m(x)}$, the noncommutativity parameter changes
\begin{gather}\label{eq:parthetamon}
\theta_m(x+1) = \theta_m(x)+m = {\tt M}_{m}\big[\theta_m(x)\big]
\end{gather}
by the $\sfSL(2,\Z)$ orbit of $\theta_m(x)=m\,x$ under the monodromy matrix ${\tt M}_{m}$ given by~\eqref{eq:parmon}, and by Example~\ref{ex:NCT2Morita} the fiber
noncommutative tori are identical:
$\torus^2_{\theta_m(x+1)}=\torus^2_{\theta_m(x)}$. Hence in this case the
non-trivial monodromy in
the automorphism group of the $\torus^2$ fibres of the twisted torus
is implemented trivially as an equality in its T-dual $C^*$-algebra
bundle.

It is well known that these $C^*$-bundles are isomorphic to the
convolution $C^*$-algebras of the corresponding integer Heisenberg
groups, and indeed this is precisely the original description of these T-dual
noncommutative torus bundles from~\cite{Mathai2004}. For later
comparison, it is instructive to explicitly demonstrate this result using our
scheme for topological T-duality, which results in
\begin{Proposition}\label{prop:C*Heis}
The $C^*$-algebraic T-dual
$C\big(\torus_{\Heis_m(3;\Z)}\big)\rtimes_\rt\real_{(0,1)}$ is Morita equivalent to
the group $C^*$-algebra $C^*\big(\Heis_m(3;\Z)\big)$.
\end{Proposition}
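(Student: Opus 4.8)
The plan is to realize the $C^*$-algebraic T-dual as a crossed product that can be matched directly with the standard crossed product description of $C^*\big(\Heis_m(3;\Z)\big)$. The starting point is Green's theorem in the form \eqref{scheme}, which for the right $\real_{(0,1)}$-action coming from the subgroup $\real_{(0,1)}\subset\Heis(3)$ gives
\[
C\big(\torus_{\Heis_m(3;\Z)}\big)\rtimes_\rt\real_{(0,1)} \ \mbf\sim_{\text{\tiny M}} \ C_0\big(\Heis(3)/\real_{(0,1)}\big)\rtimes_\lt \Heis_m(3;\Z) .
\]
First I would parameterize the homogeneous space $\Heis(3)/\real_{(0,1)}$: since $(x,y,z)(0,0,\zeta)=(x,y+x\zeta,z+\zeta)$, the cosets are labelled by the pair $(x,y)\in\real^2$ (fixing $z=0$ in each coset), and the residual left $\Heis_m(3;\Z)$-action can be read off from the generators \eqref{eq:H3integeraction}. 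Tracking how $(m,0,0)$, $(0,1,0)$ and $(0,0,1)$ act on the coset coordinates $(x,y)$ yields a concrete action of the lattice on $C_0(\real^2)$, with the generator $(0,0,1)$ acting as an $x$-dependent shift — this is exactly the twisted structure responsible for the noncommutative tori in the fibres.

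The key step is then to unravel this crossed product. I would decompose $\Heis_m(3;\Z)$ using Theorem~\ref{cpsp3} with respect to its normal abelian subgroup generated by $(m,0,0)$ and $(0,1,0)$ (which together act by honest translations on $\real^2$), leaving a residual $\Z$ coming from $(0,0,1)$. Applying Example~\ref{ex:Moritatori} to the translation action collapses $C_0(\real^2)$ crossed by $\Z^2$ down to $C\big(\torus^2\big)$ (up to a stabilization by compacts and up to Morita equivalence), and the remaining $\Z$-action is precisely the rotation-by-$x$ action whose crossed product is the bundle of noncommutative two-tori $\coprod_x\torus^2_{m\,x}$, consistent with the already-established Morita equivalence above the statement. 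On the other side, I would invoke the well-known iterated crossed product presentation $C^*\big(\Heis_m(3;\Z)\big)\simeq\big(C^*(\Z^2)\big)\rtimes\Z\simeq C\big(\torus^2\big)\rtimes\Z$, where the $\Z$-action on $C\big(\torus^2\big)=C^*(\Z^2)$ is dual to the defining automorphism of the integer Heisenberg group (this is the standard fact used in \cite{Mathai2004}, and also follows from Theorem~\ref{cpsp2} applied to the semi-direct product structure $\Heis_m(3;\Z)=(m\Z\times\Z)\rtimes\Z$). Comparing the two $\Z$-actions on $C\big(\torus^2\big)$ — both are the transpose of $\ttM_m$ acting on the character lattice — identifies the two crossed products, giving the asserted Morita equivalence.

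The main obstacle I anticipate is bookkeeping: making sure the various identifications of coset coordinates, the direction of the lattice action (left versus right), and the normalization of the periods all line up, so that the $\Z$-action surviving from Green's theorem genuinely matches the one in the standard presentation of $C^*\big(\Heis_m(3;\Z)\big)$ rather than its inverse or a conjugate. A secondary point requiring care is that Theorem~\ref{cpsp3} and Example~\ref{ex:Moritatori} produce stabilizations by $\CK\big(\Leb^2(\cdot)\big)$, which are harmless for Morita equivalence but must be carried along honestly; since the final claim is only a Morita equivalence (not an isomorphism), these can simply be absorbed at the end. No genuinely hard analytic input is needed beyond the crossed-product machinery already assembled in Section~\ref{sec:NAdef}; the proof is essentially a matching of two explicit iterated crossed products.
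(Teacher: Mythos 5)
Your proposal follows essentially the same route as the paper's proof: Green's theorem to pass to $C_0\big(\Heis(3)/\real_{(0,1)}\big)\rtimes_\lt \Heis_m(3;\Z)$, the semi-direct product decomposition of the lattice unravelled via Theorem~\ref{cpsp3}, the Morita equivalence of Example~\ref{ex:Moritatori} to reduce to $C\big(\torus^2\big)\rtimes\Z$, and Pontryagin duality together with Theorem~\ref{cpsp2} to identify the result with $C^*\big(\Heis_m(3;\Z)\big)$. The only step you elide is the explicit verification that the $\Z_{(0,1)}$-action lifts to the Morita equivalence bimodule so that Theorem~\ref{equivMorita} applies to the outer crossed product, which the paper checks directly.
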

\begin{proof}
The crux of the proof hinges on the fact that the parabolic torus
bundles are the only class in three dimensions for which a fibrewise
analysis is not necessary and the Morita isomorphisms can be
implemented directly in the category $\CCK\hspace{-1mm}\CCK$.
For this, we observe that any element $(x,y,z)\in\Heis(3)$ can
be uniquely factorized as
\[
(x,y,z) = (x,y-x z,0) (0,0,z) =: (x,v,0) (0,0,z) ,
\]
with the action of the subgroup $\real_{(0,1)}$ on the normal abelian subgroup
$\real_{(x,v)}^2:=\{(x,v,0)\in\Heis(3)\}$ given by
\[
{\rm Ad}_z(x,v,0):={}^{(0,0,z)}(x,v,0) = (0,0,z) (x,v,0) (0,0,-z) = (x,v-x z,0) .
\]
It follows that the Heisenberg group can alternatively be presented as the semi-direct
product $\Heis(3) = \real_{(x,v)}^2 \real_{(0,1)}$. Correspondingly,
its lattice is also a semi-direct product
$\Heis_m(3;\Z)=\Z_{(x,v)}^2\rtimes_{{\rm Ad}}\Z_{(0,1)}$, where
$\Z_{(x,v)}^2=\real_{(x,v)}^2\cap\Heis_m(3;\Z)$. Thus the Morita
equivalence from~\eqref{scheme} can be expressed as
\begin{align}
C\big(\torus_{\Heis_m(3;\Z)}\big)\rtimes_\rt\real_{(0,1)} & \ \mbf\sim_{\text{\tiny M}} \
C_0\big(\Heis(3)/\real_{(0,1)}\big) \rtimes_\lt \Heis_m(3;\Z) \nn\\
& \ \mbf\sim_{\text{\tiny M}} \
C_0\big(\real_{(x,v)}^2\big)\rtimes_\lt \big(\Z_{(x,v)}^2\rtimes_{{\rm
 Ad}}\Z_{(0,1)} \big) \nn\\
&\simeq
\big(C_0\big(\real_{(x,v)}^2\big)\rtimes_{\lt}
\Z_{(x,v)}^2\big) \rtimes_\beta\Z_{(0,1)} ,\label{firststep}
\end{align}
where in the last line we have further applied
Theorem~\ref{cpsp3}. Since the groups $\Z_{(x,v)}^2$ and $\Z_{(0,1)}$
are discrete, the homomorphism $\sigma_{\Z_{(0,1)}}$ from \eqref{sigmaI}
is trivial, and so the $\Z _{(0,1)}$-action on $C_0\big(\real^2_{(x,v)} \big)\rtimes_{\lt} \Z_{(x,v)}^2$ is the
canonical action obtained from the diagonal action of $\Z_{(0,1)}$ on
$\Z_{(x,v)}^2\times \real_{(x,v)}^2 $:
\[
\beta_\zeta(f)\big((m \alpha,\nu) , (x,v)\big) =
f\big((m \alpha,\nu+m \alpha \zeta) , (x,v+x \zeta)\big)
\]
for
all $\zeta\in \Z_{(0,1)}$, $(m \alpha,\nu)\in\Z_{(x,v)}^2$, $(x,v)\in \real_{(x,v)}^2 $ and
$f\in C_{\rm c}\big(\Z_{(x,v)}^2\times \real_{(x,v)}^2 \big)\subset C_0\big(\real_{(x,v)}^2 \big)\rtimes_{\lt}
\Z_{(x,v)}^2$.

By Example~\ref{ex:Moritatori} there is a Morita equivalence
\begin{gather}\label{eq:NMorita}
C_0\big(\real_{(x,v)}^2\big)\rtimes_{\lt} \Z_{(x,v)}^2 \ \mbf\sim_{\text{\tiny M}} \
C\big(\torus^2\big) .
\end{gather}
Since the homomorphism
$\sigma_{\Z_{(0,1)}}\colon \Z_{(0,1)}\to\real^+$ is trivial, it follows
that the Morita equivalence bimodule $\Mcal$ implementing this
equivalence is $\Z_{(0,1)}$-equivariant:
the $\Z_{(0,1)}$-action
$U\colon \Z_{(0,1)}\rightarrow \Aut(\Mcal)$ is given by $U_\zeta(\xi)(x,v)=\xi(x,v+x \zeta)$, for all $\zeta\in
\Z_{(0,1)}$, $\xi\in \Mcal$ and $(x,v)\in \real_{(x,v)}^2 $.
Applying Theorem~\ref{equivMorita} we conclude that the
Morita equivalence~\eqref{eq:NMorita} induces the Morita equivalence
$\big(C_0\big(\real_{(x,v)}^2 \big)\rtimes_{\lt}
\Z_{(x,v)}^2\big) \rtimes_\beta \Z_{(0,1)}\
\mbf\sim_{\text{\tiny M}}\ C\big(\torus^2 \big) \rtimes_{{\rm Ad}^*} \Z_{(0,1)}$, where the action
of the group~$\Z_{(0,1)}$ on $C\big(\torus^2\big)$ is the pullback of the action on
$\torus^2=\real_{(x,v)}^2/\Z_{(x,v)}^2$ induced from the left
multiplication of $\Z_{(0,1)}$ on $\Heis(3)$.

Substituting into~\eqref{firststep} we thus obtain
\begin{gather}\label{T3RT2Z}
C(\torus_{\Heis_m(3;\Z)})\rtimes_\rt\real_{(0,1)} \ \mbf\sim_{\text{\tiny M}} \
C\big(\torus^2 \big) \rtimes_{{\rm Ad}^*} \Z_{(0,1)} .
\end{gather}
Now we implement Pontryagin duality via Fourier transform and apply~\eqref{hathatduality} to the right-hand side of~\eqref{T3RT2Z} to
obtain
\[
C\big(\torus^2\big)\rtimes_{{\rm Ad}^*}\Z_{(0,1)} \simeq
C^*\big(\Z_{(x,v)}^2\big)\rtimes_{\widehat{\rm Ad}{}^*} \Z_{(0,1)} ,
\]
where we used the fact that $\widehat\sigma_{\Z_{(0,1)}}\colon \Z_{(0,1)}\to\real^+$ is
trivial since $\Z_{(0,1)}$ is discrete. Applying
Theorem~\ref{cpsp2}, we then arrive at
\[
C\big(\torus_{\Heis_m(3;\Z)}\big)\rtimes_\rt\real_{(0,1)} \
\mbf\sim_{\text{\tiny M}} \
C^*\big(\Z_{(x,v)}^2\rtimes_{\rm
 Ad}\Z_{(0,1)}\big) = C^*\big(\Heis_m(3;\Z)\big) ,
\]
as required.
\end{proof}

\begin{Remark}
The star-products \eqref{eq:parabolicstar} with
$\theta(x)=\theta_m(x)=m x$ are equivalent to the star-products
of~\cite{Hannabuss2010,Hull2019, Lowe2003}. In~\cite{Echterhoff2009,Hannabuss2010},
it is shown that the description of Proposition~\ref{prop:C*Heis}
defines a noncommutative principal $\torus^2$-bundle
(cf.\ Example~\ref{ex:NCprincipalT}). It is also shown
in~\cite{Echterhoff2009} that the monodromy~\eqref{eq:parthetamon},
while acting trivially at the purely algebraic level, has a~non-trivial action on the K-theory group of the $C^*$-algebra
bundle $\coprod_{x\in\R/\Z} \torus_{m x}^2$, viewed as a bundle of
abelian groups over~$\torus$; a physical picture of this action in
terms of monodromies of fiber D-branes is
discussed in~\cite{Hull2019}.
\end{Remark}

\subsubsection*{Noncommutative correspondences}

We come now to the noncommutative correspondences underlying this
topological T-duality. For this, we recall that both
$C\big(\torus_{\Heis_m(3;\Z)}\big)$ and $C\big(\torus^2\big)\rtimes_{{\rm Ad}^*}\Z_{(0,1)}$
are $C\big(\torus^2\big)$-algebras, with~$\torus^2$
parameterized by the local coordinates~$(x,y)$. We will show that the
noncommutative correspondence is given by the balanced tensor product
\[
C\big(\torus_{\Heis_m(3;\Z)}\big)\rtimes_\rt\Z_{(0,1)} \simeq
C\big(\torus_{\Heis_m(3;\Z)}\big)\otimes_{C(\torus^2)} \mcoprod \torus_{m x}^2
\]
over $C\big(\torus^2\big)$.
Note that here the subgroup $\Z_{(0,1)}$ acts non-trivially on the algebra
of functions $C(\torus_{\Heis_m(3;\Z)})$, in contrast to our previous examples. We consider elements of the convolution
algebra $
C_{\rm c}(\torus_{\Heis_m(3;\Z)}\times\Z_{(0,1)})$ as sequences $f=\{f_{\tilde q}\}_{\tilde q\in\Z_{(0,1)}}$ of functions
$f_{\tilde q}\colon \torus_{\Heis_m(3;\Z)}\to\complex$ with the convolution product
\[
(f\star g)_{\tilde q}(x,y,z) = \sum_{\tilde p\in\Z} f_{\tilde p}(x,y,z)
g_{\tilde q-\tilde p}(x,y-\tilde p\,m\,x,z) .
\]
We write the Fourier transformation
\[
f(x,y,z,\tilde z) := \sum_{\tilde q\in\Z} f_{\tilde q}(x,y,z)
\e^{2\pi\ii \tilde q \tilde z}
\]
for functions on $\torus_{\Heis_m(3;\Z)}\times\widetilde\torus_z$, and define the star-product
\[
(f\star g)(x,y,z,\tilde z) = \sum_{\tilde q\in\Z} (f\star
g)_{\tilde q}(x,y,z) \e^{2\pi\ii \tilde q \tilde z} .
\]
We further use the harmonic expansion of functions on the nilmanifold
given from~\eqref{eq:nilfoldexpansion} by
\[
f_{\tilde q}(x,y,z) = \sum_{k\in\Z_m} \sum_{p,q\in\Z} f_{\tilde
 q;p,k}(x+q) \e^{2\pi\ii q y+2\pi\ii (k+m p q) z} .
\]
After some elementary manipulations, we can then write the star-product on
the space of functions $C\big(\torus_{\Heis_m(3;\Z)}\times\widetilde\torus_z\big)$ as
\begin{gather}
(f\star g)(x,y,z,\tilde z) = \sum_{l\in\Z_m} \sum_{r,s,\tilde
 p\in\Z} \e^{2\pi\ii r y+2\pi\ii(l+m r s)z}
 \e^{2\pi\ii\tilde p \tilde z} \nn \\
 \qquad{} \times \sum_{k\in\Z_m}
 \sum_{p,q,\tilde q\in\Z}
 f_{\tilde q;p,k}(x+q)
 g_{\tilde p-\tilde
 q;r-p,l-k}(x+q-s) \e^{2\pi\ii
 mr(q-s)z}
 \e^{-2\pi\ii
 mx(r-p)\tilde q}.\label{eq:starprodNCcorr}
\end{gather}

We can define an injection $C\big(\torus_{\Heis_m(3;\Z)}\big)\hookrightarrow
C\big(\torus_{\Heis_m(3;\Z)}\times\widetilde\torus_z\big)$ as the space of
functions which are independent of the coordinate $\tilde z$. For
these functions the sums in~\eqref{eq:starprodNCcorr} truncate to
$\tilde p=\tilde q=0$, and one recovers the pointwise product of
functions in $C\big(\torus_{\Heis_m(3;\Z)}\big)$. On the other hand, we can include the
noncommutative algebra $\coprod_{x\in\R/\Z} \torus_{m x}^2 \hookrightarrow
C\big(\torus_{\Heis_m(3;\Z)}\times\widetilde\torus_z\big)$ as the space of
functions which are independent of the coordinate $z$, and one sees
that~\eqref{eq:starprodNCcorr} truncates for $l=k=0$ and $s=q=0$ to the star-product~\eqref{eq:parabolicstar} with $\theta(x)=\theta_m(x)=m x$. Altogether, the noncommutative correspondences are
induced by the diagram
\[
\xymatrix{
& C\big(\torus_{\Heis_m(3;\Z)}\big)\otimes_{C\big(\torus^2\big)} \mcoprod \torus_{m\,x}^2 & \\
C\big(\torus_{\Heis_m(3;\Z)}\big) \ar[ur]& &
 \ \mcoprod \torus_{m\,x}^2
\ar[ul] \\
 & C\big(\torus^2\big) \ar[ur]
 \ar[ul] &
}
\]
where all arrows are $*$-monomorphisms of $C^*$-algebras.

\subsection{Elliptic torus bundles}\label{sec:ellipticbundles}

We shall now illustrate Theorem~\ref{thm:NCtorusbundle3d} for the three-dimensional solvmanifolds
based on the Euclidean group in two dimensions, which have no
classical Hausdorff T-duals and whose noncommutative T-dual fibration was first
discussed in~\cite{Hull2019} for the case of the~$\Z_4$ elliptic
monodromy. Here we shall recover this noncommutative geometry rigorously from
our algebraic framework, and moreover extend it to the~$\Z_2$ and~$\Z_6$ elliptic
monodromies which have not been considered previously.

The Euclidean group $\ISO(2)$ in two dimensions is the
three-dimensional almost abelian solvable Lie group
\begin{gather*}
\ISO(2) = \real^2\rtimes_\varphi\real \qquad \mbox{with} \quad
\varphi_x = \left(
\begin{matrix}
\cos x & \sin x \\
-\sin x & \cos x
\end{matrix}
\right) .
\end{gather*}
Since $|\Tr\varphi_x|<2$ for $x\notin \pi \Z$, the rotations~$\varphi_x$
parameterize an elliptic conjugacy class of~$\sfSL(2,\real)$.
Here we shall denote coordinates on the group manifold of $\sfN=\real^2$ by
$z=(z_1,z_2)$. The group multiplication on $\ISO(2)$ is then
\[
(x,z_1,z_2)\,(x',z_1',z_2') = \big(x+x' ,
z_1+z_1' \cos x+z_2'\sin x ,
z_2-z_1' \sin x+z_2'\cos x\big)
\]
and the inverse of a group element is
\[
(x,z_1,z_2)^{-1} = \big({-}x , -z_1\cos x+z_2\sin x,
-z_1\sin x-z_2\cos x\big) .
\]

The existence of lattices in $\ISO(2)$ is highly
restrictive~\cite{Hull2005,Hull2019}. They require angles
$x_0\in\R^\times$ for which $2\cos x_0\in\Z$, i.e., $\cos
x_0=0,\pm\frac12,\pm1$. For our purposes, there are three
inequivalent choices of non-trivial elliptic monodromies,
which are characterized by matrices ${\tt M}\in\sfSL(2,\Z)$ of finite
order in the cyclic subgroups $\Z_2$, $\Z_4$ and $\Z_6$. They lead to three
types of \emph{Euclidean solvmanifolds}, according to the order of the monodromy in the fibre of the
associated Mostow bundle. A crucial distinction from the case of the
Heisenberg nilmanifold is that here the Mostow fibrations are not
principal $\torus^2$-bundles.
We consider each of the three cases
separately in turn.

\subsubsection*{Euclidean $\boldsymbol{\Z_2}$-solvmanifolds}

We observe that for
\[
x_0 = m \pi
\]
with $m\in\Z^\times$, the matrix $\varphi_{x_0}$ is integer-valued, and so
may be set equal to $\tt M$ with $\Sigma=\unit_2$. For~$m$ even, the
monodromy ${\tt M}=\unit_2$ is trivial and the corresponding
Mostow bundle is simply the three-torus $\torus^3$, which we have
already considered in Section~\ref{sec:toruscrossedprod}. For~$m$ odd,
which we now assume,
the nontrivial monodromy matrix is given by
\begin{gather}\label{eq:mon1}
{\tt M}^{\text{\tiny(1)}} = \varphi^{\phantom{\dag}}_{m \pi} = -\unit_2 .
\end{gather}

The Euclidean
$\Z_2$-solvmanimanifold $\torus_{\ISO_m ^{\text{\tiny(1)}} (2;\Z)}$ is the compact
space obtained as the quotient of $\ISO(2)$ with
respect to the lattice given by the discrete Euclidean group
\[
\ISO ^{\text{\tiny(1)}}_m(2;\Z) := \big\{ (m \pi \alpha,\gamma_1,\gamma_2)\in\ISO(2) \,
\big| \, \alpha\in\Z , \, \gamma=(\gamma_1,\gamma_2)\in\Z^2 \big\} .
\]
The quotient is taken by the left action of
$\ISO ^{\text{\tiny(1)}}_m(2;\Z)$, which leads
to the local coordinate identifications under the action of the
generators of $\ISO ^{\text{\tiny(1)}}_m(2;\Z)$ given by
\begin{gather*}
(x,z_1,z_2) \longmapsto (x+m\,\pi,-z_1,-z_2) , \nonumber \\
(x,z_1,z_2) \longmapsto (x,z_1+1,z_2) , \nonumber \\
(x,z_1,z_2) \longmapsto (x,z_1,z_2+1) .
\end{gather*}
The relation matrix $
{\tt A}^{\text{\tiny(1)}} = {\tt M}^{\text{\tiny(1)}}-\unit_2 = -2\unit_2$ has maximal rank $r=2$ with elementary divisors
$m_1=m_2=2$, and
the first homology group of
$\torus_{\ISO ^{\text{\tiny(1)}}_m(2;\Z)}$ can thus be presented as the $\Z$-module
\[
\rmH_1\big(\torus_{\ISO ^{\text{\tiny(1)}}_m(2;\Z)},\Z\big) \simeq
\Z\oplus(\Z_2\oplus\Z_2) .
\]
It follows that the $\Z_2$-solvmanifolds do not possess any classical
T-duals. By symmetry the two possible noncommutative torus bundles
corresponding to the two torsion generators $e_{z_1}$ and $e_{z_2}$ of
order two are the same.

Consider the one-parameter subgroup $\real_{(0,1)}$ of $\ISO(2)$
given by
\[
\real_{(0,1)} = \big\{(0,0,\zeta) \in \ISO(2)\big\} ,
\]
which acts on the Euclidean group $\ISO(2)$ by
right multiplication
\begin{gather*}
(x,z_1,z_2) (0,0,\zeta) = (x,z_1+\zeta \sin x , z_2 + \zeta \cos x) .
\end{gather*}
In this case $\F_*$ is the subset of $x\in\R$ where $\cos x=0$, so that
\begin{gather}\label{eq:F*ell}
\F_* = \tfrac\pi2 (2 \Z+1) .
\end{gather}
Applying Theorem~\ref{thm:NCtorusbundle3d}, it follows that the $C^*$-algebraic T-dual of the
Euclidean $\Z_2$-solvmanifold is Morita equivalent to the
$C(\torus)$-algebra
\begin{gather}\label{eq:algbundleell1}
C\big(\torus_{\ISO^{\text{\tiny($1$)}}_m(2;\Z)}\big)\rtimes_\rt\real_{(0,1)}
\ \mbf\sim_{\text{\tiny M}} \ \mcoprod
\torus^2_{\theta^{\text{\tiny(1)}}_m(x)} ,
\end{gather}
where
\begin{gather}\label{eq:thetam1}
\theta_m^{\text{\tiny(1)}}(x) =
 \begin{cases}
 0 & \mbox{for} \ x \in \frac1{m} \big(\Z+\frac12\big) , \\
 \tan(m \pi x)
 & \mbox{for} \ x\in\R\setminus \frac1{m} \big(\Z+\frac12\big) . \end{cases}
\end{gather}

Recalling that the integer $m$ is odd here, it is easily seen that the noncommutativity para\-me\-ter~\eqref{eq:thetam1} is invariant under changing coset representative~$x\in\R/\Z$, which is consistent with its~$\sfSL(2,\Z)$ orbit under
the corresponding monodromy matrix~\eqref{eq:mon1}:
\[
\theta_m^{\text{\tiny(1)}}(x+1) = \theta_m^{\text{\tiny(1)}}(x) =
{\tt M}^{\text{\tiny(1)}}\big[\theta_m^{\text{\tiny(1)}}(x)\big],
\]
so that fibrewise $\A_{\theta^{\text{\tiny(1)}}_m(x+1)} =
\A_{\theta^{\text{\tiny(1)}}_m(x)}$. Thus the non-trivial
monodromy in the automorphism group of the $\torus^2$ fibres of the
twisted torus $\torus_{\ISO_m^{\text{\tiny(1)}}(2;\Z)}$ becomes a
trivial identity action on its T-dual $C^*$-algebra bundle.
This is analogous to what we found in the parabolic case. Moreover,
by Proposition~\ref{prop:starproduct3d} the $C^*$-algebra
bundle~\eqref{eq:algbundleell1} can be described as a deformation of the algebra of functions
$C\big(\torus\times\torus^2\big)$ on a trivial $\torus^2$-bundle over the
circle via a star-product
$f\star_{\theta_m^{\text{\tiny(1)}}}g$ given by~\eqref{eq:parabolicstar}. In marked contrast to the parabolic case, however, the algebra of
functions $C\big(\torus_{\ISO_m^{\text{\tiny(1)}}(2;\Z)}\big)$ is not itself
a \smash{$C\big(\torus^2\big)$}-algebra, and the
noncommutative torus bundle cannot be identified with the group
$C^*$-algebra of the integer Euclidean group
$\ISO_m^{\text{\tiny(1)}}(2;\Z)$.

\subsubsection*{Euclidean $\boldsymbol{\Z_4}$-solvmanifolds}

We next observe that for
\[
x_0 = \frac{m \pi}2
\]
with $m$ an odd integer, the matrix $\varphi_{x_0}$ is again integer-valued, and so
may be set equal to $\tt M$ with $\Sigma=\unit_2$:
\begin{gather}\label{eq:mon2}
{\tt M}^{\text{\tiny(2)}} = \varphi^{\phantom{\dag}}_{\frac{m \pi}2} = \pm\left( \begin{matrix}
0 & 1 \\ -1 & 0
\end{matrix} \right)  .
\end{gather}
Without loss of generality, we shall fix the positive sign by assuming that $m\in
4 \mathbb{Z}+1$ (the choice of negative sign for $m\in4 \Z+3$ simply
corresponds to a reflection $(z_1,z_2)\mapsto(-z_1,-z_2)$ below).

The Euclidean
$\Z_4$-solvmanifold $\torus_{\ISO ^{\text{\tiny(2)}}_m(2;\Z)}$ is the compact
space obtained as the quotient of $\ISO(2)$ with
\[
\ISO ^{\text{\tiny(2)}}_m(2;\Z) := \big\{ \big(\tfrac{m \pi}2 \alpha,\gamma_1,\gamma_2\big)\in\ISO(2) \,
\big| \, \alpha\in\Z , \, \gamma=(\gamma_1,\gamma_2)\in\Z^2 \big\} .
\]
The quotient is taken by the left action of
$\ISO ^{\text{\tiny(2)}}_m(2;\Z)$, which leads
to the local coordinate identifications under the action of the
generators of $\ISO ^{\text{\tiny(2)}}_m(2;\Z)$ given by
\begin{gather*}
(x,z_1,z_2)\longmapsto \big(x+\tfrac{m \pi}2,z_2,-z_1\big) , \nonumber \\
(x,z_1,z_2)\longmapsto (x,z_1+1,z_2) , \nonumber \\
(x,z_1,z_2)\longmapsto (x,z_1,z_2+1) .
\end{gather*}
The relation matrix $
{\tt A}^{\text{\tiny(2)}} = {\tt M}^{\text{\tiny(2)}}-\unit_2 $ has maximal rank $r=2$ with elementary divisors $m_1=1$ and
$m_2=2$, and the first homology group of
$\torus_{\ISO ^{\text{\tiny(2)}}_m(2;\Z)}$ can thus be presented as the $\Z$-module
\[
\rmH_1\big(\torus_{\ISO ^{\text{\tiny(2)}}_m(2;\Z)},\Z\big) \simeq \Z\oplus\Z_2 ,
\]
where the torsion
generator of order two is $e_{z_2}=e_{z_1}$. Thus again there is no
classical T-dual, as is well-known in
this case (see, e.g.,~\cite{Hull2019}).

Proceeding as in the previous
case, the subset $\F_*\subset\R$ is again given by \eqref{eq:F*ell} and we arrive at the Morita equivalence
\[
C\big(\torus_{\ISO^{\text{\tiny($2$)}}_m(2;\Z)}\big)\rtimes_\rt\real_{(0,1)}
\ \mbf\sim_{\text{\tiny M}} \ \mcoprod
\torus^2_{\theta^{\text{\tiny(2)}}_m(x)} ,
\]
where
\begin{gather}\label{eq:thetam2}
\theta_m^{\text{\tiny(2)}}(x) = \begin{cases}
 0 &\mbox{for} \ x \in \frac1{m} (2 \Z+1) , \\
 \tan\big(\frac{m \pi}2 x\big)
 & \mbox{for} \ x\in\R\setminus \frac1{m} (2 \Z+1) . \end{cases}
\end{gather}
The corresponding star-product
$f\star_{\theta_m^{\text{\tiny(2)}}}g$ on $C\big(\torus\times\torus^2\big)$
is given by~\eqref{eq:parabolicstar} and was originally written down
in~\cite{Hull2019}, where it was also pointed out that the
noncommutative torus fibration is not isomorphic to the group
$C^*$-algebra of $\ISO_m^{\text{\tiny(2)}}(2;\Z)$.

Compared to the previous cases, the new feature here is that a change of coset
representative $x\in\R/\Z$ generally has a
non-trivial action on the fibers of the $C^*$-bundle $\coprod_{x\in\R/\Z}
\A_{\theta_m^{\text{\tiny(2)}}(x)}$ according to
\eqref{eq:thetaMorita3d}. We check this explicitly here: Recalling that $m\in 4\,\Z+1$ in
 this case, the fibre over any $x\in\frac1m \Z$ is preserved
 as then the form of~\eqref{eq:thetam2} implies
\[
\theta_m^{\text{\tiny(2)}}(x+1) =
 0=\theta_m^{\text{\tiny(2)}}(x)
\qquad \mbox{for} \quad x\in\tfrac1m \Z .
\]
However,
 over any $x\in\R\setminus\frac1m \Z$ the noncommutativity parameter~\eqref{eq:thetam2} changes non-trivally according to the
 trigonometric identity $\tan\big(x+\frac{m \pi}2\big)=-\cot x$, but in a way which is
 consistent with its $\sfSL(2,\Z)$ orbit under the corresponding
 monodromy matrix~\eqref{eq:mon2}:
\[
\theta_m^{\text{\tiny(2)}}(x+1) = -\frac1{\theta_m^{\text{\tiny(2)}}(x)} =
{\tt M}^{\text{\tiny(2)}}\big[\theta_m^{\text{\tiny(2)}}(x)\big]
\qquad \mbox{for} \quad x\in\R\setminus\tfrac1m \Z .
\]
By Theorem~\ref{thm:NCtorusbundle3d}, the fibre noncommutative tori are Morita
equivalent, and so are isomorphic in the
category~${\CCK\hspace{-1mm}\CCK}$. A physical picture of the action
of the monodromy on the K-theory group of this $C^*$-bundle in terms
of fiber D-branes is given in~\cite{Hull2019}.

\subsubsection*{Euclidean $\boldsymbol{\Z_6}$-solvmanifolds}

Finally, we observe that for
\[
x_0 = \frac{m \pi}3
\]
with $m\notin 3 \Z$, the matrix $\varphi_{x_0}$ takes four possible forms
\[
\varphi^{\phantom{\dag}}_{\frac{m\,\pi}3} =
\frac\epsilon2\left(\begin{matrix} \epsilon' & \sqrt3 \\ -\sqrt3 &
 \epsilon' \end{matrix}\right)
\]
with independent signs $\epsilon,\epsilon'=\pm 1$. It
conjugates to an
integer matrix via the element $\Sigma\in\sfSL(2,\real)$ given by
\begin{gather}\label{eq:Sigmaell}
\Sigma = \sqrt{\tfrac2{\sqrt3}} \left(\begin{matrix} 1 & \frac12
 \\ 0 & \frac{\sqrt3}2 \end{matrix}\right) .
\end{gather}
The precise form of $\varphi^{\phantom{\dag}}_{x_0}$ and hence of the
corresponding monodromy matrix $\tt M$ depends on the congruence
class of the integer~$m$ in~$\Z_3$:
\[
\Sigma^{-1} \, \varphi^{\phantom{\dag}}_{\frac{m \pi}3}
\Sigma = \begin{cases}
\pm \left(\begin{smallmatrix} \hphantom{-}1 & 1 \\ -1 & 0\end{smallmatrix}\right) , & m\in 3 \Z+1 ,
\\
\pm \left(\begin{smallmatrix} \hphantom{-}0 & \hphantom{-}1 \\ -1 & -1\end{smallmatrix}\right), & m\in 3 \Z+2 .
\end{cases}
\]
Without loss of generality, we shall fix $m\in 6\,\Z+2$ and hence take
\begin{gather}\label{eq:mon3}
{\tt M}^{\text{\tiny(3)}} = \left( \begin{matrix}
0&1\\ -1&-1
\end{matrix}\right)
\end{gather}
in the following (with the other three possibilities obtained simply
by reflection $(z_1,z_2)\mapsto(-z_1,-z_2)$ and interchange
$(z_1,z_2)\mapsto(z_2,z_1)$ below).

The Euclidean
$\Z_6$-solvmanifold $\torus_{\ISO ^{\text{\tiny(3)}}_m(2;\Z)}$ is the compact
space obtained as the quotient of $\ISO(2)$ with
respect to the lattice given by the discrete subgroup
\begin{gather*} \ISO ^{\text{\tiny(3)}}_m(2;\Z) := \left\{
\left(\tfrac{m\pi}3\alpha,
\sqrt{\tfrac2{\sqrt3}} \gamma_1+\sqrt{\tfrac1{2 \sqrt3}} \gamma_2 ,
\sqrt{\tfrac{\sqrt3}2} \gamma_2\right)\!\in\ISO(2) \,
\Big| \, \alpha\in\Z,
 \gamma=(\gamma_1,\gamma_2)\in\Z^2 \right\} .
\end{gather*}
The quotient is taken by the left action of
$\ISO ^{\text{\tiny(3)}}_m(2;\Z)$, which leads
to the local coordinate identifications under the action of the
generators of $\ISO ^{\text{\tiny(3)}}_m(2;\Z)$ given by
\begin{gather*}
(x,z_1,z_2) \longmapsto
 \left(x+\tfrac{m \pi}3 , \tfrac12 \big(\sqrt3 z_2-z_1\big) ,
 -\tfrac12 \big(\sqrt3 z_1+z_2\big)\right) , \\
(x,z_1,z_2) \longmapsto \left(x ,z_1+\sqrt{\tfrac2{\sqrt3}} , z_2\right) , \\
(x,z_1,z_2) \longmapsto \left(x , z_1+\sqrt{\tfrac1{2 \sqrt3}},
 z_2+\sqrt{\tfrac{\sqrt3}2} \right) .
\end{gather*}
The relation matrix ${\tt A}^{\text{\tiny(3)}} = {\tt M}^{\text{\tiny(3)}}-\unit_2$ has maximal rank $r=2$ with elementary divisors $m_1=1$ and
$m_2=3$, and the first homology group can thus be presented as the $\Z$-module
\[
\rmH_1\big(\torus_{\ISO ^{\text{\tiny(3)}}_m(2;\Z)},\Z\big) \simeq \Z\oplus\Z_3 ,
\]
where the torsion generator of order three is $e_{z_2}=e_{z_1}$. Here too there is no
classical T-dual.

Applying Theorem~\ref{thm:NCtorusbundle3d} in this case, with the
non-trivial period matrix $\Sigma$ from~\eqref{eq:Sigmaell} and the
subset $\F_*\subset\R$ again given by~\eqref{eq:F*ell}, gives
 the Morita equivalence
\[
C\big(\torus_{\ISO^{\text{\tiny($3$)}}_m(2;\Z)}\big)\rtimes_\rt\real_{(0,1)}
\ \mbf\sim_{\text{\tiny M}} \ \mcoprod
\torus^2_{\theta^{\text{\tiny(3)}}_m(x)} ,
\]
where
\begin{gather}\label{eq:thetam3}
\theta_m^{\text{\tiny(3)}}(x) = \begin{cases}
 0 & \mbox{for} \ x \in
 \frac1{m} \big(3 \Z+\frac32\big) , \\
 -\frac12 + \frac{\sqrt3}2\tan\big(\frac{m \pi}3 x\big)
 & \mbox{for} \ \ x\in\R\setminus \frac1{m} \big(3 \Z+\frac32\big) . \end{cases}
\end{gather}
As in the previous case, changing coset representative
$x\in\R/\Z$ generally has a non-trivial action on the fibers of the
$C^*$-bundle over $\torus$: Recalling that $m\in 6\,\Z+2$ here, the
fiber over any $x\in\frac1m\,\big(3\,\Z+\frac32\big)$ is preserved by
the form of \eqref{eq:thetam3}:
\[
\theta_m^{\text{\tiny(3)}}(x+1) =
 0=\theta_m^{\text{\tiny(3)}}(x)
\qquad \mbox{for} \quad x\in\tfrac1m\big(3\Z +\tfrac32\big) .
\]
On the other hand, over any $x\in\R\setminus\frac1m\big(3\Z+\frac32\big)$
the noncommutativity parameter~\eqref{eq:thetam3} changes according to
the trigonometric identity
\[
\tan\big(x+\tfrac{m \pi}3\big) = \frac{\tan x-\sqrt3}{1+\sqrt3 \tan x} ,
\]
but consistently with its $\sfSL(2,\Z)$ orbit under the corresponding
monodromy matrix \eqref{eq:mon3}:
\[
\theta_m^{\text{\tiny(3)}}(x+1) =
-\frac1{\theta_m^{\text{\tiny(3)}}(x)+1} = {\tt
 M}^{\text{\tiny(3)}}\big[ \theta_m^{\text{\tiny(3)}}(x)\big] \qquad \mbox{for} \quad x\in\R\setminus\tfrac1m \big(3 \Z +\tfrac32\big).
\]
By Theorem~\ref{thm:NCtorusbundle3d} the corresponding fibre noncommutative tori are Morita
equivalent,
and so isomorphic in the category ${\CCK\hspace{-1mm}\CCK}$.

\subsection{Hyperbolic torus bundles}\label{sec:hyperbolicbundles}

Our last application of
Theorem~\ref{thm:NCtorusbundle3d} is to the final class of
three-dimensional solvmanifolds, which are
hyperbolic analogues of the Euclidean solvmanifolds based on the
Poincar\'e group, and have also not been
previously studied in the present context.

The Poincar\'e group $\ISO(1,1)$ in two dimensions is the
three-dimensional almost abelian solvable Lie group which can be presented as
\[
\ISO(1,1) = \real^2\rtimes_\varphi\real \qquad \mbox{with} \quad
\varphi_x = \bigg(\begin{matrix} \cosh x & \sinh x \\ \sinh x &
 \cosh x \end{matrix}\bigg) .
\]
Since $|\Tr\varphi_x|>2$ for all $x\in\real^\times$, the hyperbolic rotation
$\varphi_x$ parameterizes a hyperbolic conjugacy class of
$\sfSL(2,\real)$. Again we denote coordinates on the group
manifold of $\sfN=\real^2$
by $z=(z_1,z_2)$, so that the group multiplication on
$\ISO(1,1)$ is given by
\[
(x,z_1,z_2) (x',z_1',z_2') = \big(x+x' , z_1+z_1'\cosh
x+z_2'\sinh x ,
z_2+z_1'\sinh x+z_2'\cosh x\big)
\]
and the inverse of a group element is
\[
(x,z_1,z_2)^{-1} = \big({-}x , -z_1\cosh x+z_2\sinh x , z_1\sinh
x-z_2\cosh x\big) .
\]

Clearly there is no $x_0\in\real^\times$ for which $\varphi_{x_0}$ is
an integer matrix in this case. However, there is an infinite family of discrete points
\begin{gather}\label{eq:hypx0}
x_0 = \log\left(\frac m2 \pm \sqrt{\left(\frac m2\right)^2-1}\right)
\end{gather}
labelled by an integer $m>2$ with $2\cosh x_0=m$, at which
the matrix $\varphi_{x_0}$ takes the form
\[
\varphi_{x_0} = \frac12 \left(\begin{matrix} m & \pm \sqrt{m^2-4}
 \\ \pm \sqrt{m^2-4} &
 m \end{matrix}\right) .
\]
This matrix has eigenvalues $\big(\lambda,\lambda^{-1}\big)$ where
\[
\lambda^{\pm1} = \frac12 \left(m\pm\sqrt{m^2-4}\right) = \e^{x_0} ,
\]
and it conjugates to the integer matrix
\begin{gather}\label{eq:monhyp}
{\tt M}_m = \Sigma^{-1} \varphi_{x_0} \Sigma =
\left(\begin{matrix} m & 1 \\ -1 & 0 \end{matrix}\right)
\end{gather}
by the element $\Sigma\in\sfSL(2,\real)$ given by
\begin{gather}\label{eq:hypSigma}
\Sigma = \begin{pmatrix} \dfrac{\lambda^{\pm1}}{2 \big(\lambda-\lambda^{-1}\big)} - \lambda^{\mp1} & \dfrac\lambda{2 \big(\lambda^2-1\big)}-1 \vspace{1mm}\\ \dfrac{\lambda^{\pm1}}{2 \big(\lambda-\lambda^{-1}\big)} + \lambda^{\mp1} &
 \dfrac\lambda{2 \big(\lambda^2-1\big)} + 1 \end{pmatrix} .
\end{gather}
For definiteness, we choose the positive square root in~\eqref{eq:hypx0} (with the choice of negative square root obtained
from the interchange $(z_1,z_2)\mapsto(z_2,z_1)$ below).

The Poincar\'e
solvmanifold $\torus_{\ISO_m(1,1;\Z)}$ is the compact space
obtained as the quotient of $\ISO(1,1)$ by the lattice
\[
\ISO_m(1,1;\Z) := \big\{(x_0\,\alpha,\Sigma\cdot\gamma)\in\ISO(1,1) \, \big| \,
\alpha\in\Z , \gamma=(\gamma_1,\gamma_2)\in\Z^2\big\} .
\]
The quotient by the left action of this lattice leads to the local
coordinate identifications under the action of the generators given by
\begin{gather*}
(x,z_1,z_2) \longmapsto
\left(x+\cosh^{-1}\big(\tfrac m2\big),\tfrac12\big(mz_1+\sqrt{m^2-4}z_2\big),\tfrac12\big(\sqrt{m^2-4}z_1+mz_2\big) \right) , \\
(x,z_1,z_2) \longmapsto \left(x , z_1+\tfrac{\lambda^2}{2 (\lambda^2-1)}-\lambda^{-1} , z_2+\tfrac{\lambda^2}{2(\lambda^2-1)}+\lambda^{-1}\right) , \nonumber
 \\
(x,z_1,z_2) \longmapsto \left(x, z_1+\tfrac\lambda{2(\lambda^2-1)}-1, z_2+\tfrac\lambda{2(\lambda^2-1)}+1\right).
\end{gather*}
The relation matrix ${\tt A}_m = {\tt M}_m-\unit_2$ has maximal rank $r=2$ with elementary divisors $m_1=1$ and
$m_2=m-2$, so the first homology group of the Poincar\'e solvmanifold may be
presented as the $\Z$-module
\[
{\rm H}_1\big(\torus_{\ISO_m(1,1;\Z)},\Z\big)\simeq\Z\oplus\Z_{m-2} ,
\]
with torsion generator $e_{z_2}=-e_{z_1}$ of order $m-2$.

Consider the one-parameter subgroup
\[
\real_{(0,1)} = \big\{(0,0,\zeta)\in\ISO(1,1)\big\}
\]
acting on the Poincar\'e group $\ISO(1,1)$ by right multiplication
\[
(x,z_1,z_2) (0,0,\zeta) = (x,z_1+\zeta\sinh x,z_2+\zeta\cosh x) .
\]
In this case the set $\F_*$ consists of a distinguished point $x_*\in\R$ on the base of the
$C^*$-algebra bundle given by
\[
x_*=\tanh^{-1}\left(\frac{2 \big(\lambda^2-1\big)-\lambda^3}{2 \big(\lambda^2-1\big)+\lambda^3}\right).
\]
Applying
Theorem~\ref{thm:NCtorusbundle3d}, with the periods
\eqref{eq:hypx0} and \eqref{eq:hypSigma}, then identifies the
Morita equivalence
\[
C\big(\torus_{\ISO_m(1,1;\Z)}\big)\rtimes_\rt\real_{(0,1)} \ \mbf\sim_{\text{\tiny M}} \
\mcoprod \torus_{\theta_m(x)}^2
\]
of $C(\torus)$-algebras, where
\[
\theta_m(x_*)=0
\]
and
\begin{gather}\label{eq:thetamxhyp}
\theta_m(x) =
\frac{2 \lambda \big(\lambda^2-1\big)-\lambda^2+\lambda
 \big(2\big(\lambda^2-1\big)+\lambda\big)\tanh\big(\cosh^{-1}\big(\frac
 m2\big) x\big)}{\lambda^3-2 \big(\lambda^2-1\big) - \big(\lambda^3+2 \big(\lambda^2-1\big)\big)\tanh\big(\cosh^{-1}(\frac
 m2) x\big)}
\end{gather}
for $x\neq x_*$.

This noncommutative torus bundle, along with its topological T-duality with
$C\big(\torus_{\ISO_m(1,1;\Z)}\big)$, has formally the same properties
as the $C(\torus)$-algebras described in
Section~\ref{sec:ellipticbundles}, so
we refrain from repeating the details here. We only mention the
fibre monodromy behaviour anticipated from~\eqref{eq:thetaMorita3d}: Using the hyperbolic identity
\[
\tanh\big(x+\cosh^{-1}\big(\tfrac m2\big)\big) = \frac{m\tanh x+\sqrt{m^2-4}}{m+\sqrt{m^2-4} \tanh x} ,
\]
we find
\[
\theta_m(x_*+1)=0=\theta_m(x_*)
\]
and
we see explicitly here that the noncommutativity parameter \eqref{eq:thetamxhyp}
changes consistently with its $\sfSL(2,\Z)$ orbit under the monodromy
matrix \eqref{eq:monhyp}:
\[
\theta_m(x+1) = -\frac1{\theta_m(x)}-m = {\tt
 M}_m\big[\theta_m(x)\big] \qquad \mbox{for} \quad x\neq x_* .
\]
Thus the fiber noncommutative tori are Morita equivalent by
Theorem~\ref{thm:NCtorusbundle3d},
and so are isomorphic $C^*$-algebras in the category
$\CCK\hspace{-1mm}\CCK$.

\subsection*{Acknowledgments}

We thank Ryszard Nest and Erik Plauschinn for helpful
discussions. We thank the anonymous referees for their detailed suggestions. This research was supported by funds from Universit\`a del
Piemonte Orientale (UPO). P.A.\ acknowledges partial support from INFN,
CSN4, and Iniziativa Specifica GSS. P.A.\ is affiliated to INdAM-GNFM. R.J.S.\ acknowledges a Visiting
Professorship through UPO Internationalization Funds. R.J.S.\ also
acknowledges the Arnold--Regge Centre for the visit, and INFN. The work of R.J.S.\ was supported in
part by the Consolidated Grant ST/P000363/1 from the UK Science and
Technology Facilities Council.

\addcontentsline{toc}{section}{References}
\LastPageEnding

\end{document}